

\NeedsTeXFormat{LaTeX2e}


\documentclass{fac}

\usepackage{graphicx}
\usepackage{amssymb}
\usepackage{amsfonts}
\usepackage{amsmath}
\usepackage{dsfont}
\usepackage{MnSymbol}
\usepackage{mathtools}
\usepackage{epsfig}
\usepackage{epstopdf}

\newtheorem{theorem}{Theorem}[section]
\newtheorem{definition}[theorem]{Definition}
\newtheorem{proposition}[theorem]{Proposition}

\title[Draft of A Calculus of Truly Concurrent Mobile Processes]
      {A Calculus of Truly Concurrent Mobile Processes}

\author[Yong Wang]
    {Yong Wang\\
     College of Computer Science and Technology,\\
     Faculty of Information Technology,\\
     Beijing University of Technology, Beijing, China\\
     }

\correspond{Yong Wang, Pingleyuan 100, Chaoyang District, Beijing, China.
            e-mail: wangy@bjut.edu.cn}

\pubyear{2017}
\pagerange{\pageref{firstpage}--\pageref{lastpage}}

\begin{document}
\label{firstpage}

\makecorrespond

\maketitle

\begin{abstract}
We make a mixture of Milner's $\pi$-calculus and our previous work on truly concurrent process algebra, which is called $\pi_{tc}$. We introduce syntax and semantics of $\pi_{tc}$, its properties based on strongly truly concurrent bisimilarities. Also, we include an axiomatization of $\pi_{tc}$. $\pi_{tc}$ can be used as a formal tool in verifying mobile systems in a truly concurrent flavor.
\end{abstract}

\begin{keywords}
True Concurrency; Behaviorial Equivalence; Prime Event Structure; Calculus; Mobile Processes
\end{keywords}

\section{Introduction}\label{int}

The famous work on parallelism and concurrency \cite{CM} is bisimilarity and its process algebra. CCS (A Calculus of Communicating Systems) \cite{CCS} \cite{CC} is a calculus based on bisimulation semantics model. CCS has good semantic properties based on the interleaving bisimulation. ACP (Algebra of Communicating Systems) \cite{ACP} is an axiomatization for several computational properties based on bisimulation, includes sequential and alternative computation, parallelism and communication, encapsulation, recursion, and abstraction. $\pi$-calculus is an extension of CCS, and aims at mobile processes.

The other camp of concurrency is true concurrency. The researches on true concurrency are still active. Firstly, there are several truly concurrent bisimulations, the representatives are: pomset bisimulation, step bisimulation, history-preserving (hp-) bisimulation, and especially hereditary history-preserving (hhp-) bisimulation \cite{HHP1} \cite{HHP2}. These truly concurrent bisimulations are studied in different structures \cite{ES1} \cite{ES2} \cite{CM}: Petri nets, event structures, domains, and also a uniform form called TSI (Transition System with Independence) \cite{SFL}. There are also several logics based on different truly concurrent bisimulation equivalences, for example, SFL (Separation Fixpoint Logic) and TFL (Trace Fixpoint Logic) \cite{SFL} are extensions on true concurrency of mu-calculi \cite{MUC} on bisimulation equivalence, and also a logic with reverse modalities \cite {RL1} \cite{RL2} based on the so-called reverse bisimulations with a reverse flavor. Recently, a uniform logic for true concurrency \cite{LTC1} \cite{LTC2} was represented, which used a logical framework to unify several truly concurrent bisimulations, including pomset bisimulation, step bisimulation, hp-bisimulation and hhp-bisimulation.

There are simple comparisons between HM logic and bisimulation, as the uniform logic \cite{LTC1} \cite{LTC2} and truly concurrent bisimulations; the algebraic laws \cite{ALNC}, ACP \cite{ACP} and bisimulation, as the algebraic laws APTC \cite{ATC} and truly concurrent bisimulations; CCS and bisimulation, as the calculus CTC \cite{CTC} and truly concurrent bisimulations; $\pi$-calculus and bisimulation, as true concurrency and \emph{what}, which is still missing.

In this paper, we design a calculus of truly concurrent mobile processes ($\pi_{tc}$) following the way paved by $\pi$-calculus for bisimulation and our previous work on truly concurrent process algebra CTC \cite{CTC} and APTC \cite{ATC}. This paper is organized as follows. In section \ref{bac}, we introduce some preliminaries, including a brief introduction to $\pi$-calculus, and also preliminaries on true concurrency. We introduce the syntax and operational semantics of $\pi_{tc}$ in section \ref{sos}, its properties for strongly truly concurrent bisimulations in section \ref{stcb}, its axiomatization in section \ref{at}. Finally, in section \ref{con}, we conclude this paper.

\section{Backgrounds}\label{bac}

\subsection{$\pi$-calculus}

$\pi$-calculus \cite{PI1} \cite{PI2} is a calculus for mobile processes, which have changing structure. The component processes not only can be arbitrarily linked, but also can change the linkages by communications among them. $\pi$-calculus is an extension of the process algebra CCS \cite{CCS}:

\begin{itemize}
  \item It treats names, variables and substitutions more carefully, since names may be free or bound;
  \item Names are mobile by references, rather that by values;
  \item There are three kinds of prefixes, $\tau$ prefix $\tau.P$, output prefix $\overline{x}y.P$ and input prefix $x(y).P$, which are most distinctive to CCS;
  \item Since strong bisimilarity is not preserved by substitutions because of its interleaving nature, $\pi$-calculus develops several kinds of strong bisimulations, and discusses their laws modulo these bisimulations.
\end{itemize}

\subsection{True Concurrency}

The related concepts on true concurrency are defined based on the following concepts.

\begin{definition}[Prime event structure with silent event]\label{PES}
Let $\Lambda$ be a fixed set of labels, ranged over $a,b,c,\cdots$ and $\tau$. A ($\Lambda$-labelled) prime event structure with silent event $\tau$ is a tuple $\mathcal{E}=\langle \mathbb{E}, \leq, \sharp, \lambda\rangle$, where $\mathbb{E}$ is a denumerable set of events, including the silent event $\tau$. Let $\hat{\mathbb{E}}=\mathbb{E}\backslash\{\tau\}$, exactly excluding $\tau$, it is obvious that $\hat{\tau^*}=\epsilon$, where $\epsilon$ is the empty event. Let $\lambda:\mathbb{E}\rightarrow\Lambda$ be a labelling function and let $\lambda(\tau)=\tau$. And $\leq$, $\sharp$ are binary relations on $\mathbb{E}$, called causality and conflict respectively, such that:

\begin{enumerate}
  \item $\leq$ is a partial order and $\lceil e \rceil = \{e'\in \mathbb{E}|e'\leq e\}$ is finite for all $e\in \mathbb{E}$. It is easy to see that $e\leq\tau^*\leq e'=e\leq\tau\leq\cdots\leq\tau\leq e'$, then $e\leq e'$.
  \item $\sharp$ is irreflexive, symmetric and hereditary with respect to $\leq$, that is, for all $e,e',e''\in \mathbb{E}$, if $e\sharp e'\leq e''$, then $e\sharp e''$.
\end{enumerate}

Then, the concepts of consistency and concurrency can be drawn from the above definition:

\begin{enumerate}
  \item $e,e'\in \mathbb{E}$ are consistent, denoted as $e\frown e'$, if $\neg(e\sharp e')$. A subset $X\subseteq \mathbb{E}$ is called consistent, if $e\frown e'$ for all $e,e'\in X$.
  \item $e,e'\in \mathbb{E}$ are concurrent, denoted as $e\parallel e'$, if $\neg(e\leq e')$, $\neg(e'\leq e)$, and $\neg(e\sharp e')$.
\end{enumerate}
\end{definition}

The prime event structure without considering silent event $\tau$ is the original one in \cite{ES1} \cite{ES2} \cite{CM}.

\begin{definition}[Configuration]
Let $\mathcal{E}$ be a PES. A (finite) configuration in $\mathcal{E}$ is a (finite) consistent subset of events $C\subseteq \mathcal{E}$, closed with respect to causality (i.e. $\lceil C\rceil=C$). The set of finite configurations of $\mathcal{E}$ is denoted by $\mathcal{C}(\mathcal{E})$. We let $\hat{C}=C\backslash\{\tau\}$.
\end{definition}

Usually, truly concurrent behavioral equivalences are defined by events $e\in\mathcal{E}$ and prime event structure $\mathcal{E}$ (see related concepts in section \ref{STCC}), in contrast to interleaving behavioral equivalences by actions $a,b\in\mathcal{P}$ and process (graph) $\mathcal{P}$. Indeed, they have correspondences, in \cite{SFL}, models of concurrency, including Petri nets, transition systems and event structures, are unified in a uniform representation -- TSI (Transition System with Independence).

If $x$ is a process, let $C(x)$ denote the corresponding configuration (the already executed part of the process $x$, of course, it is free of conflicts), when $x\xrightarrow{e} x'$, the corresponding configuration $C(x)\xrightarrow{e}C(x')$ with $C(x')=C(x)\cup\{e\}$, where $e$ may be caused by some events in $C(x)$ and concurrent with the other events in $C(x)$, or entirely concurrent with all events in $C(x)$, or entirely caused by all events in $C(x)$. Though the concurrent behavioral equivalences (Definition \ref{PSB} and \ref{HHPB}) are defined based on configurations (pasts of processes), they can also be defined based on processes (futures of configurations), we omit the concrete definitions.

With a little abuse of concepts, in the following of the paper, we will not distinguish actions and events, prime event structures and processes, also concurrent behavior equivalences based on configurations and processes, and use them freely, unless they have specific meanings. 

\section{Syntax and Operational Semantics}\label{sos}

We assume an infinite set $\mathcal{N}$ of (action or event) names, and use $a,b,c,\cdots$ to range over $\mathcal{N}$, use $x,y,z,w,u,v$ as meta-variables over names. We denote by $\overline{\mathcal{N}}$ the set of co-names and let $\overline{a},\overline{b},\overline{c},\cdots$ range over $\overline{\mathcal{N}}$. Then we set $\mathcal{L}=\mathcal{N}\cup\overline{\mathcal{N}}$ as the set of labels, and use $l,\overline{l}$ to range over $\mathcal{L}$. We extend complementation to $\mathcal{L}$ such that $\overline{\overline{a}}=a$. Let $\tau$ denote the silent step (internal action or event) and define $Act=\mathcal{L}\cup\{\tau\}$ to be the set of actions, $\alpha,\beta$ range over $Act$. And $K,L$ are used to stand for subsets of $\mathcal{L}$ and $\overline{L}$ is used for the set of complements of labels in $L$.

Further, we introduce a set $\mathcal{X}$ of process variables, and a set $\mathcal{K}$ of process constants, and let $X,Y,\cdots$ range over $\mathcal{X}$, and $A,B,\cdots$ range over $\mathcal{K}$. For each process constant $A$, a nonnegative arity $ar(A)$ is assigned to it. Let $\widetilde{x}=x_1,\cdots,x_{ar(A)}$ be a tuple of distinct name variables, then $A(\widetilde{x})$ is called a process constant. $\widetilde{X}$ is a tuple of distinct process variables, and also $E,F,\cdots$ range over the recursive expressions. We write $\mathcal{P}$ for the set of processes. Sometimes, we use $I,J$ to stand for an indexing set, and we write $E_i:i\in I$ for a family of expressions indexed by $I$. $Id_D$ is the identity function or relation over set $D$. The symbol $\equiv_{\alpha}$ denotes equality under standard alpha-convertibility, note that the subscript $\alpha$ has no relation to the action $\alpha$.

\subsection{Syntax}

We use the Prefix $.$ to model the causality relation $\leq$ in true concurrency, the Summation $+$ to model the conflict relation $\sharp$ in true concurrency, and the Composition $\parallel$ to explicitly model concurrent relation in true concurrency. And we follow the conventions of process algebra.

\begin{definition}[Syntax]\label{syntax}
A truly concurrent process $P$ is defined inductively by the following formation rules:

\begin{enumerate}
  \item $A(\widetilde{x})\in\mathcal{P}$;
  \item $\textbf{nil}\in\mathcal{P}$;
  \item if $P\in\mathcal{P}$, then the Prefix $\tau.P\in\mathcal{P}$, for $\tau\in Act$ is the silent action;
  \item if $P\in\mathcal{P}$, then the Output $\overline{x}y.P\in\mathcal{P}$, for $x,y\in Act$;
  \item if $P\in\mathcal{P}$, then the Input $x(y).P\in\mathcal{P}$, for $x,y\in Act$;
  \item if $P\in\mathcal{P}$, then the Restriction $(x)P\in\mathcal{P}$, for $x\in Act$;
  \item if $P,Q\in\mathcal{P}$, then the Summation $P+Q\in\mathcal{P}$;
  \item if $P,Q\in\mathcal{P}$, then the Composition $P\parallel Q\in\mathcal{P}$;
\end{enumerate}

The standard BNF grammar of syntax of $\pi_{tc}$ can be summarized as follows:

$$P::=A(\widetilde{x})\quad|\quad\textbf{nil}\quad|\quad\tau.P\quad|\quad \overline{x}y.P\quad |\quad x(y).P \quad|\quad (x)P \quad | \quad P+P\quad |\quad P\parallel P.$$
\end{definition}

In $\overline{x}y$, $x(y)$ and $\overline{x}(y)$, $x$ is called the subject, $y$ is called the object and it may be free or bound.

\begin{definition}[Free variables]
The free names of a process $P$, $fn(P)$, are defined as follows.

\begin{enumerate}
  \item $fn(A(\widetilde{x}))\subseteq\{\widetilde{x}\}$;
  \item $fn(\textbf{nil})=\emptyset$;
  \item $fn(\tau.P)=fn(P)$;
  \item $fn(\overline{x}y.P)=fn(P)\cup\{x\}\cup\{y\}$;
  \item $fn(x(y).P)=fn(P)\cup\{x\}-\{y\}$;
  \item $fn((x)P)=fn(P)-\{x\}$;
  \item $fn(P+Q)=fn(P)\cup fn(Q)$;
  \item $fn(P\parallel Q)=fn(P)\cup fn(Q)$.
\end{enumerate}
\end{definition}

\begin{definition}[Bound variables]
Let $n(P)$ be the names of a process $P$, then the bound names $bn(P)=n(P)-fn(P)$.
\end{definition}

For each process constant schema $A(\widetilde{x})$, a defining equation of the form

$$A(\widetilde{x})\overset{\text{def}}{=}P$$

is assumed, where $P$ is a process with $fn(P)\subseteq \{\widetilde{x}\}$.

\begin{definition}[Substitutions]\label{subs}
A substitution is a function $\sigma:\mathcal{N}\rightarrow\mathcal{N}$. For $x_i\sigma=y_i$ with $1\leq i\leq n$, we write $\{y_1/x_1,\cdots,y_n/x_n\}$ or $\{\widetilde{y}/\widetilde{x}\}$ for $\sigma$. For a process $P\in\mathcal{P}$, $P\sigma$ is defined inductively as follows:
\begin{enumerate}
  \item if $P$ is a process constant $A(\widetilde{x})=A(x_1,\cdots,x_n)$, then $P\sigma=A(x_1\sigma,\cdots,x_n\sigma)$;
  \item if $P=\textbf{nil}$, then $P\sigma=\textbf{nil}$;
  \item if $P=\tau.P'$, then $P\sigma=\tau.P'\sigma$;
  \item if $P=\overline{x}y.P'$, then $P\sigma=\overline{x\sigma}y\sigma.P'\sigma$;
  \item if $P=x(y).P'$, then $P\sigma=x\sigma(y).P'\sigma$;
  \item if $P=(x)P'$, then $P\sigma=(x\sigma)P'\sigma$;
  \item if $P=P_1+P_2$, then $P\sigma=P_1\sigma+P_2\sigma$;
  \item if $P=P_1\parallel P_2$, then $P\sigma=P_1\sigma \parallel P_2\sigma$.
\end{enumerate}
\end{definition}

\subsection{Operational Semantics}

The operational semantics is defined by LTSs (labelled transition systems), and it is detailed by the following definition.

\begin{definition}[Semantics]\label{semantics}
The operational semantics of $\pi_{tc}$ corresponding to the syntax in Definition \ref{syntax} is defined by a series of transition rules, named $\textbf{ACT}$, $\textbf{SUM}$, $\textbf{IDE}$, $\textbf{PAR}$, $\textbf{COM}$ and $\textbf{CLOSE}$, $\textbf{RES}$ and $\textbf{OPEN}$ indicate that the rules are associated respectively with Prefix, Summation, Match, Identity, Parallel Composition, Communication, and Restriction in Definition \ref{syntax}. They are shown in Table \ref{TRForPITC}.

\begin{center}
    \begin{table}
        \[\textbf{TAU-ACT}\quad \frac{}{\tau.P\xrightarrow{\tau}P} \quad \textbf{OUTPUT-ACT}\quad \frac{}{\overline{x}y.P\xrightarrow{\overline{x}y}P}\]

        \[\textbf{INPUT-ACT}\quad \frac{}{x(z).P\xrightarrow{x(w)}P\{w/z\}}\quad (w\notin fn((z)P))\]

        \[\textbf{PAR}_1\quad \frac{P\xrightarrow{\alpha}P'\quad Q\nrightarrow}{P\parallel Q\xrightarrow{\alpha}P'\parallel Q}\quad (bn(\alpha)\cap fn(Q)=\emptyset) \quad \textbf{PAR}_2\quad \frac{Q\xrightarrow{\alpha}Q'\quad P\nrightarrow}{P\parallel Q\xrightarrow{\alpha}P\parallel Q'}\quad (bn(\alpha)\cap fn(P)=\emptyset)\]

        \[\textbf{PAR}_3\quad \frac{P\xrightarrow{\alpha}P'\quad Q\xrightarrow{\beta}Q'}{P\parallel Q\xrightarrow{\{\alpha,\beta\}}P'\parallel Q'}\quad (\beta\neq\overline{\alpha}, bn(\alpha)\cap bn(\beta)=\emptyset, bn(\alpha)\cap fn(Q)=\emptyset,bn(\beta)\cap fn(P)=\emptyset)\]

        \[\textbf{PAR}_4\quad \frac{P\xrightarrow{x_1(z)}P'\quad Q\xrightarrow{x_2(z)}Q'}{P\parallel Q\xrightarrow{\{x_1(w),x_2(w)\}}P'\{w/z\}\parallel Q'\{w/z\}}\quad (w\notin fn((z)P)\cup fn((z)Q))\]

        \[\textbf{COM}\quad \frac{P\xrightarrow{\overline{x}y}P'\quad Q\xrightarrow{x(z)}Q'}{P\parallel Q\xrightarrow{\tau}P'\parallel Q'\{y/z\}}\]

        \[\textbf{CLOSE}\quad \frac{P\xrightarrow{\overline{x}(w)}P'\quad Q\xrightarrow{x(w)}Q'}{P\parallel Q\xrightarrow{\tau}(w)(P'\parallel Q')}\]

        \[\textbf{SUM}_1\quad \frac{P\xrightarrow{\alpha}P'}{P+Q\xrightarrow{\alpha}P'} \quad \textbf{SUM}_2\quad \frac{P\xrightarrow{\{\alpha_1,\cdots,\alpha_n\}}P'}{P+Q\xrightarrow{\{\alpha_1,\cdots,\alpha_n\}}P'}\]


        \[\textbf{IDE}_1\quad\frac{P\{\widetilde{y}/\widetilde{x}\}\xrightarrow{\alpha}P'}{A(\widetilde{y})\xrightarrow{\alpha}P'}\quad (A(\widetilde{x})\overset{\text{def}}{=}P) \quad \textbf{IDE}_2\quad\frac{P\{\widetilde{y}/\widetilde{x}\}\xrightarrow{\{\alpha_1,\cdots,\alpha_n\}}P'} {A(\widetilde{y})\xrightarrow{\{\alpha_1,\cdots,\alpha_n\}}P'}\quad (A(\widetilde{x})\overset{\text{def}}{=}P)\]

        \[\textbf{RES}_1\quad \frac{P\xrightarrow{\alpha}P'}{(y)P\xrightarrow{\alpha}(y)P'}\quad (y\notin n(\alpha)) \quad \textbf{RES}_2\quad \frac{P\xrightarrow{\{\alpha_1,\cdots,\alpha_n\}}P'}{(y)P\xrightarrow{\{\alpha_1,\cdots,\alpha_n\}}(y)P'}\quad (y\notin n(\alpha_1)\cup\cdots\cup n(\alpha_n))\]

        \[\textbf{OPEN}_1\quad \frac{P\xrightarrow{\overline{x}y}P'}{(y)P\xrightarrow{\overline{x}(w)}P'\{w/y\}} \quad (y\neq x, w\notin fn((y)P'))\]

        \[\textbf{OPEN}_2\quad \frac{P\xrightarrow{\{\overline{x}_1 y,\cdots,\overline{x}_n y\}}P'}{(y)P\xrightarrow{\{\overline{x}_1(w),\cdots,\overline{x}_n(w)\}}P'\{w/y\}} \quad (y\neq x_1\neq\cdots\neq x_n, w\notin fn((y)P'))\]

        \caption{Transition rules of $\pi_{tc}$}
        \label{TRForPITC}
    \end{table}
\end{center}
\end{definition}

\subsection{Properties of Transitions}

\begin{proposition}
\begin{enumerate}
  \item If $P\xrightarrow{\alpha}P'$ then
  \begin{enumerate}
    \item $fn(\alpha)\subseteq fn(P)$;
    \item $fn(P')\subseteq fn(P)\cup bn(\alpha)$;
  \end{enumerate}
  \item If $P\xrightarrow{\{\alpha_1,\cdots,\alpha_n\}}P'$ then
  \begin{enumerate}
    \item $fn(\alpha_1)\cup\cdots\cup fn(\alpha_n)\subseteq fn(P)$;
    \item $fn(P')\subseteq fn(P)\cup bn(\alpha_1)\cup\cdots\cup bn(\alpha_n)$.
  \end{enumerate}
\end{enumerate}
\end{proposition}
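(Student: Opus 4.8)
The plan is to prove parts 1 and 2 simultaneously by induction on the depth of the inference deriving the transition, i.e. on the height of the derivation tree assembled from the rules of Table \ref{TRForPITC}. The two parts cannot be separated, because $\textbf{PAR}_3$ and $\textbf{PAR}_4$ manufacture a two-element transition, governed by part 2, out of two single-action premises, each governed by part 1; the induction therefore has to carry both hypotheses at once and let $\textbf{PAR}_3$, $\textbf{PAR}_4$ be the bridge between them.

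Before the induction I would record one substitution lemma: for any name substitution, $fn(P\{w/z\}) \subseteq (fn(P) - \{z\}) \cup \{w\}$, and more generally $fn(P\{\widetilde{y}/\widetilde{x}\}) \subseteq (fn(P) - \{\widetilde{x}\}) \cup \{\widetilde{y}\}$. This is a routine structural induction on $P$ using Definition \ref{subs} and the clauses defining $fn$, and it is exactly what controls the free names of the targets in every rule that introduces a substitution, namely $\textbf{INPUT-ACT}$, $\textbf{PAR}_4$, $\textbf{COM}$, $\textbf{OPEN}_1$, $\textbf{OPEN}_2$, and the $\textbf{IDE}$ rules; the freshness side conditions (such as $w \notin fn((z)P)$) guarantee that these substitutions capture no names, so the lemma applies.

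The base cases are the axioms $\textbf{TAU-ACT}$, $\textbf{OUTPUT-ACT}$ and $\textbf{INPUT-ACT}$, each checked directly: for $\textbf{OUTPUT-ACT}$ we have $fn(\overline{x}y) = \{x,y\} \subseteq fn(\overline{x}y.P)$ with $P' = P$ and $bn(\overline{x}y) = \emptyset$, and for $\textbf{INPUT-ACT}$ the object $w$ is bound, so the slack $\{w\} = bn(x(w))$ in part 1(b) absorbs exactly the new free name produced by the substitution lemma. For the inductive cases I would proceed rule by rule. The structural rules $\textbf{SUM}_1$, $\textbf{SUM}_2$, $\textbf{PAR}_1$, $\textbf{PAR}_2$, $\textbf{PAR}_3$, $\textbf{PAR}_4$ follow from the inductive hypothesis together with the monotonic shape $fn(P \parallel Q) = fn(P) \cup fn(Q)$ and $fn(P+Q) = fn(P) \cup fn(Q)$; the communication rules $\textbf{COM}$ and $\textbf{CLOSE}$ use that their label is $\tau$, so $fn(\tau) = \emptyset$ settles part (a), while for part (b) the transmitted name is already free in the output component (for $\textbf{COM}$) or is removed by the fresh restriction $(w)$ (for $\textbf{CLOSE}$); and the scope rules $\textbf{RES}_1$, $\textbf{RES}_2$, $\textbf{OPEN}_1$, $\textbf{OPEN}_2$ use their side conditions ($y \notin n(\alpha)$, resp. $y \neq x$ and $w \notin fn((y)P')$) to remove the restricted name from the relevant free-name sets cleanly.

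The step I expect to be the main obstacle is $\textbf{IDE}_1$ and $\textbf{IDE}_2$, where the transition of $A(\widetilde{y})$ comes from that of $P\{\widetilde{y}/\widetilde{x}\}$ with $A(\widetilde{x}) \overset{\text{def}}{=} P$, and I must relate $fn(A(\widetilde{y}))$ to $fn(P\{\widetilde{y}/\widetilde{x}\})$. Using the standing assumption $fn(P) \subseteq \{\widetilde{x}\}$ on defining equations, the substitution lemma gives $fn(P\{\widetilde{y}/\widetilde{x}\}) \subseteq \{\widetilde{y}\}$, and the inductive hypothesis then bounds $fn(\alpha)$ and $fn(P')$ in terms of $fn(P\{\widetilde{y}/\widetilde{x}\})$. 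The delicate point is that the paper defines the free names of a constant by the inclusion $fn(A(\widetilde{x})) \subseteq \{\widetilde{x}\}$ rather than by an equality, so to conclude $fn(\alpha) \subseteq fn(A(\widetilde{y}))$ I must read $fn(A(\widetilde{y}))$ as the free names obtained by unfolding, i.e. as $fn(P\{\widetilde{y}/\widetilde{x}\})$; I would make this reading explicit and use only the inclusion it actually supplies, rather than silently assuming $fn(A(\widetilde{y})) = \{\widetilde{y}\}$.
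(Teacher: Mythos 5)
Your proposal is correct and takes essentially the same approach as the paper: the paper's entire proof is the one line ``By induction on the depth of inference,'' and your simultaneous induction on the derivation over the rules of Table \ref{TRForPITC}, supported by the substitution lemma $fn(P\{w/z\})\subseteq (fn(P)-\{z\})\cup\{w\}$ and the careful reading of $fn(A(\widetilde{y}))$ via unfolding at the $\textbf{IDE}$ rules, is a faithful and correctly worked-out expansion of exactly that argument. The paper supplies no further detail to compare against, and none of your rule-by-rule cases conflicts with it.
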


\begin{proof}
By induction on the depth of inference.
\end{proof}

\begin{proposition}
Suppose that $P\xrightarrow{\alpha(y)}P'$, where $\alpha=x$ or $\alpha=\overline{x}$, and $x\notin n(P)$, then there exists some $P''\equiv_{\alpha}P'\{z/y\}$, $P\xrightarrow{\alpha(z)}P''$.
\end{proposition}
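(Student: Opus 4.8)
The plan is to argue by induction on the depth of the inference of the transition $P \xrightarrow{\alpha(y)} P'$, with a case analysis on the last rule applied and with $z$ a name fresh for $P$ (so $z \notin n(P)$). Since $\alpha(y)$ is a single bound action --- a bound input $x(y)$ when $\alpha = x$, or a bound output $\overline{x}(y)$ when $\alpha = \overline{x}$ --- only the rules that can emit such an action are relevant: the axiom \textbf{INPUT-ACT}, the opening rule \textbf{OPEN}$_1$, and the propagating rules \textbf{PAR}$_1$, \textbf{PAR}$_2$, \textbf{SUM}$_1$, \textbf{IDE}$_1$ and \textbf{RES}$_1$. The rules emitting a concurrent set of labels or a $\tau$ (\textbf{PAR}$_3$, \textbf{PAR}$_4$, \textbf{COM}, \textbf{CLOSE} and the set-indexed variants) cannot produce a single bound action and so are excluded.

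First I would dispose of the two cases in which the bound name is created by the rule itself. For \textbf{INPUT-ACT}, $P = x(u).Q$ and $P' = Q\{y/u\}$; the side condition of the rule permits any name avoiding $fn((u)Q)$, and since $z \notin n(P) \supseteq fn((u)Q)$ we may re-derive $P \xrightarrow{x(z)} Q\{z/u\}$. As $y$ and $z$ are both fresh for $(u)Q$, one checks $Q\{z/u\} = (Q\{y/u\})\{z/y\} = P'\{z/y\}$, so $P'' = Q\{z/u\}$ works (with $\equiv_{\alpha}$ an actual equality). The case \textbf{OPEN}$_1$ is similar: there $P = (u)P_0$ with premise $P_0 \xrightarrow{\overline{x}u} P_0'$ (a free output, so no appeal to the induction hypothesis is needed) and $P' = P_0'\{y/u\}$; since the proviso only asks the emitted name to avoid $fn((u)P_0')$, freshness of $z$ lets us re-apply the rule to obtain $P \xrightarrow{\overline{x}(z)} P_0'\{z/u\}$, and $P_0'\{z/u\} = P'\{z/y\}$.

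For the propagating rules I would apply the induction hypothesis to the (unique, bound-action-carrying) premise and then re-assemble. For instance, in \textbf{PAR}$_1$ with $P = P_1 \parallel Q$ and premise $P_1 \xrightarrow{\alpha(y)} P_1'$, the hypothesis gives $P_1 \xrightarrow{\alpha(z)} P_1''$ with $P_1'' \equiv_{\alpha} P_1'\{z/y\}$; the rule's side condition $y \notin fn(Q)$ together with $z \notin fn(Q)$ lets me re-apply \textbf{PAR}$_1$ to get $P_1 \parallel Q \xrightarrow{\alpha(z)} P_1'' \parallel Q$, and $P_1'' \parallel Q \equiv_{\alpha} (P_1' \parallel Q)\{z/y\}$ because the substitution does not touch $Q$. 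The cases \textbf{PAR}$_2$, \textbf{SUM}$_1$ and \textbf{IDE}$_1$ are handled the same way. In \textbf{RES}$_1$, with $P = (u)P_0$ and $u \notin n(\alpha(y))$, the induction hypothesis on $P_0$ yields $P_0 \xrightarrow{\alpha(z)} P_0''$; freshness of $z$ gives $u \neq z$, so \textbf{RES}$_1$ re-applies, and since $u \neq y$ and $u \neq z$ the restriction commutes with the renaming, giving $(u)P_0'' \equiv_{\alpha} ((u)P_0')\{z/y\} = P'\{z/y\}$.

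The main obstacle I anticipate is the bookkeeping of freshness and the interaction between substitution and the binders introduced by \textbf{RES}$_1$ and \textbf{OPEN}$_1$: one must ensure the chosen $z$ is fresh not only for $P$ but for the intermediate terms produced in the subderivations, and that each renaming $\{z/y\}$ neither captures a restricted name nor clashes with the name being opened. These points are resolved by choosing $z$ to avoid a finite set of names (enlarging $n(P)$ if necessary and using $\equiv_{\alpha}$ to rename bound occurrences), exactly as in the corresponding lemma for the ordinary $\pi$-calculus; the truly concurrent rules add no new difficulty here, since they either carry a single bound name through unchanged or introduce it under the same freshness proviso.
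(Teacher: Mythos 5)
Your proposal is correct and takes essentially the same route as the paper: the paper's entire proof is the single line ``by induction on the depth of inference,'' which is exactly the induction you perform, with your case analysis over \textbf{INPUT-ACT}, \textbf{OPEN}$_1$ and the propagating rules \textbf{PAR}$_1$, \textbf{PAR}$_2$, \textbf{SUM}$_1$, \textbf{IDE}$_1$, \textbf{RES}$_1$ simply filling in the details the paper leaves implicit. Note also that your reading of the freshness hypothesis as $z\notin n(P)$ is the sensible one---the statement as printed says $x\notin n(P)$, which appears to be a typo for $z$, since the subject $x$ of the action must occur in $P$ by the earlier proposition that $fn(\alpha)\subseteq fn(P)$.
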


\begin{proof}
By induction on the depth of inference.
\end{proof}

\begin{proposition}
If $P\rightarrow P'$, $bn(\alpha)\cap fn(P'\sigma)=\emptyset$, and $\sigma\lceil bn(\alpha)=id$, then there exists some $P''\equiv_{\alpha}P'\sigma$, $P\sigma\xrightarrow{\alpha\sigma}P''$.
\end{proposition}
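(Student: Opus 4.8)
The plan is to proceed by induction on the depth of inference of the transition, exactly as in the two preceding propositions; I read the hypothesis $P\rightarrow P'$ as the labelled transition $P\xrightarrow{\alpha}P'$, since this is the only reading under which $P\sigma\xrightarrow{\alpha\sigma}P''$ is meaningful. The substance of the statement is that transitions are stable under substitution, and the only thing that can go wrong is name capture: a careless $\sigma$ could rename the object carried by $\alpha$, capture the bound name $bn(\alpha)$, or clash with a restricted name. The two hypotheses $\sigma\lceil bn(\alpha)=id$ and $bn(\alpha)\cap fn(P'\sigma)=\emptyset$ are precisely what forbid capture of the action's bound name, and the fact that the conclusion only demands $P''\equiv_{\alpha}P'\sigma$ (rather than literal equality) is what lets me rename every other bound name freely. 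My standing tools will be Definition \ref{subs}, to push $\sigma$ through the term structure, and the preceding proposition on renaming the bound name of a bound action, to choose that name fresh whenever needed.

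For the base cases I check the three axioms. In \textbf{TAU-ACT} and \textbf{OUTPUT-ACT} the action has no bound name, $\alpha\sigma$ is $\tau$ respectively $\overline{x\sigma}y\sigma$, and since $(\tau.P)\sigma=\tau.P\sigma$ and $(\overline{x}y.P)\sigma=\overline{x\sigma}y\sigma.P\sigma$ the witness $P''=P\sigma$ works on the nose. In \textbf{INPUT-ACT}, where $\alpha=x(w)$ and $P'=P\{w/z\}$, the hypothesis $\sigma\lceil bn(\alpha)=id$ gives $w\sigma=w$, hence $\alpha\sigma=x\sigma(w)$; after alpha-converting the bound $z$ to be fresh for $\sigma$ I obtain $(x(z).P)\sigma\xrightarrow{x\sigma(w)}(P\sigma)\{w/z\}$, and a routine commutation of substitutions (using $w\sigma=w$ and $z$ fresh for $\sigma$) gives $(P\sigma)\{w/z\}\equiv_{\alpha}(P\{w/z\})\sigma=P'\sigma$. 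The purely structural induction steps $\textbf{SUM}_1$, $\textbf{IDE}_1$ and \textbf{COM} are then immediate: I apply the induction hypothesis to the premise(s), reassemble with the same rule, and use that $\sigma$ distributes over $+$, $\parallel$ and over the defining equation $A(\widetilde{x})\overset{\text{def}}{=}P$; in \textbf{COM} one also commutes $\sigma$ with the object substitution $\{y/z\}$, again a routine identity once $z$ is chosen fresh for $\sigma$.

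The real work is in the cases that carry or consume a bound name: $\textbf{PAR}_1$/$\textbf{PAR}_2$, \textbf{CLOSE}, $\textbf{RES}_1$ and $\textbf{OPEN}_1$. In $\textbf{PAR}_1$ I must re-establish the side condition $bn(\alpha)\cap fn(Q\sigma)=\emptyset$ (note $bn(\alpha\sigma)=bn(\alpha)$ because $\sigma$ fixes $bn(\alpha)$); I first use the alpha-conversion proposition to move the bound name of $\alpha$ off $fn(Q\sigma)$, which is harmless since the conclusion is only up to $\equiv_{\alpha}$. In $\textbf{RES}_1$, with $(y)P\xrightarrow{\alpha}(y)P'$ and $y\notin n(\alpha)$, the restricted $y$ may lie in the domain or range of $\sigma$, so I alpha-convert $(y)P$ to $(y')P\{y'/y\}$ for a fresh $y'\notin n(\sigma)\cup n(\alpha)$; then $((y')\,\cdot\,)\sigma=(y')(\,\cdot\,\sigma)$ and $y'\notin n(\alpha\sigma)$, and the induction hypothesis applies under the restriction. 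The subtlest case is $\textbf{OPEN}_1$, $(y)P\xrightarrow{\overline{x}(w)}P'\{w/y\}$: here $\sigma$ could in principle identify $y$ with $x$ and so violate the rule's side condition $y\neq x$, and it could capture the extruded $w$. I forestall both by first renaming the restricted $y$ to a fresh $y'$ (so $y'\sigma=y'\neq x\sigma$), then applying the induction hypothesis to the renamed premise $P\{y'/y\}\xrightarrow{\overline{x}y'}P'\{y'/y\}$, then re-applying $\textbf{OPEN}_1$ and using the alpha-conversion proposition to make the extruded name equal to $w$ (legitimate because $w\sigma=w$ and $w\notin fn(P'\sigma)$). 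I expect this $\textbf{OPEN}_1$/extrusion case to be the main obstacle, precisely because it is the one point where the substitution, the freshness of the extruded name, and the alpha-renaming of the restriction all have to be reconciled at once; the analogous \textbf{CLOSE} case, where a bound name is consumed to form a $\tau$, needs the same fresh-renaming of the communicated name but is otherwise a combination of the $\textbf{PAR}$ and $\textbf{OPEN}_1$ bookkeeping. Once the fresh choices are fixed, every remaining verification collapses to a commutation-of-substitutions identity of the kind already handled in \textbf{INPUT-ACT}.

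Finally, the same statement holds verbatim for the set-labelled step transitions $\xrightarrow{\{\alpha_1,\cdots,\alpha_n\}}$, obtained by a parallel induction covering the subscript-two rules $\textbf{PAR}_3$, $\textbf{PAR}_4$, $\textbf{SUM}_2$, $\textbf{IDE}_2$, $\textbf{RES}_2$ and $\textbf{OPEN}_2$; each reduces to the single-action argument applied componentwise, with the shared bound name in $\textbf{PAR}_4$ and $\textbf{OPEN}_2$ renamed fresh exactly as in \textbf{INPUT-ACT} and $\textbf{OPEN}_1$.
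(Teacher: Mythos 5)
Your proposal is correct and follows exactly the route the paper takes: the paper's proof is the one-line remark ``by the definition of substitution (Definition \ref{subs}) and induction on the depth of inference,'' and your argument is precisely that induction, fleshed out case by case (including the sensible reading of $P\rightarrow P'$ as $P\xrightarrow{\alpha}P'$ and the careful $\textbf{OPEN}_1$/\textbf{CLOSE} bookkeeping the paper leaves implicit). No discrepancy to report.
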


\begin{proof}
By the definition of substitution (Definition \ref{subs}) and induction on the depth of inference.
\end{proof}

\begin{proposition}
\begin{enumerate}
  \item If $P\{w/z\}\xrightarrow{\alpha}P'$, where $w\notin fn(P)$ and $bn(\alpha)\cap fn(P,w)=\emptyset$, then there exist some $Q$ and $\beta$ with $Q\{w/z\}\equiv_{\alpha}P'$ and $\beta\sigma=\alpha$, $P\xrightarrow{\beta}Q$;
  \item If $P\{w/z\}\xrightarrow{\{\alpha_1,\cdots,\alpha_n\}}P'$, where $w\notin fn(P)$ and $bn(\alpha_1)\cap\cdots\cap bn(\alpha_n)\cap fn(P,w)=\emptyset$, then there exist some $Q$ and $\beta_1,\cdots,\beta_n$ with $Q\{w/z\}\equiv_{\alpha}P'$ and $\beta_1\sigma=\alpha_1,\cdots,\beta_n\sigma=\alpha_n$, $P\xrightarrow{\{\beta_1,\cdots,\beta_n\}}Q$.
\end{enumerate}

\end{proposition}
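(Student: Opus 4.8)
The plan is to mirror the method used for the preceding propositions and argue by induction on the depth of the inference of the given transition, treating the two parts together (part~2 being the multiset/step analogue of part~1, built from the rules $\textbf{PAR}_3$, $\textbf{PAR}_4$, $\textbf{SUM}_2$, $\textbf{IDE}_2$, $\textbf{RES}_2$, $\textbf{OPEN}_2$). The observation that drives every case is that, by Definition~\ref{subs}, the top-level shape of $P\{w/z\}$ is completely determined by that of $P$: a summation comes from a summation, a composition from a composition, a prefix from a prefix, and so on. Hence I would do a case analysis on the last rule of Table~\ref{TRForPITC} used to derive $P\{w/z\}\xrightarrow{\alpha}P'$, read off the matching top-level constructor of $P$, peel the substitution $\sigma=\{w/z\}$ off, apply the induction hypothesis to the (shorter) derivations of the premises, and re-apply the same rule over $P$ to produce the required $\beta$ (resp. $\beta_1,\ldots,\beta_n$) and residual $Q$.

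For the axioms the claim is immediate. If the last rule is $\textbf{TAU-ACT}$ or $\textbf{OUTPUT-ACT}$, then $P$ is $\tau.P_0$ or $\overline{a}b.P_0$ with $a\sigma=x$ and $b\sigma=y$; taking $\beta=\tau$ (resp. $\beta=\overline{a}b$) and $Q=P_0$ gives $\beta\sigma=\alpha$ and $Q\{w/z\}=P'$ on the nose. If the last rule is $\textbf{INPUT-ACT}$, then $P=a(z').P_0$ with $a\sigma=x$, and the transition of $P\{w/z\}$ instantiates the bound object by some fresh name $w'$. Here the standing hypotheses $w\notin fn(P)$ and $bn(\alpha)\cap fn(P,w)=\emptyset$ guarantee that $w'$ can be chosen to clash neither with $w$ nor with the free names of $P$, so that pushing $\sigma$ past the input binder does not capture. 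This is precisely where the conclusion must be stated only up to $\equiv_{\alpha}$: choosing $Q=a(z').P_0$, one obtains $Q\{w/z\}\equiv_{\alpha}P'$ after a suitable renaming of the bound name.

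The inductive cases split into the \emph{structural} ones and the \emph{communicating} ones. For $\textbf{SUM}_1$, $\textbf{SUM}_2$, $\textbf{IDE}_1$ and $\textbf{IDE}_2$ the reassembly is routine: apply the induction hypothesis to the single premise and re-apply the same rule, using $fn(P)\subseteq\{\widetilde{x}\}$ from the defining equation in the identifier case. For $\textbf{PAR}_1$, $\textbf{PAR}_2$, $\textbf{RES}_1$, $\textbf{RES}_2$, $\textbf{OPEN}_1$ and $\textbf{OPEN}_2$ I would again apply the induction hypothesis to the premise and then verify that the side conditions (the disjointness $bn(\alpha)\cap fn(\cdot)=\emptyset$, the condition $y\notin n(\alpha)$ for restriction, and the freshness of the extruded name for opening) transfer back from $P\{w/z\}$ to $P$; this transfer is exactly what $w\notin fn(P)$ and the bound-name disjointness hypothesis secure, since $\sigma$ neither introduces new $z$-occurrences nor disturbs names outside its support.

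The main obstacle lies in the communicating group $\textbf{COM}$, $\textbf{CLOSE}$, $\textbf{PAR}_3$ and $\textbf{PAR}_4$, where $P\{w/z\}$ fires two premises at once and reflecting each through the induction hypothesis yields actions over $P$ whose subjects $a,a'$ satisfy only $a\sigma=a'\sigma=x$. To re-apply the rule over $P$ I need these subjects actually to coincide (for $\textbf{COM}$ and $\textbf{CLOSE}$) or to be compatible (for $\textbf{PAR}_3$ and $\textbf{PAR}_4$), so the crux is to rule out that $\sigma=\{w/z\}$ has merged two distinct names of $P$ into one. Here $w\notin fn(P)$ does the decisive work: the subject of any prefix that $P$ can fire is a free name of $P$, and since $\sigma$ only renames $z$ to $w$ while fixing every other name, two distinct free names $a\neq a'$ of $P$ could be identified only if $\{a,a'\}=\{z,w\}$, which is impossible because $w\notin fn(P)$. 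Thus $a\sigma=a'\sigma$ with $a,a'\in fn(P)$ already forces $a=a'$, no spurious synchronization redex is created at the reflected level, and the two residuals recombine under the same rule; the object substitution $\{y/z'\}$ appearing in $\textbf{COM}$ is then absorbed, after one further alpha-conversion, into the $\equiv_{\alpha}$ of the conclusion. With this non-merging fact established, both parts close uniformly, part~2 being the evident multiset generalization of part~1.
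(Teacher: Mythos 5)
Your proposal is correct and follows essentially the same route as the paper, whose entire proof is the one-line remark ``By the definition of substitution (Definition \ref{subs}) and induction on the depth of inference.'' You have simply carried out that induction in full --- the case analysis on the last rule applied, the use of $w\notin fn(P)$ to rule out name-merging in the $\textbf{COM}$/$\textbf{CLOSE}$/$\textbf{PAR}_3$/$\textbf{PAR}_4$ cases, and the $\equiv_{\alpha}$ adjustment for bound objects are exactly the details the paper leaves implicit.
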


\begin{proof}
By the definition of substitution (Definition \ref{subs}) and induction on the depth of inference.
\end{proof} 

\section{Strongly Truly Concurrent Bisimilarities}\label{stcb}

\subsection{Basic Definitions}\label{STCC}

Firstly, in this subsection, we introduce concepts of (strongly) truly concurrent bisimilarities, including pomset bisimilarity, step bisimilarity, history-preserving (hp-)bisimilarity and hereditary history-preserving (hhp-)bisimilarity. In contrast to traditional truly concurrent bisimilarities in CTC \cite{CTC} and APTC \cite{ATC}, these versions in $\pi_{tc}$ must take care of actions with bound objects. Note that, these truly concurrent bisimilarities are defined as late bisimilarities, but not early bisimilarities, as defined in $\pi$-calculus \cite{PI1} \cite{PI2}. Note that, here, a PES $\mathcal{E}$ is deemed as a process.

\begin{definition}[Pomset transitions and step]
Let $\mathcal{E}$ be a PES and let $C\in\mathcal{C}(\mathcal{E})$, and $\emptyset\neq X\subseteq \mathbb{E}$, if $C\cap X=\emptyset$ and $C'=C\cup X\in\mathcal{C}(\mathcal{E})$, then $C\xrightarrow{X} C'$ is called a pomset transition from $C$ to $C'$. When the events in $X$ are pairwise concurrent, we say that $C\xrightarrow{X}C'$ is a step.
\end{definition}

\begin{definition}[Strong pomset, step bisimilarity]\label{PSB}
Let $\mathcal{E}_1$, $\mathcal{E}_2$ be PESs. A strong pomset bisimulation is a relation $R\subseteq\mathcal{C}(\mathcal{E}_1)\times\mathcal{C}(\mathcal{E}_2)$, such that if $(C_1,C_2)\in R$, and $C_1\xrightarrow{X_1}C_1'$ (with $\mathcal{E}_1\xrightarrow{X_1}\mathcal{E}_1'$) then $C_2\xrightarrow{X_2}C_2'$ (with $\mathcal{E}_2\xrightarrow{X_2}\mathcal{E}_2'$), with $X_1\subseteq \mathbb{E}_1$, $X_2\subseteq \mathbb{E}_2$, $X_1\sim X_2$ and $(C_1',C_2')\in R$:
\begin{enumerate}
  \item for each fresh action $\alpha\in X_1$, if $C_1''\xrightarrow{\alpha}C_1'''$ (with $\mathcal{E}_1''\xrightarrow{\alpha}\mathcal{E}_1'''$), then for some $C_2''$ and $C_2'''$, $C_2''\xrightarrow{\alpha}C_2'''$ (with $\mathcal{E}_2''\xrightarrow{\alpha}\mathcal{E}_2'''$), such that if $(C_1'',C_2'')\in R$ then $(C_1''',C_2''')\in R$;
  \item for each $x(y)\in X_1$ with ($y\notin n(\mathcal{E}_1, \mathcal{E}_2)$), if $C_1''\xrightarrow{x(y)}C_1'''$ (with $\mathcal{E}_1''\xrightarrow{x(y)}\mathcal{E}_1'''\{w/y\}$) for all $w$, then for some $C_2''$ and $C_2'''$, $C_2''\xrightarrow{x(y)}C_2'''$ (with $\mathcal{E}_2''\xrightarrow{x(y)}\mathcal{E}_2'''\{w/y\}$) for all $w$, such that if $(C_1'',C_2'')\in R$ then $(C_1''',C_2''')\in R$;
    \item for each two $x_1(y),x_2(y)\in X_1$ with ($y\notin n(\mathcal{E}_1, \mathcal{E}_2)$), if $C_1''\xrightarrow{\{x_1(y),x_2(y)\}}C_1'''$ (with $\mathcal{E}_1''\xrightarrow{\{x_1(y),x_2(y)\}}\mathcal{E}_1'''\{w/y\}$) for all $w$, then for some $C_2''$ and $C_2'''$, $C_2''\xrightarrow{\{x_1(y),x_2(y)\}}C_2'''$ (with $\mathcal{E}_2''\xrightarrow{\{x_1(y),x_2(y)\}}\mathcal{E}_2'''\{w/y\}$) for all $w$, such that if $(C_1'',C_2'')\in R$ then $(C_1''',C_2''')\in R$;
  \item for each $\overline{x}(y)\in X_1$ with $y\notin n(\mathcal{E}_1, \mathcal{E}_2)$, if $C_1''\xrightarrow{\overline{x}(y)}C_1'''$ (with $\mathcal{E}_1''\xrightarrow{\overline{x}(y)}\mathcal{E}_1'''$), then for some $C_2''$ and $C_2'''$, $C_2''\xrightarrow{\overline{x}(y)}C_2'''$ (with $\mathcal{E}_2''\xrightarrow{\overline{x}(y)}\mathcal{E}_2'''$), such that if $(C_1'',C_2'')\in R$ then $(C_1''',C_2''')\in R$.
\end{enumerate}
 and vice-versa.

We say that $\mathcal{E}_1$, $\mathcal{E}_2$ are strong pomset bisimilar, written $\mathcal{E}_1\sim_p\mathcal{E}_2$, if there exists a strong pomset bisimulation $R$, such that $(\emptyset,\emptyset)\in R$. By replacing pomset transitions with steps, we can get the definition of strong step bisimulation. When PESs $\mathcal{E}_1$ and $\mathcal{E}_2$ are strong step bisimilar, we write $\mathcal{E}_1\sim_s\mathcal{E}_2$.

\end{definition}

\begin{definition}[Posetal product]
Given two PESs $\mathcal{E}_1$, $\mathcal{E}_2$, the posetal product of their configurations, denoted $\mathcal{C}(\mathcal{E}_1)\overline{\times}\mathcal{C}(\mathcal{E}_2)$, is defined as

$$\{(C_1,f,C_2)|C_1\in\mathcal{C}(\mathcal{E}_1),C_2\in\mathcal{C}(\mathcal{E}_2),f:C_1\rightarrow C_2 \textrm{ isomorphism}\}.$$

A subset $R\subseteq\mathcal{C}(\mathcal{E}_1)\overline{\times}\mathcal{C}(\mathcal{E}_2)$ is called a posetal relation. We say that $R$ is downward closed when for any $(C_1,f,C_2),(C_1',f',C_2')\in \mathcal{C}(\mathcal{E}_1)\overline{\times}\mathcal{C}(\mathcal{E}_2)$, if $(C_1,f,C_2)\subseteq (C_1',f',C_2')$ pointwise and $(C_1',f',C_2')\in R$, then $(C_1,f,C_2)\in R$.

For $f:X_1\rightarrow X_2$, we define $f[x_1\mapsto x_2]:X_1\cup\{x_1\}\rightarrow X_2\cup\{x_2\}$, $z\in X_1\cup\{x_1\}$,(1)$f[x_1\mapsto x_2](z)=
x_2$,if $z=x_1$;(2)$f[x_1\mapsto x_2](z)=f(z)$, otherwise. Where $X_1\subseteq \mathbb{E}_1$, $X_2\subseteq \mathbb{E}_2$, $x_1\in \mathbb{E}_1$, $x_2\in \mathbb{E}_2$.
\end{definition}

\begin{definition}[Strong (hereditary) history-preserving bisimilarity]\label{HHPB}
A strong history-preserving (hp-) bisimulation is a posetal relation $R\subseteq\mathcal{C}(\mathcal{E}_1)\overline{\times}\mathcal{C}(\mathcal{E}_2)$ such that if $(C_1,f,C_2)\in R$, and
\begin{enumerate}
  \item for $e_1=\alpha$ a fresh action, if $C_1\xrightarrow{\alpha}C_1'$ (with $\mathcal{E}_1\xrightarrow{\alpha}\mathcal{E}_1'$), then for some $C_2'$ and $e_2=\alpha$, $C_2\xrightarrow{\alpha}C_2'$ (with $\mathcal{E}_2\xrightarrow{\alpha}\mathcal{E}_2'$), such that $(C_1',f[e_1\mapsto e_2],C_2')\in R$;
  \item for $e_1=x(y)$ with ($y\notin n(\mathcal{E}_1, \mathcal{E}_2)$), if $C_1\xrightarrow{x(y)}C_1'$ (with $\mathcal{E}_1\xrightarrow{x(y)}\mathcal{E}_1'\{w/y\}$) for all $w$, then for some $C_2'$ and $e_2=x(y)$, $C_2\xrightarrow{x(y)}C_2'$ (with $\mathcal{E}_2\xrightarrow{x(y)}\mathcal{E}_2'\{w/y\}$) for all $w$, such that $(C_1',f[e_1\mapsto e_2],C_2')\in R$;
  \item for $e_1=\overline{x}(y)$ with $y\notin n(\mathcal{E}_1, \mathcal{E}_2)$, if $C_1\xrightarrow{\overline{x}(y)}C_1'$ (with $\mathcal{E}_1\xrightarrow{\overline{x}(y)}\mathcal{E}_1'$), then for some $C_2'$ and $e_2=\overline{x}(y)$, $C_2\xrightarrow{\overline{x}(y)}C_2'$ (with $\mathcal{E}_2\xrightarrow{\overline{x}(y)}\mathcal{E}_2'$), such that $(C_1',f[e_1\mapsto e_2],C_2')\in R$.
\end{enumerate}

and vice-versa. $\mathcal{E}_1,\mathcal{E}_2$ are strong history-preserving (hp-)bisimilar and are written $\mathcal{E}_1\sim_{hp}\mathcal{E}_2$ if there exists a strong hp-bisimulation $R$ such that $(\emptyset,\emptyset,\emptyset)\in R$.

A strongly hereditary history-preserving (hhp-)bisimulation is a downward closed strong hp-bisimulation. $\mathcal{E}_1,\mathcal{E}_2$ are strongly hereditary history-preserving (hhp-)bisimilar and are written $\mathcal{E}_1\sim_{hhp}\mathcal{E}_2$.
\end{definition}

Since the Parallel composition $\parallel$ is a fundamental computational pattern in CTC and APTC, and also it is fundamental in $\pi_{tc}$ as defined in Table \ref{TRForPITC}, and cannot be instead of other operators. So, the above truly concurrent bisimilarities are preserved by substitutions of names as defined in Definition \ref{subs}. We illustrate it by an example. We assume $P\equiv \overline{x}v$, abbreviated to $\overline{x}$; and $Q\equiv y(u)$, abbreviated to $y$. Then the following equations are true when $x\neq y$ and $u\neq v$:

$$P\parallel Q\sim_{p} \overline{x}\parallel y$$

$$P\parallel Q\sim_{s} \overline{x}\parallel y$$

$$P\parallel Q\sim_{hp} \overline{x}\parallel y$$

$$P\parallel Q\sim_{hhp} \overline{x}\parallel y.$$

By substituting $y$ to $x$, the following equations still hold:

$$P\parallel Q\{x/y\} \sim_{p} \overline{x}\parallel x$$

$$P\parallel Q\{x/y\} \sim_{s} \overline{x}\parallel x$$

$$P\parallel Q\{x/y\} \sim_{hp} \overline{x}\parallel x$$

$$P\parallel Q\{x/y\} \sim_{hhp} \overline{x}\parallel x.$$

\begin{theorem}
$\equiv_{\alpha}$ are strongly truly concurrent bisimulations. That is, if $P\equiv_{\alpha}Q$, then,
\begin{enumerate}
  \item $P\sim_p Q$;
  \item $P\sim_s Q$;
  \item $P\sim_{hp} Q$;
  \item $P\sim_{hhp} Q$.
\end{enumerate}
\end{theorem}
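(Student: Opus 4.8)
The plan is to verify, for each of the four equivalences, the transfer conditions of its defining relation against a concrete witness built from $\equiv_{\alpha}$. For the pomset and step cases (Definition \ref{PSB}) I would take the relation
$$R = \{(C(P),C(Q)) : P\equiv_{\alpha}Q\},$$
and for the hp- and hhp- cases (Definition \ref{HHPB}) the posetal relation
$$R_{hp} = \{(C(P),f,C(Q)) : P\equiv_{\alpha}Q,\ f\ \text{the canonical label-preserving isomorphism}\}.$$
Because $P\equiv_{\alpha}Q$ renames only bound objects, $P$ and $Q$ induce prime event structures with identical causality and conflict, so the componentwise correspondence of events is a well-defined order- and label-isomorphism $f\colon C(P)\to C(Q)$; in particular $(\emptyset,\emptyset)\in R$ and $(\emptyset,\emptyset,\emptyset)\in R_{hp}$, so it remains only to check the transfer clauses.

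The engine for all four parts is a single \emph{harmony} lemma: if $P\equiv_{\alpha}Q$ and $P\xrightarrow{\alpha}P'$, then there is a transition $Q\xrightarrow{\beta}Q'$ in which $\beta$ differs from $\alpha$ at most by a renaming of bound objects (so $\alpha$ and $\beta$ agree on their label, subject and any free object) and $P'\equiv_{\alpha}Q'$; and symmetrically, together with the corresponding statement for the concurrent transitions $P\xrightarrow{\{\alpha_1,\dots,\alpha_n\}}P'$. I would prove this by induction on the depth of inference of the transition, with a case analysis on the last rule of Table \ref{TRForPITC}. The base cases \textbf{TAU-ACT}, \textbf{OUTPUT-ACT} and \textbf{INPUT-ACT} are immediate from the definition of $\equiv_{\alpha}$ and Definition \ref{subs}; the inductive cases \textbf{SUM}, \textbf{IDE}, \textbf{PAR}, \textbf{COM}, \textbf{CLOSE}, \textbf{RES} and \textbf{OPEN} follow by applying the induction hypothesis to the premises and reassembling the conclusion, using the properties of transitions established above to control free and bound names.

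The four enumerated clauses of Definition \ref{PSB} and the three of Definition \ref{HHPB} then follow by instantiating the harmony lemma: clause (1) is the fresh-action case, clauses (2)--(3) the bound-input cases $x(y)$ and $\{x_1(y),x_2(y)\}$, and clause (4)/(3) the bound-output case $\overline{x}(y)$. For the hp-bisimulation the matched event $e_2$ is the image of $e_1$ under the canonical isomorphism, so that $f[e_1\mapsto e_2]$ again lies in $R_{hp}$; for hhp-bisimulation I would additionally observe that the restriction of $\equiv_{\alpha}$ to any sub-configuration is again $\equiv_{\alpha}$, which gives downward closure immediately. The main obstacle is the bookkeeping of bound names in the \textbf{INPUT-ACT}, \textbf{CLOSE} and \textbf{OPEN} rules: the side conditions $w\notin fn(\cdot)$ attached to these rules interact with the renaming implicit in $\equiv_{\alpha}$, so matching $Q$'s transition may require first choosing the bound witness afresh and then alpha-converting the target. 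Discharging these conditions --- and checking that the ``for all $w$'' quantifications in the bound clauses of Definitions \ref{PSB} and \ref{HHPB} are preserved, which they are precisely because $\equiv_{\alpha}$ touches only bound names --- is where the propositions on renaming and substitution of transitions above do the real work.
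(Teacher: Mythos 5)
Your proposal is correct and takes essentially the same route as the paper: the paper likewise proves, by induction on the depth of inference in Table \ref{TRForPITC}, that alpha-equivalent processes match each other's free-action transitions up to $\equiv_{\alpha}$ and each other's bound-action transitions $a(y)$ up to a fresh choice of the bound object (with $P'\{z/y\}\equiv_{\alpha}Q'$), and then concludes all four bisimilarities directly from Definitions \ref{PSB} and \ref{HHPB}. Your version is in fact somewhat more thorough than the paper's, since you also state the harmony lemma for multi-action transitions $P\xrightarrow{\{\alpha_1,\cdots,\alpha_n\}}P'$ (needed for the pomset and step clauses), exhibit the witness relations explicitly, and argue downward closure for $\sim_{hhp}$, all of which the paper leaves implicit.
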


\begin{proof}
By induction on the depth of inference (see Table \ref{TRForPITC}), we can get the following facts:

\begin{enumerate}
  \item If $\alpha$ is a free action and $P\xrightarrow{\alpha}P'$, then equally for some $Q'$ with $P'\equiv_{\alpha}Q'$, $Q\xrightarrow{\alpha}Q'$;
  \item If $P\xrightarrow{a(y)}P'$ with $a=x$ or $a=\overline{x}$ and $z\notin n(Q)$, then equally for some $Q'$ with $P'\{z/y\}\equiv_{\alpha}Q'$, $Q\xrightarrow{a(z)}Q'$.
\end{enumerate}

Then, we can get:

\begin{enumerate}
  \item by the definition of strong pomset bisimilarity (Definition \ref{PSB}), $P\sim_p Q$;
  \item by the definition of strong step bisimilarity (Definition \ref{PSB}), $P\sim_s Q$;
  \item by the definition of strong hp-bisimilarity (Definition \ref{HHPB}), $P\sim_{hp} Q$;
  \item by the definition of strongly hhp-bisimilarity (Definition \ref{HHPB}), $P\sim_{hhp} Q$.
\end{enumerate}
\end{proof}

\subsection{Laws and Congruence}

Similarly to CTC \cite{CTC}, we can obtain the following laws with respect to truly concurrent bisimilarities.

\begin{theorem}[Summation laws for strong pomset bisimilarity]
The summation laws for strong pomset bisimilarity are as follows.
\begin{enumerate}
  \item $P+\textbf{nil}\sim_p P$;
  \item $P+P\sim_p P$;
  \item $P_1+P_2\sim_p P_2+P_1$;
  \item $P_1+(P_2+P_3)\sim_p (P_1+P_2)+P_3$.
\end{enumerate}
\end{theorem}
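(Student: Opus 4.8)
The plan is to prove each of the four equations by exhibiting an explicit strong pomset bisimulation $R$ that contains the initial pair of empty configurations $(\emptyset,\emptyset)$, and then checking the transfer conditions of Definition \ref{PSB}. The single structural fact that drives all four proofs is that the Summation operator contributes only top-level conflict and introduces no new causality: by rules $\textbf{SUM}_1$ and $\textbf{SUM}_2$, every (pomset) transition of a sum is inherited unchanged from one of its summands, and once a first pomset transition has resolved the choice into a particular summand, both sides of each equation continue as the very same derivative of that summand. Consequently I would take $R$ to be the initial pair together with the identity relation $Id$ relating each reachable configuration to its counterpart; because the matched derivative is literally the same on the two sides (which is legitimate since the paper identifies processes with their event structures), the pomset condition $X_1\sim X_2$ holds trivially, as the same label and hence the same labelled partial order is used on both sides.

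For the unit law $P+\textbf{nil}\sim_p P$ and the idempotence law $P+P\sim_p P$ this reasoning is immediate. Since $\textbf{nil}$ has no outgoing transition, the transitions of $P+\textbf{nil}$ are exactly those of $P$ via $\textbf{SUM}_1/\textbf{SUM}_2$, and for $P+P$ the two summands offer identical transitions; in both cases the reached process is a derivative of $P$, so $R=\{(\emptyset,\emptyset)\}\cup Id$ is closed under the transfer conditions. The four sub-clauses of Definition \ref{PSB} --- for a fresh action, for a bound input $x(y)$, for a pair of bound inputs, and for a bound output $\overline{x}(y)$ --- each hold verbatim, because the label, the side condition $y\notin n(\mathcal{E}_1,\mathcal{E}_2)$, and the universally quantified fresh name $w$ are the same on both sides, so no renaming is needed to realize the match.

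For commutativity $P_1+P_2\sim_p P_2+P_1$ and associativity $P_1+(P_2+P_3)\sim_p(P_1+P_2)+P_3$ the key is that the set of initial (pomset) transitions of a sum is the union of the initial transitions of its summands, a description that is manifestly symmetric in $P_1,P_2$ and associative over $P_1,P_2,P_3$; after the first transition both sides again coincide with a single derivative of whichever summand fired, so the same $R$ works. I expect the main obstacle to be precisely here: as written, $\textbf{SUM}_1$ and $\textbf{SUM}_2$ fire only the left summand, so matching a transition contributed by the right summand requires the evident symmetric companions of these rules (equivalently, reading $\textbf{SUM}$ as applying to any summand). Granting those, commutativity is just the symmetry of the ``union of initial transitions'' characterization and associativity its reassociation. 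The remaining care lies with the bound-name clauses (2)--(4), where the freshness side conditions must be met consistently on both processes; but since the matched derivative is identical on the two sides for every one of these laws, the freshness and ``for all $w$'' requirements transfer without any renaming, and the ``vice-versa'' direction is symmetric to the forward one.
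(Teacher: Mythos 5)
Your proposal is correct and takes essentially the same approach as the paper: the paper likewise proves each law by exhibiting the relation consisting of the law's pair together with the identity (e.g.\ $R=\{(P+\textbf{nil},P)\}\cup \textbf{Id}$) and claiming it is a strong pomset bisimulation, deferring the routine transfer-condition check to the Monoid-laws proof in CTC \cite{CTC}, whereas you spell that check out via the $\textbf{SUM}$ rules. Your observation that $\textbf{SUM}_1/\textbf{SUM}_2$ as printed fire only the left summand, so commutativity requires their implicit symmetric companions, is a fair remark about the paper's rule table rather than a gap in your argument.
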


\begin{proof}
\begin{enumerate}
  \item It is sufficient to prove the relation $R=\{(P+\textbf{nil}, P)\}\cup \textbf{Id}$ is a strong pomset bisimulation. It can be proved similarly to the proof of Monoid laws for strong pomset bisimulation in CTC \cite{CTC}, we omit it;
  \item It is sufficient to prove the relation $R=\{(P+P, P)\}\cup \textbf{Id}$ is a strong pomset bisimulation. It can be proved similarly to the proof of Monoid laws for strong pomset bisimulation in CTC \cite{CTC}, we omit it;
  \item It is sufficient to prove the relation $R=\{(P_1+P_2, P_2+P_1)\}\cup \textbf{Id}$ is a strong pomset bisimulation. It can be proved similarly to the proof of Monoid laws for strong pomset bisimulation in CTC \cite{CTC}, we omit it;
  \item It is sufficient to prove the relation $R=\{(P_1+(P_2+P_3), (P_1+P_2)+P_3)\}\cup \textbf{Id}$ is a strong pomset bisimulation. It can be proved similarly to the proof of Monoid laws for strong pomset bisimulation in CTC \cite{CTC}, we omit it.
\end{enumerate}
\end{proof}

\begin{theorem}[Summation laws for strong step bisimilarity]
The summation laws for strong step bisimilarity are as follows.
\begin{enumerate}
  \item $P+\textbf{nil}\sim_s P$;
  \item $P+P\sim_s P$;
  \item $P_1+P_2\sim_s P_2+P_1$;
  \item $P_1+(P_2+P_3)\sim_s (P_1+P_2)+P_3$.
\end{enumerate}
\end{theorem}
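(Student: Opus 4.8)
The plan is to handle the four laws exactly as in the preceding theorem for strong pomset bisimilarity, exhibiting for each one an explicit witness relation and checking that it satisfies the transfer conditions of a strong step bisimulation. Recall that, by Definition \ref{PSB}, a strong step bisimulation is obtained from a strong pomset bisimulation by restricting every transition to a step, i.e.\ to a transition whose set of events is pairwise concurrent. Concretely, I would take the relations
\[
R_1=\{(P+\textbf{nil},P)\}\cup\textbf{Id},\quad R_2=\{(P+P,P)\}\cup\textbf{Id},
\]
\[
R_3=\{(P_1+P_2,P_2+P_1)\}\cup\textbf{Id},\quad R_4=\{(P_1+(P_2+P_3),(P_1+P_2)+P_3)\}\cup\textbf{Id},
\]
and show that each $R_i$ is a strong step bisimulation containing the pair of empty configurations.

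For each law I would analyse the possible first moves. The only rules that can derive a transition of a summation are $\textbf{SUM}_1$ and $\textbf{SUM}_2$ in Table \ref{TRForPITC}, so every step $P+Q\xrightarrow{\{\alpha_1,\ldots,\alpha_n\}}P'$ must arise from a step of a single summand. Matching such a move is then routine: for the unit law $\textbf{nil}$ contributes no transitions, so all moves come from $P$; for idempotence both copies offer the same moves; for commutativity a move derived by $\textbf{SUM}_1$ or $\textbf{SUM}_2$ on one side is matched by the corresponding rule applied to the reordered summation; and for associativity the two bracketings expose exactly the same summand-moves. In every case the two residuals coincide after the first step, so the continuation lands in $\textbf{Id}$, which is trivially a strong step bisimulation. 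Since summation does not alter the causal or concurrency structure inside a summand—it merely selects one summand—whenever the label $X_1$ of a move is a step (pairwise concurrent), the matching label $X_2$ is literally the same label, so the pairwise-concurrency and $X_1\sim X_2$ requirements hold automatically and the step restriction adds nothing substantive over the pomset case. This is precisely the situation treated in the proof of the Monoid/Summation laws for strong step bisimulation in CTC \cite{CTC}, which I would invoke to discharge the remaining bookkeeping.

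The only content beyond the interleaving pattern is verifying the four sub-clauses of Definition \ref{PSB}: fresh actions, a single bound input $x(y)$, two bound inputs $x_1(y),x_2(y)$, and a bound output $\overline{x}(y)$. Here the freshness side-conditions ($y\notin n(\mathcal{E}_1,\mathcal{E}_2)$) and the late-style ``for all $w$'' quantification must be respected when matching moves; but because the two related processes share their summands verbatim, the same fresh name and the same $w$ work on both sides, so these conditions transfer directly. I expect the main obstacle—such as it is—to be the notational bookkeeping around the bound-input clauses (items 2 and 3 of Definition \ref{PSB}), where one must carry the late quantification through the match, rather than any conceptual difficulty; the underlying argument is identical to the pomset case, the step restriction being automatically satisfied.
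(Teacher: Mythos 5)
Your proposal is correct and follows essentially the same route as the paper: you exhibit exactly the witness relations $R_i=\{(\mathrm{LHS},\mathrm{RHS})\}\cup\textbf{Id}$ that the paper uses, verify the transfer conditions via the $\textbf{SUM}_1$/$\textbf{SUM}_2$ case analysis, and defer the remaining bookkeeping to the Monoid-laws proof for strong step bisimulation in CTC, just as the paper does. Your added remarks on the bound-input clauses and the late ``for all $w$'' quantification merely spell out details the paper leaves implicit.
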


\begin{proof}
\begin{enumerate}
  \item It is sufficient to prove the relation $R=\{(P+\textbf{nil}, P)\}\cup \textbf{Id}$ is a strong step bisimulation. It can be proved similarly to the proof of Monoid laws for strong step bisimulation in CTC \cite{CTC}, we omit it;
  \item It is sufficient to prove the relation $R=\{(P+P, P)\}\cup \textbf{Id}$ is a strong step bisimulation. It can be proved similarly to the proof of Monoid laws for strong step bisimulation in CTC \cite{CTC}, we omit it;
  \item It is sufficient to prove the relation $R=\{(P_1+P_2, P_2+P_1)\}\cup \textbf{Id}$ is a strong step bisimulation. It can be proved similarly to the proof of Monoid laws for strong step bisimulation in CTC \cite{CTC}, we omit it;
  \item It is sufficient to prove the relation $R=\{(P_1+(P_2+P_3), (P_1+P_2)+P_3)\}\cup \textbf{Id}$ is a strong step bisimulation. It can be proved similarly to the proof of Monoid laws for strong step bisimulation in CTC \cite{CTC}, we omit it.
\end{enumerate}
\end{proof}

\begin{theorem}[Summation laws for strong hp-bisimilarity]
The summation laws for strong hp-bisimilarity are as follows.
\begin{enumerate}
  \item $P+\textbf{nil}\sim_{hp} P$;
  \item $P+P\sim_{hp} P$;
  \item $P_1+P_2\sim_{hp} P_2+P_1$;
  \item $P_1+(P_2+P_3)\sim_{hp} (P_1+P_2)+P_3$.
\end{enumerate}
\end{theorem}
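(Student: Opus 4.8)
The plan is to treat all four identities by the same method used for the pomset and step cases, adapted to the posetal setting of Definition \ref{HHPB}: for each law I would exhibit an explicit witnessing posetal relation and verify the three matching clauses together with their symmetric counterparts. Reading the pairs in the pomset/step proofs as posetal triples carrying the identity isomorphism, the relations are $R_1=\{(P+\textbf{nil},\,id,\,P)\}\cup\textbf{Id}$ for (1), $R_2=\{(P+P,\,id,\,P)\}\cup\textbf{Id}$ for (2), $R_3=\{(P_1+P_2,\,id,\,P_2+P_1)\}\cup\textbf{Id}$ for (3), and $R_4=\{(P_1+(P_2+P_3),\,id,\,(P_1+P_2)+P_3)\}\cup\textbf{Id}$ for (4), each containing $(\emptyset,\emptyset,\emptyset)$. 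It then suffices to show that each $R_i$ is a strong hp-bisimulation; since we are only asserting hp-bisimilarity and not hhp-bisimilarity, downward-closure need not be checked, which keeps the argument at the same level as the pomset and step laws.

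The structural observation that drives every case is that the Summation operator models pure choice (the conflict relation $\sharp$), so by $\textbf{SUM}_1$ any single-event transition of a summation is inherited verbatim from one of its summands, and the residual discards the $+$ entirely. Consequently a single first move resolves the choice, after which the two sides become syntactically identical and are related by $\textbf{Id}$; no new causality or concurrency is introduced, the configuration after the first step is a singleton, and the isomorphism demanded by the definition is the identity extension $f[e_1\mapsto e_1]$, which is trivially order-preserving. For law (1) the right branch $\textbf{nil}$ is inert ($\textbf{nil}\nrightarrow$), so $P+\textbf{nil}$ can move only as $P$ and conversely every move of $P$ is matched through $\textbf{SUM}_1$; for (2) both copies offer the same moves and are matched by the single $P$; for (3) and (4) the two groupings present the same total set of initial moves, matched summand-by-summand via the summation rules, and in each instance the residuals coincide.

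Verifying the clauses of Definition \ref{HHPB} is then direct. A fresh free action (clause 1) is matched by the identical action of the partner term with coinciding residuals; a bound input $x(y)$ (clause 2) and a bound output $\overline{x}(y)$ (clause 3) with $y$ fresh are matched with the universal condition "for all $w$" and the residual substitution $\{w/y\}$ inherited unchanged, because Summation neither binds nor restricts names at the top level, so the side conditions on $y$ and $w$ come straight from the summand's transition; the vice-versa direction is symmetric. The main point deserving attention, and the only genuine difference from the already-proved pomset and step laws, is the posetal bookkeeping: confirming that the isomorphism extends consistently at each step and that clauses 2 and 3 are genuinely inherited rather than requiring a fresh freshness argument. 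Since a summation term never contributes a restriction or an input binder of its own, this inheritance is immediate, and the statement reduces, exactly as in the pomset and step cases, to the corresponding Monoid-law computation for strong hp-bisimulation in CTC \cite{CTC}.
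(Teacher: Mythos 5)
Your proposal is correct and takes essentially the same route as the paper: the paper's proof consists of exhibiting exactly the witnessing relations $R=\{(P+\textbf{nil},P)\}\cup\textbf{Id}$, $\{(P+P,P)\}\cup\textbf{Id}$, $\{(P_1+P_2,P_2+P_1)\}\cup\textbf{Id}$, and $\{(P_1+(P_2+P_3),(P_1+P_2)+P_3)\}\cup\textbf{Id}$ and deferring the verification to the Monoid laws for strong hp-bisimulation in CTC \cite{CTC}, which is precisely the reduction you end with. Your additions---recasting the pairs as posetal triples carrying the identity isomorphism with $(\emptyset,\emptyset,\emptyset)\in R_i$, and spelling out how $\textbf{SUM}_1$ inheritance handles the free-action and bound-name clauses---merely make explicit the bookkeeping the paper omits, without changing the method.
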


\begin{proof}
\begin{enumerate}
  \item It is sufficient to prove the relation $R=\{(P+\textbf{nil}, P)\}\cup \textbf{Id}$ is a strong hp-bisimulation. It can be proved similarly to the proof of Monoid laws for strong hp-bisimulation in CTC \cite{CTC}, we omit it;
  \item It is sufficient to prove the relation $R=\{(P+P, P)\}\cup \textbf{Id}$ is a strong hp-bisimulation. It can be proved similarly to the proof of Monoid laws for strong hp-bisimulation in CTC \cite{CTC}, we omit it;
  \item It is sufficient to prove the relation $R=\{(P_1+P_2, P_2+P_1)\}\cup \textbf{Id}$ is a strong hp-bisimulation. It can be proved similarly to the proof of Monoid laws for strong hp-bisimulation in CTC \cite{CTC}, we omit it;
  \item It is sufficient to prove the relation $R=\{(P_1+(P_2+P_3), (P_1+P_2)+P_3)\}\cup \textbf{Id}$ is a strong hp-bisimulation. It can be proved similarly to the proof of Monoid laws for strong hp-bisimulation in CTC \cite{CTC}, we omit it.
\end{enumerate}
\end{proof}

\begin{theorem}[Summation laws for strongly hhp-bisimilarity]
The summation laws for strongly hhp-bisimilarity are as follows.
\begin{enumerate}
  \item $P+\textbf{nil}\sim_{hhp} P$;
  \item $P+P\sim_{hhp} P$;
  \item $P_1+P_2\sim_{hhp} P_2+P_1$;
  \item $P_1+(P_2+P_3)\sim_{hhp} (P_1+P_2)+P_3$.
\end{enumerate}
\end{theorem}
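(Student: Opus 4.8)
The plan is to mirror the proof of the preceding theorem on strong hp-bisimilarity and add the one extra ingredient that Definition \ref{HHPB} demands for the hereditary version, namely downward closure. For each of the four laws it suffices to exhibit a posetal relation and verify that it is a strongly hhp-bisimulation, i.e. a downward closed strong hp-bisimulation. Following the convention of the previous proofs I would take $R=\{(P+\textbf{nil}, P)\}\cup \textbf{Id}$ for law (1), and correspondingly $R=\{(P+P, P)\}\cup \textbf{Id}$, $R=\{(P_1+P_2, P_2+P_1)\}\cup \textbf{Id}$ and $R=\{(P_1+(P_2+P_3), (P_1+P_2)+P_3)\}\cup \textbf{Id}$ for laws (2)--(4), each understood as a posetal relation whose distinguished triple relates the two empty configurations through the empty isomorphism.

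First I would check that each $R$ meets the three transfer clauses of a strong hp-bisimulation (fresh action, bound input $x(y)$, bound output $\overline{x}(y)$). This is precisely what the corresponding strong hp-bisimilarity theorem already yields: since $+$ encodes the conflict relation $\sharp$, the rules $\textbf{SUM}_1$ and $\textbf{SUM}_2$ force the chosen summand to fire in isolation, so every move of one side is matched by the identical move of the other, with the isomorphism extended by $f[e_1\mapsto e_2]$ to the freshly added event; once this first choice is made both sides are syntactically the same process, and all remaining obligations are discharged by $\textbf{Id}$. As in the proof of the Monoid laws in CTC \cite{CTC}, these verifications are routine, so I would omit the calculations.

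The only genuinely new point, and the one I expect to be the main obstacle, is verifying that each $R$ is downward closed in the sense of the posetal product definition. Here I would argue that the identity part $\textbf{Id}$ is itself downward closed, because the pointwise restriction of an identity isomorphism to any sub-configuration is again the identity and hence still lies in $\textbf{Id}$; and the distinguished triple of each $R$ relates the two empty configurations, which admit no proper sub-triple. Thus for any $(C_1,f,C_2)\in R$ and any $(C_1',f',C_2')\subseteq(C_1,f,C_2)$ pointwise, the smaller triple is again either the empty triple or a member of $\textbf{Id}$, so $(C_1',f',C_2')\in R$. This gives downward closure in all four cases, and combined with the hp-bisimulation clauses it shows each $R$ is a strongly hhp-bisimulation, establishing the four equivalences.
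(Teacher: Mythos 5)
Your proposal is correct and takes essentially the same route as the paper: the paper's proof exhibits exactly the same witness relations $R=\{(P+\textbf{nil},P)\}\cup\textbf{Id}$, $\{(P+P,P)\}\cup\textbf{Id}$, $\{(P_1+P_2,P_2+P_1)\}\cup\textbf{Id}$, $\{(P_1+(P_2+P_3),(P_1+P_2)+P_3)\}\cup\textbf{Id}$ and likewise reduces the verification to the Monoid-law arguments for strongly hhp-bisimulation in CTC \cite{CTC}. Your explicit check of downward closure (via $\textbf{Id}$ being closed under pointwise restriction and the distinguished triple being empty) merely spells out a detail the paper delegates to the citation.
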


\begin{proof}
\begin{enumerate}
  \item It is sufficient to prove the relation $R=\{(P+\textbf{nil}, P)\}\cup \textbf{Id}$ is a strongly hhp-bisimulation. It can be proved similarly to the proof of Monoid laws for strongly hhp-bisimulation in CTC \cite{CTC}, we omit it;
  \item It is sufficient to prove the relation $R=\{(P+P, P)\}\cup \textbf{Id}$ is a strongly hhp-bisimulation. It can be proved similarly to the proof of Monoid laws for strongly hhp-bisimulation in CTC \cite{CTC}, we omit it;
  \item It is sufficient to prove the relation $R=\{(P_1+P_2, P_2+P_1)\}\cup \textbf{Id}$ is a strongly hhp-bisimulation. It can be proved similarly to the proof of Monoid laws for strongly hhp-bisimulation in CTC \cite{CTC}, we omit it;
  \item It is sufficient to prove the relation $R=\{(P_1+(P_2+P_3), (P_1+P_2)+P_3)\}\cup \textbf{Id}$ is a strongly hhp-bisimulation. It can be proved similarly to the proof of Monoid laws for strongly hhp-bisimulation in CTC \cite{CTC}, we omit it.
\end{enumerate}
\end{proof}

\begin{theorem}[Identity law for truly concurrent bisimilarities]
If $A(\widetilde{x})\overset{\text{def}}{=}P$, then
\begin{enumerate}
  \item $A(\widetilde{y})\sim_p P\{\widetilde{y}/\widetilde{x}\}$;
  \item $A(\widetilde{y})\sim_s P\{\widetilde{y}/\widetilde{x}\}$;
  \item $A(\widetilde{y})\sim_{hp} P\{\widetilde{y}/\widetilde{x}\}$;
  \item $A(\widetilde{y})\sim_{hhp} P\{\widetilde{y}/\widetilde{x}\}$.
\end{enumerate}
\end{theorem}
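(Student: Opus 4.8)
The plan is to exhibit, for each of the four equivalences, one witnessing relation and to check it against the defining clauses. Everything hinges on the identity rules $\textbf{IDE}_1$ and $\textbf{IDE}_2$ in Table \ref{TRForPITC}. Read top-down, they show that every transition of $P\{\widetilde{y}/\widetilde{x}\}$ is inherited by $A(\widetilde{y})$ with the identical label (whether a single action $\alpha$ or a step $\{\alpha_1,\cdots,\alpha_n\}$) and, crucially, the identical residual $P'$. Read as the only applicable rules for a constant, they also show that every transition of $A(\widetilde{y})$ arises this way. Hence $A(\widetilde{y})$ and $P\{\widetilde{y}/\widetilde{x}\}$ possess exactly the same outgoing transitions, reaching exactly the same targets. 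This single observation does essentially all of the work.

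For strong pomset and step bisimilarity I would take
$$R=\{(A(\widetilde{y}),P\{\widetilde{y}/\widetilde{x}\})\}\cup\textbf{Id}.$$
Starting from the distinguished pair, any pomset (resp. step) transition $A(\widetilde{y})\xrightarrow{X}Q$ is matched by $P\{\widetilde{y}/\widetilde{x}\}\xrightarrow{X}Q$ with the very same label $X$ and the very same target $Q$, so the successor pair $(Q,Q)$ falls into $\textbf{Id}$, which is itself trivially a strong pomset (resp. step) bisimulation; the vice-versa direction is symmetric. The subordinate clauses of Definition \ref{PSB} --- fresh actions, bound inputs $x(y)$, pairs $\{x_1(y),x_2(y)\}$, and bound outputs $\overline{x}(y)$ --- are all discharged identically, since $\textbf{IDE}_1$ and $\textbf{IDE}_2$ transmit labels of every kind unchanged.

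For strong hp- and hhp-bisimilarity I would use the posetal relation
$$R=\{(\emptyset,\emptyset,\emptyset)\}\cup\{(C,Id_C,C):C\in\mathcal{C}(\mathcal{E})\},$$
i.e. the initial empty triple together with all diagonal triples carrying the identity isomorphism. Because a step out of $A(\widetilde{y})$ and the matching step out of $P\{\widetilde{y}/\widetilde{x}\}$ add the same event with the same label, the configuration isomorphism extends as $f[e\mapsto e]$ and stays the identity, so each successor triple is again diagonal and lies in $R$, satisfying Definition \ref{HHPB}. Downward closure, required for the hhp-case, is immediate: any triple pointwise below a diagonal triple is again diagonal, and nothing lies strictly below $(\emptyset,\emptyset,\emptyset)$.

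I do not expect a genuine obstacle, as the content is concentrated entirely in the remark that $\textbf{IDE}_1/\textbf{IDE}_2$ render the two processes transition-identical, after which the identity relation suffices. The one point deserving a little care is the bookkeeping for actions carrying bound names: one must confirm that the freshness side conditions in the bisimulation clauses hold simultaneously on both sides. This is automatic, however, because the two processes present literally the same transition with the same bound object, so any condition met on one side is met verbatim on the other.
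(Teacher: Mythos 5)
Your proposal is correct and takes essentially the same route as the paper, which likewise exhibits the relation $R=\{(A(\widetilde{y}),P\{\widetilde{y}/\widetilde{x}\})\}\cup \textbf{Id}$ and declares it a strong pomset, step, hp-, and hhp-bisimulation. You merely make explicit what the paper leaves as ``straightforward'': that $\textbf{IDE}_1$/$\textbf{IDE}_2$ are the only rules for constants and transmit labels and residuals unchanged, and that for the hp-/hhp-cases the set-theoretic ``$\cup\,\textbf{Id}$'' should formally be read as the diagonal posetal relation with identity isomorphisms, whose downward closure is immediate.
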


\begin{proof}
\begin{enumerate}
  \item It is straightforward to see that $R=\{A(\widetilde{y},P\{\widetilde{y}/\widetilde{x}\})\}\cup \textbf{Id}$ is a strong pomset bisimulation;
  \item It is straightforward to see that $R=\{A(\widetilde{y},P\{\widetilde{y}/\widetilde{x}\})\}\cup \textbf{Id}$ is a strong step bisimulation;
  \item It is straightforward to see that $R=\{A(\widetilde{y},P\{\widetilde{y}/\widetilde{x}\})\}\cup \textbf{Id}$ is a strong hp-bisimulation;
  \item It is straightforward to see that $R=\{A(\widetilde{y},P\{\widetilde{y}/\widetilde{x}\})\}\cup \textbf{Id}$ is a strongly hhp-bisimulation;
\end{enumerate}
\end{proof}

\begin{theorem}[Restriction Laws for strong pomset bisimilarity]
The restriction laws for strong pomset bisimilarity are as follows.

\begin{enumerate}
  \item $(y)P\sim_p P$, if $y\notin fn(P)$;
  \item $(y)(z)P\sim_p (z)(y)P$;
  \item $(y)(P+Q)\sim_p (y)P+(y)Q$;
  \item $(y)\alpha.P\sim_p \alpha.(y)P$ if $y\notin n(\alpha)$;
  \item $(y)\alpha.P\sim_p \textbf{nil}$ if $y$ is the subject of $\alpha$.
\end{enumerate}
\end{theorem}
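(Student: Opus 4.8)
The plan is to treat each of the five laws by exhibiting an explicit candidate relation and checking that it is a strong pomset bisimulation, exactly in the style of the summation and identity laws above. In every case I would take $R=\{(\text{LHS},\text{RHS})\}\cup\textbf{Id}$ (with the stated side condition built into the pair) and verify the transfer property of Definition \ref{PSB}, including its four freshness sub-clauses, by a case analysis on the last rule used to derive a transition. The transition rules that do the work here are $\textbf{RES}_1$, $\textbf{RES}_2$, $\textbf{OPEN}_1$ and $\textbf{OPEN}_2$, together with the prefix rules $\textbf{TAU-ACT}$/$\textbf{OUTPUT-ACT}$/$\textbf{INPUT-ACT}$, the summation rules $\textbf{SUM}_1$/$\textbf{SUM}_2$, and the free-name bound from the first Proposition of Section \ref{sos}, namely $fn(\alpha)\subseteq fn(P)$.

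For law 1, I would argue that when $y\notin fn(P)$ the restriction is inert: using $fn(\alpha)\subseteq fn(P)$ we get $y\notin fn(\alpha)$, and by choosing the bound names of $\alpha$ fresh (alpha-conversion, which is a bisimulation by the previous theorem) we may assume $y\notin bn(\alpha)$, so that $y\notin n(\alpha)$ and every transition $P\xrightarrow{X}P'$ is matched by $(y)P\xrightarrow{X}(y)P'$ through $\textbf{RES}_1$/$\textbf{RES}_2$, and conversely; since $y\notin fn(P')$ is preserved, the resulting pairs stay in $R$. For laws 2 and 4 the two sides have literally the same transitions: in law 2 a step of either $(y)(z)P$ or $(z)(y)P$ exists iff $P$ moves with $y,z\notin n(X)$, and in law 4 the only first move is the prefix $\alpha$ (legal since $y\notin n(\alpha)$), after which both sides become the identical process $(y)P$, so $R$ collapses into $\textbf{Id}$. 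Law 3 is handled by pushing a transition of $(y)(P+Q)$ back through $\textbf{RES}$/$\textbf{OPEN}$ to a transition of $P+Q$, then through $\textbf{SUM}_1$/$\textbf{SUM}_2$ to a transition of $P$ or of $Q$, and reassembling the matching move of $(y)P+(y)Q$; the converse direction is symmetric.

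Law 5 is the degenerate case and is the cleanest: if $y$ is the subject of $\alpha$ then $y\in n(\alpha)$, so $\textbf{RES}_1$ is blocked, and $\textbf{OPEN}_1$/$\textbf{OPEN}_2$ cannot apply either, since they extrude the \emph{object} of an output (requiring $y\neq x$ for the subject $x$) and never fire on an input or on a subject name. Hence $(y)\alpha.P$ has no outgoing transition, exactly like $\textbf{nil}$, so the transfer condition is vacuously met and $R=\{((y)\alpha.P,\textbf{nil})\}\cup\textbf{Id}$ works.

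The main obstacle I anticipate is the bookkeeping for the bound-object clauses (items 2--4 of Definition \ref{PSB}) in laws 1 and 3, where a bound output can be extruded by $\textbf{OPEN}$: I must make sure that applying $(y)$ to a process that itself emits a bound name $\overline{x}(w)$ does not clash, which is precisely where the freshness provisos ($w\notin fn((y)P')$) and the choice of the restricted name relative to $fn$ matter. Once the interaction between $\textbf{RES}$, $\textbf{OPEN}$ and these freshness conditions is pinned down, the four sub-clauses follow uniformly and each $R$ is verified to be a strong pomset bisimulation.
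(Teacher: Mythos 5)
Your proposal is correct and follows essentially the same route as the paper: for each law you take the candidate relation $R=\{(\mathrm{LHS},\mathrm{RHS})\}\cup\textbf{Id}$ and check it is a strong pomset bisimulation against the transition rules, which is exactly what the paper does (it merely delegates the rule-by-rule verification to the corresponding static laws in CTC, whereas you sketch the $\textbf{RES}$/$\textbf{OPEN}$/$\textbf{SUM}$ case analysis and the freshness bookkeeping explicitly). Your treatment of law 5, observing that both $\textbf{RES}$ and $\textbf{OPEN}$ are blocked when $y$ is the subject of $\alpha$ so the transfer condition holds vacuously, is precisely the intended argument.
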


\begin{proof}
\begin{enumerate}
  \item It is sufficient to prove the relation $R=\{((y)P, P)|\textrm{ if }y\notin fn(P)\}\cup \textbf{Id}$ is a strong pomset bisimulation. It can be proved similarly to the proof of Static laws about restriction $\setminus$ for strong pomset bisimulation in CTC \cite{CTC}, we omit it;
  \item It is sufficient to prove the relation $R=\{((y)(z)P, (z)(y)P)\}\cup \textbf{Id}$ is a strong pomset bisimulation. It can be proved similarly to the proof of Static laws about restriction $\setminus$ for strong pomset bisimulation in CTC \cite{CTC}, we omit it;
  \item It is sufficient to prove the relation $R=\{((y)(P+Q), (y)P+(y)Q)\}\cup \textbf{Id}$ is a strong pomset bisimulation. It can be proved similarly to the proof of Static laws about restriction $\setminus$ for strong pomset bisimulation in CTC \cite{CTC}, we omit it;
  \item It is sufficient to prove the relation $R=\{((y)\alpha.P, \alpha.(y)P)|\textrm{ if }y\notin n(\alpha)\}\cup \textbf{Id}$ is a strong pomset bisimulation. It can be proved similarly to the proof of Static laws about restriction $\setminus$ for strong pomset bisimulation in CTC \cite{CTC}, we omit it;
  \item It is sufficient to prove the relation $R=\{((y)\alpha.P, \textbf{nil})|\textrm{ if }y\textrm{ is the subject of }\alpha\}\cup \textbf{Id}$ is a strong pomset bisimulation. It can be proved similarly to the proof of Static laws about restriction $\setminus$ for strong pomset bisimulation in CTC \cite{CTC}, we omit it.
\end{enumerate}
\end{proof}

\begin{theorem}[Restriction Laws for strong step bisimilarity]
The restriction laws for strong step bisimilarity are as follows.

\begin{enumerate}
  \item $(y)P\sim_s P$, if $y\notin fn(P)$;
  \item $(y)(z)P\sim_s (z)(y)P$;
  \item $(y)(P+Q)\sim_s (y)P+(y)Q$;
  \item $(y)\alpha.P\sim_s \alpha.(y)P$ if $y\notin n(\alpha)$;
  \item $(y)\alpha.P\sim_s \textbf{nil}$ if $y$ is the subject of $\alpha$.
\end{enumerate}
\end{theorem}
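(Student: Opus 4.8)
The plan is to prove each of the five laws by exhibiting a witness relation $R=\{(\text{LHS},\text{RHS})\}\cup\textbf{Id}$ and verifying that it is a strong step bisimulation, i.e. that it satisfies the four matching clauses of Definition \ref{PSB} with pomset transitions replaced by steps. Because step bisimilarity refines pomset bisimilarity only by insisting that the events in each transferred set be pairwise concurrent --- a property that restriction, summation and prefixing all respect, since none of them creates or destroys causal or conflict relations among the surviving events --- the argument runs in close parallel to the pomset case (the immediately preceding theorem), and the substance lies in tracking the bound-name side conditions of the rules $\textbf{RES}_1$, $\textbf{RES}_2$, $\textbf{OPEN}_1$, $\textbf{OPEN}_2$ of Table \ref{TRForPITC}.

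I would dispatch the structural laws (1)--(3) first. For (2), any step of $(y)(z)P$ is obtained by peeling off the two restrictions in turn via $\textbf{RES}$ or $\textbf{OPEN}$, and the governing side conditions are symmetric in $y$ and $z$; hence $(z)(y)P$ performs exactly the same steps with residuals related again by $R$, and conversely. For (3), a step of $(y)(P+Q)$ decomposes, via $\textbf{SUM}_1$/$\textbf{SUM}_2$ followed by $\textbf{RES}$ or $\textbf{OPEN}$, into a step of $(y)P$ or of $(y)Q$, which is precisely a step of $(y)P+(y)Q$, and symmetrically. For (1), when $y\notin fn(P)$ no transition of $P$ can carry $y$ among its free names, and up to $\alpha$-renaming of bound objects we may take $y\notin n(\alpha)$, so $\textbf{RES}$ always applies while $\textbf{OPEN}$ (which would need $y$ as the object of an output) is never enabled; thus $(y)P$ and $P$ perform the same steps, and the residual pair $((y)P',P')$ lies back in $R$ since $y\notin fn(P')$ as well.

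The prefix laws (4) and (5) carry the real case analysis. For (4), with $y\notin n(\alpha)$, the step $\alpha.P\xrightarrow{\alpha}P$ survives the restriction by $\textbf{RES}_1$ --- its side condition $y\notin n(\alpha)$ holds by hypothesis --- yielding $(y)\alpha.P\xrightarrow{\alpha}(y)P$, which is matched exactly by $\alpha.(y)P\xrightarrow{\alpha}(y)P$, the two residuals being identical and so in $\textbf{Id}$ (modulo the substitution commuting with the restriction when $\alpha$ is an input); one checks that $\textbf{OPEN}_1$ cannot fire because $y$ is neither the subject nor the object of $\alpha$. For (5), the key observation is that $(y)\alpha.P$ has no outgoing transition whatsoever: since $y$ is the subject of $\alpha$ we have $y\in n(\alpha)$, so $\textbf{RES}_1$ is blocked, while $\textbf{OPEN}_1$ requires the restricted name to be the object of an output (distinct from its subject) and is therefore inapplicable when $y$ is the subject. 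Hence $(y)\alpha.P$ and $\textbf{nil}$ are trivially step bisimilar.

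The main obstacle is discharging the bound-object clauses (clauses 2--4 of Definition \ref{PSB}) in the laws where $\textbf{OPEN}$ or an input prefix can fire, notably (1), (3) and (4). When one side performs a bound action --- a bound input $x(w)$ to be matched for every fresh $w$, or a bound output $\overline{x}(w)$ with $w$ chosen outside all relevant free-name sets --- I must exhibit the corresponding step on the other side and check that the freshness constraints $w\notin fn(\cdot)$ demanded by $\textbf{INPUT-ACT}$, $\textbf{OPEN}_1$ and $\textbf{OPEN}_2$ can be met simultaneously, and that the restricted name survives in the residual's free-name set exactly when it should. I would settle this using the earlier proposition permitting the bound object of a transition to be renamed to any fresh name, together with the theorem that $\equiv_\alpha$ is itself a strong step bisimulation, which lets me rename $w$ and $y$ so that each residual pair lands back in $R$; modulo this bookkeeping the five verifications are routine, which is why they reduce to the corresponding static-restriction proofs in CTC \cite{CTC}.
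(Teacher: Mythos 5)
Your proposal takes essentially the same approach as the paper: for each of the five laws the paper likewise exhibits the witness relation $R=\{(\textrm{LHS},\textrm{RHS})\}\cup\textbf{Id}$ and asserts it is a strong step bisimulation, deferring the rule-by-rule verification to the static laws about restriction in CTC \cite{CTC}. Your case analysis of $\textbf{RES}$/$\textbf{OPEN}$, the no-transition argument for law (5), and the bound-object freshness bookkeeping correctly supply exactly the details the paper omits.
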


\begin{proof}
\begin{enumerate}
  \item It is sufficient to prove the relation $R=\{((y)P, P)|\textrm{ if }y\notin fn(P)\}\cup \textbf{Id}$ is a strong step bisimulation. It can be proved similarly to the proof of Static laws about restriction $\setminus$ for strong step bisimulation in CTC \cite{CTC}, we omit it;
  \item It is sufficient to prove the relation $R=\{((y)(z)P, (z)(y)P)\}\cup \textbf{Id}$ is a strong step bisimulation. It can be proved similarly to the proof of Static laws about restriction $\setminus$ for strong step bisimulation in CTC \cite{CTC}, we omit it;
  \item It is sufficient to prove the relation $R=\{((y)(P+Q), (y)P+(y)Q)\}\cup \textbf{Id}$ is a strong step bisimulation. It can be proved similarly to the proof of Static laws about restriction $\setminus$ for strong step bisimulation in CTC \cite{CTC}, we omit it;
  \item It is sufficient to prove the relation $R=\{((y)\alpha.P, \alpha.(y)P)|\textrm{ if }y\notin n(\alpha)\}\cup \textbf{Id}$ is a strong step bisimulation. It can be proved similarly to the proof of Static laws about restriction $\setminus$ for strong step bisimulation in CTC \cite{CTC}, we omit it;
  \item It is sufficient to prove the relation $R=\{((y)\alpha.P, \textbf{nil})|\textrm{ if }y\textrm{ is the subject of }\alpha\}\cup \textbf{Id}$ is a strong step bisimulation. It can be proved similarly to the proof of Static laws about restriction $\setminus$ for strong step bisimulation in CTC \cite{CTC}, we omit it.
\end{enumerate}
\end{proof}

\begin{theorem}[Restriction Laws for strong hp-bisimilarity]
The restriction laws for strong hp-bisimilarity are as follows.

\begin{enumerate}
  \item $(y)P\sim_{hp} P$, if $y\notin fn(P)$;
  \item $(y)(z)P\sim_{hp} (z)(y)P$;
  \item $(y)(P+Q)\sim_{hp} (y)P+(y)Q$;
  \item $(y)\alpha.P\sim_{hp} \alpha.(y)P$ if $y\notin n(\alpha)$;
  \item $(y)\alpha.P\sim_{hp} \textbf{nil}$ if $y$ is the subject of $\alpha$.
\end{enumerate}
\end{theorem}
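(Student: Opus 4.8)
The plan is to follow exactly the template already used for the pomset- and step-bisimilarity restriction laws, exhibiting for each of the five items an explicit candidate relation and checking that it meets the defining clauses of a strong hp-bisimulation (Definition \ref{HHPB}). The only substantive change from those earlier proofs is that the candidate relation must now be read as a \emph{posetal} relation: each related pair carries an order-isomorphism $f$ between the two configurations, and the bisimulation clauses demand that after a matched transition $f$ extends to $f[e_1\mapsto e_2]$. Concretely I would use
$$R=\{((y)P, P)\mid y\notin fn(P)\}\cup \textbf{Id}$$
for item~(1), and analogously pair $(y)(z)P$ with $(z)(y)P$, $(y)(P+Q)$ with $(y)P+(y)Q$, $(y)\alpha.P$ with $\alpha.(y)P$ (when $y\notin n(\alpha)$), and $(y)\alpha.P$ with $\textbf{nil}$ (when $y$ is the subject of $\alpha$), in each case equipping the related configurations with the identity event correspondence.

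First I would record the operational observation that, in Table~\ref{TRForPITC}, the restriction operator $(y)$ only \emph{prunes} transitions whose label mentions $y$ (through the side conditions of $\textbf{RES}_1$/$\textbf{RES}_2$) or extrudes $y$ via $\textbf{OPEN}$, and never reorders or merges the surviving events. Consequently, for items~(1)--(4) the two sides admit exactly the same visible transitions under the chosen correspondence: when the left side fires a fresh event $e_1$, the right side fires the matching $e_2$ with the same label, and $f[e_1\mapsto e_2]$ is again an isomorphism of the enlarged configurations precisely because the causal ordering and conflict relation among the unrestricted events are untouched by $(y)$. For item~(5), $y$ being the subject of $\alpha$ makes the side condition of $\textbf{RES}$ fail while $\textbf{OPEN}$ does not apply, so $(y)\alpha.P$ has no outgoing transitions and the posetal relation $\{((y)\alpha.P,\emptyset,\textbf{nil})\}\cup\textbf{Id}$ is vacuously an hp-bisimulation.

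Because the three clauses of Definition~\ref{HHPB} split transitions into a free-action case and two bound-name cases (bound input $x(y)$ and bound output $\overline{x}(y)$), I would verify each law clause by clause, discharging the bound cases with the freshness hypothesis $y\notin n(\mathcal{E}_1,\mathcal{E}_2)$ and, wherever a name clash between the restricted name and an extruded bound name threatens, appealing to the preceding alpha-convertibility theorem so that the two names can always be kept apart. As the paper does for the pomset and step laws, each item then reduces to the corresponding \emph{Static laws about restriction} already established for CTC \cite{CTC}, lifted to the posetal setting simply by transporting the isomorphism along the matched transitions.

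The step I expect to be the main obstacle is the bound-name bookkeeping in item~(4), the commutation $(y)\alpha.P\sim_{hp}\alpha.(y)P$ when $\alpha$ is itself an input or output prefix carrying an object: here one must simultaneously track the name bound by $(y)$ and the object governed by $\alpha$, confirm that the $\textbf{OPEN}$/$\textbf{RES}$ interplay yields matching extruded names on both sides, and check that the extended isomorphism still respects causality. All the remaining verifications are routine and parallel the pomset/step arguments, so, consistent with the earlier theorems, they may be summarised rather than written out in full.
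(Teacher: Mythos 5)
Your proposal coincides with the paper's own proof: for each of the five items the paper exhibits exactly the candidate relations you propose (e.g.\ $R=\{((y)P, P)\mid y\notin fn(P)\}\cup \textbf{Id}$ for item (1), down to $R=\{((y)\alpha.P, \textbf{nil})\mid y \textrm{ the subject of } \alpha\}\cup\textbf{Id}$ for item (5)) and likewise discharges the verification by appealing to the Static laws about restriction for strong hp-bisimulation in CTC \cite{CTC}. Your extra remarks on the posetal bookkeeping, the $\textbf{RES}$/$\textbf{OPEN}$ side conditions, and the vacuity of item (5) merely spell out details the paper leaves implicit, so the approach is essentially identical and correct.
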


\begin{proof}
\begin{enumerate}
  \item It is sufficient to prove the relation $R=\{((y)P, P)|\textrm{ if }y\notin fn(P)\}\cup \textbf{Id}$ is a strong hp-bisimulation. It can be proved similarly to the proof of Static laws about restriction $\setminus$ for strong hp-bisimulation in CTC \cite{CTC}, we omit it;
  \item It is sufficient to prove the relation $R=\{((y)(z)P, (z)(y)P)\}\cup \textbf{Id}$ is a strong hp-bisimulation. It can be proved similarly to the proof of Static laws about restriction $\setminus$ for strong hp-bisimulation in CTC \cite{CTC}, we omit it;
  \item It is sufficient to prove the relation $R=\{((y)(P+Q), (y)P+(y)Q)\}\cup \textbf{Id}$ is a strong hp-bisimulation. It can be proved similarly to the proof of Static laws about restriction $\setminus$ for strong hp-bisimulation in CTC \cite{CTC}, we omit it;
  \item It is sufficient to prove the relation $R=\{((y)\alpha.P, \alpha.(y)P)|\textrm{ if }y\notin n(\alpha)\}\cup \textbf{Id}$ is a strong hp-bisimulation. It can be proved similarly to the proof of Static laws about restriction $\setminus$ for strong hp-bisimulation in CTC \cite{CTC}, we omit it;
  \item It is sufficient to prove the relation $R=\{((y)\alpha.P, \textbf{nil})|\textrm{ if }y\textrm{ is the subject of }\alpha\}\cup \textbf{Id}$ is a strong hp-bisimulation. It can be proved similarly to the proof of Static laws about restriction $\setminus$ for strong hp-bisimulation in CTC \cite{CTC}, we omit it.
\end{enumerate}
\end{proof}

\begin{theorem}[Restriction Laws for strongly hhp-bisimilarity]
The restriction laws for strongly hhp-bisimilarity are as follows.

\begin{enumerate}
  \item $(y)P\sim_{hhp} P$, if $y\notin fn(P)$;
  \item $(y)(z)P\sim_{hhp} (z)(y)P$;
  \item $(y)(P+Q)\sim_{hhp} (y)P+(y)Q$;
  \item $(y)\alpha.P\sim_{hhp} \alpha.(y)P$ if $y\notin n(\alpha)$;
  \item $(y)\alpha.P\sim_{hhp} \textbf{nil}$ if $y$ is the subject of $\alpha$.
\end{enumerate}
\end{theorem}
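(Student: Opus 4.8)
The plan is to treat the five laws uniformly, exploiting the fact that a strongly hhp-bisimulation is by definition (Definition \ref{HHPB}) nothing but a \emph{downward closed} strong hp-bisimulation. Thus for each law I would reuse essentially the posetal relation already exhibited for the Restriction Laws for strong hp-bisimilarity, and then discharge the one additional obligation: downward closure. Concretely, for law~1 I would take
$$R=\{(C_1,f,C_2)\mid C_1\in\mathcal{C}((y)P),\ C_2\in\mathcal{C}(P),\ f\textrm{ the canonical isomorphism}\}\cup\textbf{Id},$$
and analogously for the other four, where in each case $f$ is the evident label- and order-preserving bijection between the executed events on the two sides; for law~5 both processes are inactive, so $R=\textbf{Id}$ already suffices.

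First I would verify that each such $R$ is a strong hp-bisimulation. This is exactly the hp-case already settled, and it rests on matching transitions through the rules $\textbf{RES}_1$, $\textbf{RES}_2$, $\textbf{OPEN}_1$, $\textbf{OPEN}_2$ of Table \ref{TRForPITC}: every move of the left-hand process that survives the restriction side-condition $y\notin n(\alpha)$ is mirrored by the same move of the right-hand process and vice versa, with the isomorphism extended coherently as $f[e_1\mapsto e_2]$. The three transition clauses of Definition \ref{HHPB} — a fresh action $\alpha$, a bound input $x(y)$, and a bound output $\overline{x}(y)$ — are checked separately, using the transition properties (the Propositions of section \ref{sos}) to manage the fresh-name bookkeeping and the ``for all $w$'' quantifications.

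Second, and this is the genuinely new step, I would establish that $R$ is downward closed. Here I would use the structural observation that restriction never alters the causality or conflict ordering among the events it does permit; it merely prunes the set of admissible events. Consequently the canonical isomorphism $f$ in each pair is \emph{rigid}: if $(C_1,f,C_2)\subseteq(C_1',f',C_2')$ pointwise with the larger triple in $R$, then $C_1$ and $C_2$ are down-closed sub-configurations still in correspondence and $f$ is precisely the restriction of $f'$, so $(C_1,f,C_2)$ again has the canonical form and lies in $R$. The identity part is trivially downward closed, and the two parts do not interfere.

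I expect the main obstacle to be the interaction of downward closure with the bound-object clauses. When an extended pair is reached via a bound input or output, the fresh name $w$ introduced by $\textbf{OPEN}$ (or by $\textbf{INPUT-ACT}$) is shared across the two sides, and I must confirm that deleting such an event from a configuration leaves a smaller triple whose isomorphism is still well defined and name-consistent, rather than one that secretly depends on the extruded name. Since $y\notin n(\alpha)$ in law~4 and $y$ is the blocked subject in law~5, no scope extrusion actually occurs inside these particular laws, which is what keeps the argument tractable; the remaining bookkeeping is routine and parallels the corresponding CTC proof \cite{CTC}, so I would state each relation explicitly and omit the repetitive verification.
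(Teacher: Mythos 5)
Your proposal is correct and follows essentially the same route as the paper: exhibit for each law the evident candidate relation and verify it is a strongly hhp-bisimulation by matching transitions through $\textbf{RES}_1$, $\textbf{RES}_2$, $\textbf{OPEN}_1$, $\textbf{OPEN}_2$, delegating the routine verification to the corresponding restriction-law proofs in CTC \cite{CTC}. If anything, you are more careful than the paper, since you state the relations in genuinely posetal form and make the downward-closure obligation (the only feature distinguishing $\sim_{hhp}$ from $\sim_{hp}$) explicit, whereas the paper leaves both points implicit in its appeal to CTC.
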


\begin{proof}
\begin{enumerate}
  \item It is sufficient to prove the relation $R=\{((y)P, P)|\textrm{ if }y\notin fn(P)\}\cup \textbf{Id}$ is a strongly hhp-bisimulation. It can be proved similarly to the proof of Static laws about restriction $\setminus$ for strongly hhp-bisimulation in CTC \cite{CTC}, we omit it;
  \item It is sufficient to prove the relation $R=\{((y)(z)P, (z)(y)P)\}\cup \textbf{Id}$ is a strongly hhp-bisimulation. It can be proved similarly to the proof of Static laws about restriction $\setminus$ for strongly hhp-bisimulation in CTC \cite{CTC}, we omit it;
  \item It is sufficient to prove the relation $R=\{((y)(P+Q), (y)P+(y)Q)\}\cup \textbf{Id}$ is a strongly hhp-bisimulation. It can be proved similarly to the proof of Static laws about restriction $\setminus$ for strongly hhp-bisimulation in CTC \cite{CTC}, we omit it;
  \item It is sufficient to prove the relation $R=\{((y)\alpha.P, \alpha.(y)P)|\textrm{ if }y\notin n(\alpha)\}\cup \textbf{Id}$ is a strongly hhp-bisimulation. It can be proved similarly to the proof of Static laws about restriction $\setminus$ for strongly hhp-bisimulation in CTC \cite{CTC}, we omit it;
  \item It is sufficient to prove the relation $R=\{((y)\alpha.P, \textbf{nil})|\textrm{ if }y\textrm{ is the subject of }\alpha\}\cup \textbf{Id}$ is a strongly hhp-bisimulation. It can be proved similarly to the proof of Static laws about restriction $\setminus$ for strongly hhp-bisimulation in CTC \cite{CTC}, we omit it.
\end{enumerate}
\end{proof}

\begin{theorem}[Parallel laws for strong pomset bisimilarity]
The parallel laws for strong pomset bisimilarity are as follows.
\begin{enumerate}
  \item $P\parallel \textbf{nil}\sim_p P$;
  \item $P_1\parallel P_2\sim_p P_2\parallel P_1$;
  \item $(y)P_1\parallel P_2\sim_p (y)(P_1\parallel P_2)$
  \item $(P_1\parallel P_2)\parallel P_3\sim_p P_1\parallel (P_2\parallel P_3)$;
  \item $(y)(P_1\parallel P_2)\sim_p (y)P_1\parallel (y)P_2$, if $y\notin fn(P_1)\cap fn(P_2)$.
\end{enumerate}
\end{theorem}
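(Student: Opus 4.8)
The plan is to treat each of the five equations by exhibiting a candidate relation and checking it is a strong pomset bisimulation in the sense of Definition \ref{PSB}. In each case the relation pairs the two sides and adds the identity, but since the residual of a parallel (or restricted) process need not again be literally one of the listed pairs, I would work with the fully closed versions, e.g. $R=\{(Q_1\parallel Q_2, Q_2\parallel Q_1):Q_1,Q_2\in\mathcal{P}\}\cup\textbf{Id}$ for commutativity and analogously for the others, so that $R$ is actually preserved by the transitions produced by the rules of Table \ref{TRForPITC}. The verification then reduces to the following: for every pomset transition $C_1\xrightarrow{X_1}C_1'$ of the left-hand process, produce a matching $C_2\xrightarrow{X_2}C_2'$ of the right-hand process with $X_1\sim X_2$ and the residuals again related, and to discharge the four sub-clauses of Definition \ref{PSB} (fresh action, bound input, paired bound input, bound output) for the events occurring in $X_1$; and symmetrically.

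For the monoid laws (1), (2) and (4) the argument is essentially the one already carried out for CTC \cite{CTC}, the only genuinely new ingredient being the bound-name bookkeeping attached to the parallel rules. For (1), since $\textbf{nil}$ has no transitions, only $\textbf{PAR}_1$ can fire on $P\parallel\textbf{nil}$, so every move $P\parallel\textbf{nil}\xrightarrow{X}P'\parallel\textbf{nil}$ comes from $P\xrightarrow{X}P'$ and is matched identically, the residual pair $(P'\parallel\textbf{nil},P')$ lying in $R$. For (2) and (4) I would argue from the symmetry, respectively the regrouping, of the rules $\textbf{PAR}_1/\textbf{PAR}_2$, $\textbf{PAR}_3/\textbf{PAR}_4$, $\textbf{COM}$ and $\textbf{CLOSE}$: a pomset step of the composite decomposes into the contributions of the components, and the same contributions recombine, in swapped or reassociated order, into a pomset step of the other side with an isomorphic label multiset $\{\alpha,\beta\}$ (or $\{x_1(w),x_2(w)\}$). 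The provisos $bn(\alpha)\cap fn(Q)=\emptyset$ and $bn(\alpha)\cap bn(\beta)=\emptyset$ are symmetric in the two components, so they transfer directly.

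The substantive cases are (3) and (5), where restriction and composition interact. For (3) I rely on the standard convention that the bound name $y$ is chosen fresh, in particular $y\notin fn(P_2)$, which is exactly the soundness requirement this scope-extension law tacitly carries; legitimacy of the renaming is guaranteed by the preceding theorem that $\equiv_\alpha$ is a strong pomset bisimulation. Under that proviso the rules $\textbf{RES}_1/\textbf{RES}_2$ and $\textbf{OPEN}_1/\textbf{OPEN}_2$ on the left match the scope-extruding behaviour obtained on the right by applying $\textbf{PAR}$ and then restriction, so $(y)P_1\parallel P_2$ and $(y)(P_1\parallel P_2)$ offer the same pomset transitions. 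For (5) the side condition $y\notin fn(P_1)\cap fn(P_2)$ is precisely what makes the two sides agree: when $y$ is private to at most one component, no communication on $y$ between $P_1$ and $P_2$ is possible, so restricting $y$ around the whole composition has the same effect as restricting it around each component; I would split on which component (if any) contains $y$ and match transitions through $\textbf{RES}$/$\textbf{OPEN}$ on both sides.

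The main obstacle I anticipate is exactly the bound-object sub-clauses of Definition \ref{PSB} in the presence of (3) and (5). I must check that a bound output $\overline{x}(w)$ produced on one side by $\textbf{OPEN}$ (scope extrusion) is matched by a bound output on the other side for the \emph{same} fresh $w$, and that the freshness provisos $w\notin fn((y)P')$ and $y\notin n(\alpha)$ line up on both sides. Keeping the witnesses $w$ uniform across the ``for all $w$'' quantifications in clauses (2)--(4), while respecting the side condition $y\notin fn(P_1)\cap fn(P_2)$, is where the reasoning is most delicate. Once this bookkeeping is in place, each $R$ is closed under transitions in both directions, which establishes the five strong pomset bisimilarities; the proofs of the corresponding step, hp- and hhp-laws then follow the same template.
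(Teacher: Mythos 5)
Your proposal follows essentially the same route as the paper: for each law one exhibits the pairing relation plus identity and checks it is a strong pomset bisimulation in the sense of Definition \ref{PSB} against the rules of Table \ref{TRForPITC}, with the paper deferring the actual verification to the proofs of the static laws for $\parallel$ in CTC \cite{CTC}. Your closing of the relations over all processes (so that residuals such as $P'\parallel\textbf{nil}$ remain inside $R$) and your explicit freshness proviso $y\notin fn(P_2)$ in law (3) merely make precise what the paper's schematic singleton relations and omitted details tacitly assume, so there is no substantive divergence.
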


\begin{proof}
\begin{enumerate}
  \item It is sufficient to prove the relation $R=\{(P\parallel \textbf{nil}, P)\}\cup \textbf{Id}$ is a strong pomset bisimulation. It can be proved similarly to the proof of Static laws about parallel $\parallel$ for strong pomset bisimulation in CTC \cite{CTC}, we omit it;
  \item It is sufficient to prove the relation $R=\{(P_1\parallel P_2, P_2\parallel P_1)\}\cup \textbf{Id}$ is a strong pomset bisimulation. It can be proved similarly to the proof of Static laws about parallel $\parallel$ for strong pomset bisimulation in CTC \cite{CTC}, we omit it;
  \item It is sufficient to prove the relation $R=\{((y)P_1\parallel P_2, (y)(P_1\parallel P_2))\}\cup \textbf{Id}$ is a strong pomset bisimulation. It can be proved similarly to the proof of Static laws about parallel $\parallel$ for strong pomset bisimulation in CTC \cite{CTC}, we omit it;
  \item It is sufficient to prove the relation $R=\{((P_1\parallel P_2)\parallel P_3, P_1\parallel (P_2\parallel P_3))\}\cup \textbf{Id}$ is a strong pomset bisimulation. It can be proved similarly to the proof of Static laws about parallel $\parallel$ for strong pomset bisimulation in CTC \cite{CTC}, we omit it;
  \item It is sufficient to prove the relation $R=\{(y)(P_1\parallel P_2), (y)P_1\parallel (y)P_2)|\textrm{ if }y\notin fn(P_1)\cap fn(P_2)\}\cup \textbf{Id}$ is a strong pomset bisimulation. It can be proved similarly to the proof of Static laws about parallel $\parallel$ for strong pomset bisimulation in CTC \cite{CTC}, we omit it.
\end{enumerate}
\end{proof}

\begin{theorem}[Parallel laws for strong step bisimilarity]
The parallel laws for strong step bisimilarity are as follows.
\begin{enumerate}
  \item $P\parallel \textbf{nil}\sim_s P$;
  \item $P_1\parallel P_2\sim_s P_2\parallel P_1$;
  \item $(y)P_1\parallel P_2\sim_s (y)(P_1\parallel P_2)$
  \item $(P_1\parallel P_2)\parallel P_3\sim_s P_1\parallel (P_2\parallel P_3)$;
  \item $(y)(P_1\parallel P_2)\sim_s (y)P_1\parallel (y)P_2$, if $y\notin fn(P_1)\cap fn(P_2)$.
\end{enumerate}
\end{theorem}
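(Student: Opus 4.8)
The plan is to treat each of the five equations separately, in each case exhibiting an explicit candidate relation of the form $R=\{(\mathrm{LHS},\mathrm{RHS})\}\cup\textbf{Id}$ and verifying that $R$ satisfies the defining clauses of a strong step bisimulation in Definition \ref{PSB}, with pomset transitions replaced by steps. Since $\textbf{Id}$ is trivially a step bisimulation, the work reduces to checking the transfer property for the single distinguished pair in both directions; and because a step only ever adjoins a set of \emph{pairwise concurrent} events, the four sub-clauses (fresh action, bound input $x(y)$, pair of bound inputs $\{x_1(y),x_2(y)\}$, and bound output $\overline{x}(y)$) need only be revisited for those residuals that remain outside $\textbf{Id}$.

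For the purely algebraic laws (1), (2) and (4), the argument is a direct inspection of the rules $\textbf{PAR}_1$--$\textbf{PAR}_4$, $\textbf{COM}$ and $\textbf{CLOSE}$ in Table \ref{TRForPITC}. First I would show that any step of $P_1\parallel P_2$ decomposes into contributions from the two components together with their synchronizations, and that the same events, merely redistributed, are available on the other side: for commutativity a step of $P_1\parallel P_2$ induced by $\textbf{PAR}_3$ from $P_1\xrightarrow{\alpha}P_1'$ and $P_2\xrightarrow{\beta}P_2'$ is matched by the symmetric instance of $\textbf{PAR}_3$ for $P_2\parallel P_1$, and since $\alpha$ and $\beta$ are concurrent in both, the matched labels are step-equivalent, $X\sim X$. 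The crucial point is that the side conditions $bn(\alpha)\cap fn(Q)=\emptyset$ and its symmetric partner are stable under swapping or reassociating the components, so no transition is lost and the residuals fall back into $\textbf{Id}$.

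The scope laws (3) and (5) are where the bound-name bookkeeping concentrates the difficulty. Here I would case-split on whether the restricted name $y$ is extruded or retained: a transition of $(y)P_1$ either arises from $\textbf{RES}_1$/$\textbf{RES}_2$ with $y\notin n(\alpha)$, in which case it is matched by the corresponding transition of $P_1$ under restriction on the other side, or from $\textbf{OPEN}_1$/$\textbf{OPEN}_2$ producing a bound output $\overline{x}(w)$, which must be matched using $\textbf{CLOSE}$ when the partner performs the dual input, or by a fresh bound output otherwise. For the distribution law (5) the hypothesis $y\notin fn(P_1)\cap fn(P_2)$ guarantees that $y$ is free in at most one component, so pushing the restriction inward never captures a name that the other side needs, and the two transition systems coincide event for event.

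The main obstacle I anticipate is clause (3) of the step-bisimulation definition together with the $\textbf{OPEN}$/$\textbf{CLOSE}$ interaction: matching a step that contains two simultaneous bound inputs $\{x_1(y),x_2(y)\}$ (rule $\textbf{PAR}_4$), or a bound output synchronizing with an input to yield a $\tau$ under scope extrusion, requires choosing the fresh witness $w$ uniformly on both sides and checking that the side conditions $w\notin fn((z)P)\cup fn((z)Q)$ survive the rearrangement. Once the freshness of $w$ is managed by the renaming and substitution results (\emph{cf.}\ the transition-property propositions of section \ref{sos}), the residual pairs reduce to instances already in $R$, and the remaining verifications are routine and mirror the corresponding static laws about parallel for strong step bisimulation in CTC \cite{CTC}.
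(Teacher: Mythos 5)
Your proposal follows essentially the same route as the paper: for each of the five laws you exhibit the candidate relation $R=\{(\mathrm{LHS},\mathrm{RHS})\}\cup\textbf{Id}$ and verify it is a strong step bisimulation by case analysis on the rules of Table \ref{TRForPITC}, deferring the routine checks to the corresponding static laws about parallel in CTC \cite{CTC} — which is precisely what the paper's proof does, though yours spells out the $\textbf{RES}$/$\textbf{OPEN}$/$\textbf{CLOSE}$ bookkeeping and the freshness management for steps like $\{x_1(y),x_2(y)\}$ in more detail than the paper, which omits them entirely. No gap; if anything, your sketch is the more informative of the two.
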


\begin{proof}
\begin{enumerate}
  \item It is sufficient to prove the relation $R=\{(P\parallel \textbf{nil}, P)\}\cup \textbf{Id}$ is a strong step bisimulation. It can be proved similarly to the proof of Static laws about parallel $\parallel$ for strong step bisimulation in CTC \cite{CTC}, we omit it;
  \item It is sufficient to prove the relation $R=\{(P_1\parallel P_2, P_2\parallel P_1)\}\cup \textbf{Id}$ is a strong step bisimulation. It can be proved similarly to the proof of Static laws about parallel $\parallel$ for strong step bisimulation in CTC \cite{CTC}, we omit it;
  \item It is sufficient to prove the relation $R=\{((y)P_1\parallel P_2, (y)(P_1\parallel P_2))\}\cup \textbf{Id}$ is a strong step bisimulation. It can be proved similarly to the proof of Static laws about parallel $\parallel$ for strong step bisimulation in CTC \cite{CTC}, we omit it;
  \item It is sufficient to prove the relation $R=\{((P_1\parallel P_2)\parallel P_3, P_1\parallel (P_2\parallel P_3))\}\cup \textbf{Id}$ is a strong step bisimulation. It can be proved similarly to the proof of Static laws about parallel $\parallel$ for strong step bisimulation in CTC \cite{CTC}, we omit it;
  \item It is sufficient to prove the relation $R=\{(y)(P_1\parallel P_2), (y)P_1\parallel (y)P_2)|\textrm{ if }y\notin fn(P_1)\cap fn(P_2)\}\cup \textbf{Id}$ is a strong step bisimulation. It can be proved similarly to the proof of Static laws about parallel $\parallel$ for strong step bisimulation in CTC \cite{CTC}, we omit it.
\end{enumerate}
\end{proof}

\begin{theorem}[Parallel laws for strong hp-bisimilarity]
The parallel laws for strong hp-bisimilarity are as follows.
\begin{enumerate}
  \item $P\parallel \textbf{nil}\sim_{hp} P$;
  \item $P_1\parallel P_2\sim_{hp} P_2\parallel P_1$;
  \item $(y)P_1\parallel P_2\sim_{hp} (y)(P_1\parallel P_2)$
  \item $(P_1\parallel P_2)\parallel P_3\sim_{hp} P_1\parallel (P_2\parallel P_3)$;
  \item $(y)(P_1\parallel P_2)\sim_{hp} (y)P_1\parallel (y)P_2$, if $y\notin fn(P_1)\cap fn(P_2)$.
\end{enumerate}
\end{theorem}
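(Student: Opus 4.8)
The plan is to treat all five laws uniformly, following exactly the recipe already used for the summation and restriction laws above: for law $(i)$ I would exhibit an explicit posetal relation $R_i$, consisting of the pair of processes in the statement equipped with the canonical event-isomorphism $f$ between their configurations, closed up with the identity $\textbf{Id}$, and then verify that each $R_i$ is a strong hp-bisimulation in the sense of Definition \ref{HHPB}. For the monoid-style law I would take
$$R_1=\{(C_1,f,C_2)\mid C_1\in\mathcal{C}(P\parallel\textbf{nil}),\ C_2\in\mathcal{C}(P),\ f\text{ the canonical isomorphism}\}\cup\textbf{Id},$$
and analogous relations $R_2,\dots,R_5$ for commutativity, scope extension, associativity, and restriction-over-parallel; in each case $f$ is the obvious reindexing of events (for $(2)$ it swaps the two sides, for $(4)$ it reassociates the triple, for $(3)$ and $(5)$ it relates the restricted configuration to the extruded one). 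Since we are proving hp- and not hhp-bisimilarity, I note at the outset that downward closure of $R_i$ is \emph{not} required here, which is precisely what will distinguish this proof from the hhp-version that follows.

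Because hp-bisimulation inspects single-event moves, I would check the transfer property by case analysis on which single-event rule of Table \ref{TRForPITC} fires for $P_1\parallel P_2$: a one-sided action via $\textbf{PAR}_1$ or $\textbf{PAR}_2$, or a synchronisation $\tau$ via $\textbf{COM}$ or $\textbf{CLOSE}$ (the step rules $\textbf{PAR}_3$, $\textbf{PAR}_4$ carry set-labels and belong to the pomset/step versions, so they are not used in the hp-game). In each case I produce the matching move on the other side, observe that the label is literally the same, and set the new isomorphism to $f[e_1\mapsto e_2]$. The substance of the argument is then that $f[e_1\mapsto e_2]$ is again a configuration isomorphism, i.e.\ that it preserves causality and concurrency. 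For the structural laws $(1)$, $(2)$, $(4)$ this is immediate, since the causal order on a configuration of $P_1\parallel P_2$ is the disjoint sum of the orders inherited from $P_1$ and from $P_2$, so commuting or reassociating the components leaves the poset structure unchanged up to the canonical bijection; the converse direction required by the ``and vice-versa'' clause of Definition \ref{HHPB} is symmetric.

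The delicate cases are the restriction-over-parallel laws $(3)$ and $(5)$, and this is where I expect the main obstacle. Here the bound-name side conditions of $\textbf{PAR}_1$, $\textbf{PAR}_2$, $\textbf{OPEN}_1$ and $\textbf{CLOSE}$ come into play, and matching a move requires the second and third clauses of Definition \ref{HHPB} (bound input $x(y)$ and bound output $\overline{x}(y)$) and not merely the free-action clause. For law $(3)$, $(y)P_1\parallel P_2\sim_{hp}(y)(P_1\parallel P_2)$, I would invoke the propositions on transitions under substitution and $\alpha$-conversion established in Section \ref{sos} to show that a bound output $\overline{x}(y)$ extruded from the left component via $\textbf{OPEN}_1$ can always be replayed with a fresh $w\notin fn((y)P_1')$ so that the two sides stay related, and symmetrically that a $\textbf{CLOSE}$ synchronisation producing $(w)(P'\parallel Q')$ is matched by the corresponding $\tau$. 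Law $(5)$ additionally needs the hypothesis $y\notin fn(P_1)\cap fn(P_2)$ to guarantee that restricting $y$ on each component separately severs no causal or communication link present in $(y)(P_1\parallel P_2)$; verifying that the isomorphism $f$ survives this redistribution of the binder, uniformly over all choices of extruded fresh name, is the real work. Once these freshness bookkeeping obligations are discharged, confirming that each $R_i$ satisfies the three clauses of Definition \ref{HHPB} together with its converse is routine, paralleling the static laws about parallel $\parallel$ for strong hp-bisimulation in CTC \cite{CTC}.
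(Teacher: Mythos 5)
Your proposal is correct and follows essentially the same route as the paper: the paper's proof likewise exhibits, for each law, the candidate relation $R=\{(\cdot,\cdot)\}\cup\textbf{Id}$ and declares it a strong hp-bisimulation, deferring the verification to the proof of the static laws about parallel $\parallel$ for strong hp-bisimulation in CTC \cite{CTC}, exactly the precedent you invoke. The only difference is one of detail: you spell out the posetal form of the relation, the single-event case analysis on the rules of Table \ref{TRForPITC}, and the freshness bookkeeping for laws (3) and (5), all of which the paper leaves implicit in its citation of CTC.
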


\begin{proof}
\begin{enumerate}
  \item It is sufficient to prove the relation $R=\{(P\parallel \textbf{nil}, P)\}\cup \textbf{Id}$ is a strong hp-bisimulation. It can be proved similarly to the proof of Static laws about parallel $\parallel$ for strong hp-bisimulation in CTC \cite{CTC}, we omit it;
  \item It is sufficient to prove the relation $R=\{(P_1\parallel P_2, P_2\parallel P_1)\}\cup \textbf{Id}$ is a strong hp-bisimulation. It can be proved similarly to the proof of Static laws about parallel $\parallel$ for strong hp-bisimulation in CTC \cite{CTC}, we omit it;
  \item It is sufficient to prove the relation $R=\{((y)P_1\parallel P_2, (y)(P_1\parallel P_2))\}\cup \textbf{Id}$ is a strong hp-bisimulation. It can be proved similarly to the proof of Static laws about parallel $\parallel$ for strong hp-bisimulation in CTC \cite{CTC}, we omit it;
  \item It is sufficient to prove the relation $R=\{((P_1\parallel P_2)\parallel P_3, P_1\parallel (P_2\parallel P_3))\}\cup \textbf{Id}$ is a strong hp-bisimulation. It can be proved similarly to the proof of Static laws about parallel $\parallel$ for strong hp-bisimulation in CTC \cite{CTC}, we omit it;
  \item It is sufficient to prove the relation $R=\{(y)(P_1\parallel P_2), (y)P_1\parallel (y)P_2)|\textrm{ if }y\notin fn(P_1)\cap fn(P_2)\}\cup \textbf{Id}$ is a strong hp-bisimulation. It can be proved similarly to the proof of Static laws about parallel $\parallel$ for strong hp-bisimulation in CTC \cite{CTC}, we omit it.
\end{enumerate}
\end{proof}

\begin{theorem}[Parallel laws for strongly hhp-bisimilarity]
The parallel laws for strongly hhp-bisimilarity are as follows.
\begin{enumerate}
  \item $P\parallel \textbf{nil}\sim_{hhp} P$;
  \item $P_1\parallel P_2\sim_{hhp} P_2\parallel P_1$;
  \item $(y)P_1\parallel P_2\sim_{hhp} (y)(P_1\parallel P_2)$
  \item $(P_1\parallel P_2)\parallel P_3\sim_{hhp} P_1\parallel (P_2\parallel P_3)$;
  \item $(y)(P_1\parallel P_2)\sim_{hhp} (y)P_1\parallel (y)P_2$, if $y\notin fn(P_1)\cap fn(P_2)$.
\end{enumerate}
\end{theorem}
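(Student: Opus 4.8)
The plan is to treat the five laws uniformly, following the template already used for the pomset, step, and hp cases: for each law I would exhibit a candidate posetal relation $R$ and verify that it is a strongly hhp-bisimulation. Concretely, for law~(1) I take $R=\{(P\parallel\textbf{nil},P)\}\cup\textbf{Id}$; for law~(2), $R=\{(P_1\parallel P_2,P_2\parallel P_1)\}\cup\textbf{Id}$; for law~(3), $R=\{((y)P_1\parallel P_2,(y)(P_1\parallel P_2))\}\cup\textbf{Id}$; for law~(4), $R=\{((P_1\parallel P_2)\parallel P_3,P_1\parallel(P_2\parallel P_3))\}\cup\textbf{Id}$; and for law~(5), $R=\{((y)(P_1\parallel P_2),(y)P_1\parallel(y)P_2)\mid y\notin fn(P_1)\cap fn(P_2)\}\cup\textbf{Id}$, each equipped with the identity isomorphism between the paired configurations. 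Since a strongly hhp-bisimulation is by Definition~\ref{HHPB} precisely a downward closed strong hp-bisimulation, the work splits into two parts: showing $R$ is a strong hp-bisimulation, and showing $R$ is downward closed.

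First I would establish the hp-bisimulation clauses. For each law, I take a triple $(C_1,f,C_2)\in R$ together with a transition of the left-hand process and use the rules of Table~\ref{TRForPITC} to produce a matching transition of the right-hand process, updating the isomorphism to $f[e_1\mapsto e_2]$. The three clauses of Definition~\ref{HHPB} must be checked according to whether the fired event is a free action, a bound input $x(y)$, or a bound output $\overline{x}(y)$. The matching transitions are supplied by $\textbf{PAR}_1$--$\textbf{PAR}_4$, $\textbf{COM}$, and $\textbf{CLOSE}$ for the parallel operator, together with $\textbf{RES}_1$, $\textbf{RES}_2$, $\textbf{OPEN}_1$, and $\textbf{OPEN}_2$ for the restricted laws~(3) and~(5). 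As in the corresponding CTC proofs of the static parallel laws, the structural correspondence is immediate; the only genuinely $\pi_{tc}$-specific bookkeeping is the side conditions on bound names, namely ensuring $bn(\alpha)\cap fn(\cdot)=\emptyset$ and choosing the extruded name $w$ fresh, so that $\textbf{CLOSE}$ and $\textbf{OPEN}$ fire on both sides with the same bound object.

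The main obstacle, and the only point distinguishing this theorem from the hp-case, is downward closure. Here I would argue that each candidate $R$ is generated by a structural operation stable under passing to sub-configurations: if $(C_1,f,C_2)\in R$ witnesses one of the five laws and $(C_1',f',C_2')\subseteq(C_1,f,C_2)$ pointwise, then $C_1'$ and $C_2'$ are again configurations of the same two processes related by the restriction of the same isomorphism, so $(C_1',f',C_2')$ again lies in $R$, falling into the identity part $\textbf{Id}$ when it is not of the displayed form. The delicate case is law~(5), where the hypothesis $y\notin fn(P_1)\cap fn(P_2)$ governs whether the restriction $(y)$ may be distributed across $\parallel$; I would check that this condition is preserved under taking sub-configurations, since restricting to a downward closed subset cannot introduce new free occurrences of $y$. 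Granting downward closure, each $R$ is a strongly hhp-bisimulation, and since $(\emptyset,\emptyset,\emptyset)\in\textbf{Id}\subseteq R$, the five equivalences follow. The detailed verification mirrors the proof of the static laws about parallel $\parallel$ for strongly hhp-bisimulation in CTC~\cite{CTC}.
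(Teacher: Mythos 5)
Your proposal takes essentially the same route as the paper: for each of the five laws you exhibit exactly the candidate relation $R=\{(\textrm{LHS},\textrm{RHS})\}\cup\textbf{Id}$ that the paper uses, verify the strong hp-bisimulation clauses via the rules of Table \ref{TRForPITC}, and defer the detailed case analysis to the proof of the static laws about parallel $\parallel$ for strongly hhp-bisimulation in CTC \cite{CTC}, which is precisely what the paper does. Your additional sketch of why downward closure holds (restriction of the isomorphism to sub-configurations, with the name condition in law~(5) preserved) only fills in detail the paper omits, so the two proofs coincide in approach.
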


\begin{proof}
\begin{enumerate}
  \item It is sufficient to prove the relation $R=\{(P\parallel \textbf{nil}, P)\}\cup \textbf{Id}$ is a strongly hhp-bisimulation. It can be proved similarly to the proof of Static laws about parallel $\parallel$ for strongly hhp-bisimulation in CTC \cite{CTC}, we omit it;
  \item It is sufficient to prove the relation $R=\{(P_1\parallel P_2, P_2\parallel P_1)\}\cup \textbf{Id}$ is a strongly hhp-bisimulation. It can be proved similarly to the proof of Static laws about parallel $\parallel$ for strongly hhp-bisimulation in CTC \cite{CTC}, we omit it;
  \item It is sufficient to prove the relation $R=\{((y)P_1\parallel P_2, (y)(P_1\parallel P_2))\}\cup \textbf{Id}$ is a strongly hhp-bisimulation. It can be proved similarly to the proof of Static laws about parallel $\parallel$ for strongly hhp-bisimulation in CTC \cite{CTC}, we omit it;
  \item It is sufficient to prove the relation $R=\{((P_1\parallel P_2)\parallel P_3, P_1\parallel (P_2\parallel P_3))\}\cup \textbf{Id}$ is a strongly hhp-bisimulation. It can be proved similarly to the proof of Static laws about parallel $\parallel$ for strongly hhp-bisimulation in CTC \cite{CTC}, we omit it;
  \item It is sufficient to prove the relation $R=\{(y)(P_1\parallel P_2), (y)P_1\parallel (y)P_2)|\textrm{ if }y\notin fn(P_1)\cap fn(P_2)\}\cup \textbf{Id}$ is a strongly hhp-bisimulation. It can be proved similarly to the proof of Static laws about parallel $\parallel$ for strongly hhp-bisimulation in CTC \cite{CTC}, we omit it.
\end{enumerate}
\end{proof}

\begin{theorem}[Expansion law for truly concurrent bisimilarities]
Let $P\equiv\sum_i \alpha_i.P_i$ and $Q\equiv\sum_j\beta_j.Q_j$, where $bn(\alpha_i)\cap fn(Q)=\emptyset$ for all $i$, and $bn(\beta_j)\cap fn(P)=\emptyset$ for all $j$. Then

\begin{enumerate}
  \item $P\parallel Q\sim_p \sum_i\sum_j (\alpha_i\parallel \beta_j).(P_i\parallel Q_j)+\sum_{\alpha_i \textrm{ comp }\beta_j}\tau.R_{ij}$;
  \item $P\parallel Q\sim_s \sum_i\sum_j (\alpha_i\parallel \beta_j).(P_i\parallel Q_j)+\sum_{\alpha_i \textrm{ comp }\beta_j}\tau.R_{ij}$;
  \item $P\parallel Q\sim_{hp} \sum_i\sum_j (\alpha_i\parallel \beta_j).(P_i\parallel Q_j)+\sum_{\alpha_i \textrm{ comp }\beta_j}\tau.R_{ij}$;
  \item $P\parallel Q\nsim_{hhp} \sum_i\sum_j (\alpha_i\parallel \beta_j).(P_i\parallel Q_j)+\sum_{\alpha_i \textrm{ comp }\beta_j}\tau.R_{ij}$.
\end{enumerate}

Where $\alpha_i$ comp $\beta_j$ and $R_{ij}$ are defined as follows:
\begin{enumerate}
  \item $\alpha_i$ is $\overline{x}u$ and $\beta_j$ is $x(v)$, then $R_{ij}=P_i\parallel Q_j\{u/v\}$;
  \item $\alpha_i$ is $\overline{x}(u)$ and $\beta_j$ is $x(v)$, then $R_{ij}=(w)(P_i\{w/u\}\parallel Q_j\{w/v\})$, if $w\notin fn((u)P_i)\cup fn((v)Q_j)$;
  \item $\alpha_i$ is $x(v)$ and $\beta_j$ is $\overline{x}u$, then $R_{ij}=P_i\{u/v\}\parallel Q_j$;
  \item $\alpha_i$ is $x(v)$ and $\beta_j$ is $\overline{x}(u)$, then $R_{ij}=(w)(P_i\{w/v\}\parallel Q_j\{w/u\})$, if $w\notin fn((v)P_i)\cup fn((u)Q_j)$.
\end{enumerate}
\end{theorem}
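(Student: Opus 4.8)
The plan is to prove the three positive statements (1)--(3) by exhibiting a single witness relation, and to prove the negative statement (4) by refutation. For (1)--(3) I would take $R=\{(P\parallel Q,\ \sum_i\sum_j(\alpha_i\parallel\beta_j).(P_i\parallel Q_j)+\sum_{\alpha_i\textrm{ comp }\beta_j}\tau.R_{ij})\}\cup\textbf{Id}$, refined to a posetal relation carrying the order-isomorphism $f$ in the hp-case, and show it is a strong pomset (resp.\ step, hp) bisimulation. This follows the pattern already used for the Parallel laws above and reduces, on the name-free skeleton, to the expansion law proof for CTC \cite{CTC}.

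The core of the positive cases is the transfer property on the initial transitions of $P\parallel Q$. Because both $P$ and $Q$ are guarded sums, the only applicable rules are $\textbf{PAR}_3$ (and $\textbf{PAR}_4$ when both prefixes are inputs sharing a bound name), which yield a concurrent step $P\parallel Q\xrightarrow{\{\alpha_i,\beta_j\}}P_i\parallel Q_j$ whenever $\beta_j\neq\overline{\alpha_i}$ and the bound-name side conditions hold, and $\textbf{COM}$/$\textbf{CLOSE}$, which yield $P\parallel Q\xrightarrow{\tau}R_{ij}$ exactly when $\alpha_i$ comp $\beta_j$; the rules $\textbf{PAR}_1,\textbf{PAR}_2$ contribute only in the degenerate case where one factor is stuck. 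Each such move is matched by the corresponding summand on the right: the step by $(\alpha_i\parallel\beta_j).(P_i\parallel Q_j)$ through the step prefix together with $\textbf{SUM}$, and the communication by $\tau.R_{ij}$ through $\textbf{TAU-ACT}$ together with $\textbf{SUM}$. In every case the two residuals are the same process, so membership in $R$ is witnessed by $\textbf{Id}$, and the converse direction is symmetric; for the hp-case the isomorphism $f$ is extended across the matched step by $\{\alpha_i,\beta_j\}\mapsto\{\alpha_i,\beta_j\}$ and is the identity thereafter.

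The mobile ingredients require care rather than new ideas. The four cases of $\alpha_i$ comp $\beta_j$ and their $R_{ij}$ involve the bound output/input rules $\textbf{OPEN}$, $\textbf{INPUT-ACT}$ and the restriction generated by $\textbf{CLOSE}$, so the matching of residuals must be taken up to $\equiv_{\alpha}$ and the freshness provisos $w\notin fn((u)P_i)\cup fn((v)Q_j)$ must be respected. Here I would invoke the Propositions of the subsection ``Properties of Transitions'' to relabel bound names and to transport transitions along substitutions, exactly as the $R_{ij}$ are phrased with $\{w/u\},\{w/v\}$; this guarantees that the step and communication summands generated on the right coincide with the left residuals after the relevant renamings.

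The hard part is statement (4), which is a non-bisimilarity and hence cannot be settled by producing a relation. The plan is to exhibit a concrete counterexample in which $P$ and $Q$ are each a proper sum, so that the right-hand side distributes the composition over the two sums and thereby splits a behaviour shared by several summands into distinct, mutually conflicting copies, whereas on the left the two components still choose independently. One then argues that no downward closed hp-bisimulation can relate the two sides: a matched maximal triple forces, by downward closure, the matching of a partial sub-configuration in which only the shared event has occurred, and from that sub-configuration the committed right-hand summand can no longer offer an alternative continuation that the left still offers. The decisive point is that the forward-only clauses of Definitions \ref{PSB} and \ref{HHPB} never inspect these sub-configurations, which is precisely why (1)--(3) go through while the hereditary requirement of hhp-bisimilarity is violated; establishing this cleanly, and reconciling the step-level transitions of Table \ref{TRForPITC} with the event-level downward closure, is the main obstacle, and I would lean on the corresponding hhp counterexample in CTC \cite{CTC}.
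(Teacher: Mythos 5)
Your proposal matches the paper's proof in both structure and substance: for (1)--(3) the paper uses exactly your witness relation $R=\{(P\parallel Q,\ \sum_i\sum_j(\alpha_i\parallel\beta_j).(P_i\parallel Q_j)+\sum_{\alpha_i\textrm{ comp }\beta_j}\tau.R_{ij})\}\cup\textbf{Id}$ and likewise defers the transition analysis to the Expansion-law proof in CTC, and for (4) the paper gives precisely the counterexample you sketch (e.g.\ $(a+b)\parallel c$ versus $(a\parallel c)+(b\parallel c)$, with one side a proper sum sufficing), showing hp-bisimilarity holds while downward closure to the sub-configuration containing only the shared event fails, so no hhp-bisimulation exists. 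Your write-up is, if anything, more explicit than the paper's about the role of $\textbf{PAR}_3$/$\textbf{PAR}_4$, $\textbf{COM}$/$\textbf{CLOSE}$, and the freshness/$\equiv_{\alpha}$ bookkeeping, but the approach is the same.
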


\begin{proof}
\begin{enumerate}
  \item It is sufficient to prove the relation $R=\{(P\parallel Q, \sum_i\sum_j (\alpha_i\parallel \beta_j).(P_i\parallel Q_j)+\sum_{\alpha_i \textrm{ comp }\beta_j}\tau.R_{ij})|\textrm{ if }y\notin fn(P)\}\cup \textbf{Id}$ is a strong pomset bisimulation. It can be proved similarly to the proof of Expansion law for strong pomset bisimulation in CTC \cite{CTC}, we omit it;
  \item It is sufficient to prove the relation $R=\{(P\parallel Q, \sum_i\sum_j (\alpha_i\parallel \beta_j).(P_i\parallel Q_j)+\sum_{\alpha_i \textrm{ comp }\beta_j}\tau.R_{ij})|\textrm{ if }y\notin fn(P)\}\cup \textbf{Id}$ is a strong step bisimulation. It can be proved similarly to the proof of Expansion law for strong step bisimulation in CTC \cite{CTC}, we omit it;
  \item It is sufficient to prove the relation $R=\{(P\parallel Q, \sum_i\sum_j (\alpha_i\parallel \beta_j).(P_i\parallel Q_j)+\sum_{\alpha_i \textrm{ comp }\beta_j}\tau.R_{ij})|\textrm{ if }y\notin fn(P)\}\cup \textbf{Id}$ is a strong hp-bisimulation. It can be proved similarly to the proof of Expansion law for strong hp-bisimulation in CTC \cite{CTC}, we omit it;
  \item We just prove that for free actions $a,b,c$, let $s_1=(a+ b)\parallel c$, $t_1=(a\parallel c)+ (b\parallel c)$, and $s_2=a\parallel (b+ c)$, $t_2=(a\parallel b)+ (a\parallel c)$. We know that $s_1\sim_{hp} t_1$ and $s_2\sim_{hp} t_2$, we prove that $s_1\nsim_{hhp} t_1$ and $s_2\nsim_{hhp} t_2$. Let $(C(s_1),f_1,C(t_1))$ and $(C(s_2),f_2,C(t_2))$ are the corresponding posetal products.
    \begin{itemize}
        \item $s_1\nsim_{hhp} t_1$. $s_1\xrightarrow{\{a,c\}}\surd(s_1')$ ($C(s_1)\xrightarrow{\{a,c\}}C(s_1')$), then $t_1\xrightarrow{\{a,c\}}\surd(t_1')$ ($C(t_1)\xrightarrow{\{a,c\}}C(t_1')$), we define $f_1'=f_1[a\mapsto a, c\mapsto c]$, obviously, $(C(s_1),f_1,C(t_1))\in \sim_{hp}$ and $(C(s_1'),f_1',C(t_1'))\in \sim_{hp}$. But, $(C(s_1),f_1,C(t_1))\in \sim_{hhp}$ and $(C(s_1'),f_1',C(t_1'))\in \nsim_{hhp}$, just because they are not downward closed. Let $(C(s_1''),f_1'',C(t_1''))$, and $f_1''=f_1[c\mapsto c]$, $s_1\xrightarrow{c}s_1''$ ($C(s_1)\xrightarrow{c}C(s_1'')$), $t_1\xrightarrow{c}t_1''$ ($C(t_1)\xrightarrow{c}C(t_1'')$), it is easy to see that $(C(s_1''),f_1'',C(t_1''))\subseteq (C(s_1'),f_1',C(t_1'))$ pointwise, while $(C(s_1''),f_1'',C(t_1''))\notin \sim_{hp}$, because $s_1''$ and $C(s_1'')$ exist, but $t_1''$ and $C(t_1'')$ do not exist.
        \item $s_2\nsim_{hhp} t_2$. $s_2\xrightarrow{\{a,c\}}\surd(s_2')$ ($C(s_2)\xrightarrow{\{a,c\}}C(s_2')$), then $t_2\xrightarrow{\{a,c\}}\surd(t_2')$ ($C(t_2)\xrightarrow{\{a,c\}}C(t_2')$), we define $f_2'=f_2[a\mapsto a, c\mapsto c]$, obviously, $(C(s_2),f_2,C(t_2))\in \sim_{hp}$ and $(C(s_2'),f_2',C(t_2'))\in \sim_{hp}$. But, $(C(s_2),f_2,C(t_2))\in \sim_{hhp}$ and $(C(s_2'),f_2',C(t_2'))\in \nsim_{hhp}$, just because they are not downward closed. Let $(C(s_2''),f_2'',C(t_2''))$, and $f_2''=f_2[a\mapsto a]$, $s_2\xrightarrow{a}s_2''$ ($C(s_2)\xrightarrow{a}C(s_2'')$), $t_2\xrightarrow{a}t_2''$ ($C(t_2)\xrightarrow{a}C(t_2'')$), it is easy to see that $(C(s_2''),f_2'',C(t_2''))\subseteq (C(s_2'),f_2',C(t_2'))$ pointwise, while $(C(s_2''),f_2'',C(t_2''))\notin \sim_{hp}$, because $s_2''$ and $C(s_2'')$ exist, but $t_2''$ and $C(t_2'')$ do not exist.
    \end{itemize}
\end{enumerate}
\end{proof}

\begin{theorem}[Equivalence and congruence for strong pomset bisimilarity]
\begin{enumerate}
  \item $\sim_p$ is an equivalence relation;
  \item If $P\sim_p Q$ then
  \begin{enumerate}
    \item $\alpha.P\sim_p \alpha.Q$, $\alpha$ is a free action;
    \item $P+R\sim_p Q+R$;
    \item $P\parallel R\sim_p Q\parallel R$;
    \item $(w)P\sim_p (w)Q$;
    \item $x(y).P\sim_p x(y).Q$.
  \end{enumerate}
\end{enumerate}
\end{theorem}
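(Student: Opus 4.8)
The plan is to prove every clause by exhibiting a witnessing strong pomset bisimulation in the sense of Definition \ref{PSB}, and to verify the bisimulation conditions by a case analysis on the last transition rule of Table \ref{TRForPITC} used to derive each move. For clause 1, I would treat the three equivalence properties separately. Reflexivity holds because $\textbf{Id}$ is trivially a strong pomset bisimulation: any pomset transition $C\xrightarrow{X}C'$ is matched by itself with $X\sim X$, and the fresh-action, input, concurrent-input and bound-output subclauses are each satisfied by reusing the same move. Symmetry is immediate from the ``and vice versa'' clause of Definition \ref{PSB}, since if $R$ witnesses $\mathcal{E}_1\sim_p\mathcal{E}_2$ then $R^{-1}=\{(C_2,C_1)\mid(C_1,C_2)\in R\}$ witnesses $\mathcal{E}_2\sim_p\mathcal{E}_1$. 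For transitivity, given witnesses $R_1$ for $P\sim_p Q$ and $R_2$ for $Q\sim_p S$, I would show the relational composite $R_1\circ R_2$ is a strong pomset bisimulation: a move of $P$ is relayed through $Q$ to $S$, using transitivity of the pomset isomorphism ($X_1\sim X_2$ and $X_2\sim X_3$ give $X_1\sim X_3$) and composing the matches supplied by the four subclauses.

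For clause 2, each congruence property starts from a witness $S$ for $P\sim_p Q$ and builds a candidate relation which I then verify to be a strong pomset bisimulation. For (a), prefix by a free action, I take $R=\{(\alpha.P,\alpha.Q)\}\cup S$; the only initial move of $\alpha.P$ is $\xrightarrow{\alpha}P$ by \textbf{TAU-ACT} or \textbf{OUTPUT-ACT}, matched by $\alpha.Q\xrightarrow{\alpha}Q$ and landing in $S$. For (b), summation, I take $R=\{(P+T,Q+T)\}\cup S\cup\textbf{Id}$ and dispatch on whether an initial move of $P+T$ arises from the $P$-summand (matched through $S$ using $\textbf{SUM}_1/\textbf{SUM}_2$) or from $T$ (matched by itself, landing in $\textbf{Id}$). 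For (d), restriction, I take $R=\{((w)P',(w)Q')\mid(P',Q')\in S\}$ and split on whether the move of $(w)P'$ is produced by $\textbf{RES}_1/\textbf{RES}_2$ (inheriting a move of $P'$ with $w\notin n(\alpha)$) or by $\textbf{OPEN}_1/\textbf{OPEN}_2$ (a bound output extruding $w$); the matching move of $(w)Q'$ is obtained from $S$ and closed under the same rule.

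The two delicate cases are (c) parallel composition and (e) input prefix. For (c) I take $R=\{(P'\parallel T,Q'\parallel T)\mid(P',Q')\in S\}$, closed under restriction so that the residuals $(w)(P''\parallel T')$ produced by \textbf{CLOSE} remain related, and I perform a case analysis over $\textbf{PAR}_1$ through $\textbf{PAR}_4$, \textbf{COM} and \textbf{CLOSE}. A move of $P'$ alone is relayed through $S$ with $T$ unchanged; a move of $T$ alone is matched trivially; a synchronization (\textbf{COM} or \textbf{CLOSE}) between $P'$ and $T$ must be reproduced by a synchronization between the $S$-image of $P'$ and $T$, which is exactly why the input and bound-output subclauses of Definition \ref{PSB} are needed rather than only the fresh-action subclause. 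Throughout I would use the earlier propositions on $fn$, $bn$ and alpha-renaming to choose the extruded/bound names uniformly fresh, preserving side conditions such as $bn(\alpha)\cap fn(T)=\emptyset$ and $bn(\alpha)\cap bn(\beta)=\emptyset$ and keeping $R$ well defined. For (e), the initial move is $x(y).P\xrightarrow{x(w)}P\{w/y\}$ by \textbf{INPUT-ACT}, and since these bisimilarities are \emph{late}, Definition \ref{PSB} requires the residual match to hold for all $w$; hence I must exhibit $P\{w/y\}\sim_p Q\{w/y\}$ for every $w$, which follows from $P\sim_p Q$ together with the preservation of $\sim_p$ under name substitution noted immediately before the theorem.

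I expect the main obstacle to be the parallel case (c): one must keep the pomset labels synchronized while tracking the origin of each event, and in particular handle \textbf{COM} and \textbf{CLOSE}, where the matched move on the $Q'$-side has to reconstitute a synchronization by combining an output (possibly bound, via \textbf{OPEN}) with an input supplied through $S$, all while respecting the freshness conditions on extruded names. This is precisely the point at which the bound-object subclauses of the bisimulation definition and the uniform fresh-name choices (justified by the $fn$/$bn$ propositions and alpha-conversion) are indispensable for showing the candidate relation is genuinely a strong pomset bisimulation.
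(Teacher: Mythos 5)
Your proposal is correct and takes essentially the same route as the paper, which merely names candidate relations such as $\{(\alpha.P,\alpha.Q)\}\cup\textbf{Id}$ and defers the verification to the corresponding congruence proofs in CTC, whereas you carry out the verification by case analysis on the rules of Table \ref{TRForPITC}. If anything, your witnesses improve on the paper's literal ones: the restriction-closed relation $\{(P'\parallel T,\,Q'\parallel T)\mid (P',Q')\in S\}$ is genuinely transition-closed where the paper's single-pair relation $\{(P\parallel R,\,Q\parallel R)\}\cup\textbf{Id}$ is not (because of residuals under $\textbf{COM}$/$\textbf{CLOSE}$), and your explicit appeal, in the input-prefix case, to the substitution-closure of $\sim_p$ asserted just before the theorem correctly isolates the late-bisimilarity subtlety that the paper leaves implicit.
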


\begin{proof}
\begin{enumerate}
  \item It is sufficient to prove that $\sim_p$ is reflexivity, symmetry, and transitivity, we omit it.
  \item If $P\sim_p Q$, then
  \begin{enumerate}
    \item it is sufficient to prove the relation $R=\{(\alpha.P, \alpha.Q)|\alpha \textrm{ is a free action}\}\cup \textbf{Id}$ is a strong pomset bisimulation. It can be proved similarly to the proof of congruence for strong pomset bisimulation in CTC \cite{CTC}, we omit it;
    \item it is sufficient to prove the relation $R=\{(P+R, Q+R)\}\cup \textbf{Id}$ is a strong pomset bisimulation. It can be proved similarly to the proof of congruence for strong pomset bisimulation in CTC \cite{CTC}, we omit it;
    \item it is sufficient to prove the relation $R=\{(P\parallel R, Q\parallel R)\}\cup \textbf{Id}$ is a strong pomset bisimulation. It can be proved similarly to the proof of congruence for strong pomset bisimulation in CTC \cite{CTC}, we omit it;
    \item it is sufficient to prove the relation $R=\{((w)P, (w).Q)\}\cup \textbf{Id}$ is a strong pomset bisimulation. It can be proved similarly to the proof of congruence for strong pomset bisimulation in CTC \cite{CTC}, we omit it;
    \item it is sufficient to prove the relation $R=\{(x(y).P, x(y).Q)\}\cup \textbf{Id}$ is a strong pomset bisimulation. It can be proved similarly to the proof of congruence for strong pomset bisimulation in CTC \cite{CTC}, we omit it.
  \end{enumerate}
\end{enumerate}
\end{proof}

\begin{theorem}[Equivalence and congruence for strong step bisimilarity]
\begin{enumerate}
  \item $\sim_s$ is an equivalence relation;
  \item If $P\sim_s Q$ then
  \begin{enumerate}
    \item $\alpha.P\sim_s \alpha.Q$, $\alpha$ is a free action;
    \item $P+R\sim_s Q+R$;
    \item $P\parallel R\sim_s Q\parallel R$;
    \item $(w)P\sim_s (w)Q$;
    \item $x(y).P\sim_s x(y).Q$.
  \end{enumerate}
\end{enumerate}
\end{theorem}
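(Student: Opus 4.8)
The plan is to mirror the preceding proof for strong pomset bisimilarity, but to restrict throughout to \emph{steps}, i.e. transitions whose constituent events are pairwise concurrent. For part (1) I would establish the three defining properties of an equivalence directly at the level of strong step bisimulations. Reflexivity holds because $\mathbf{Id}$ is a strong step bisimulation and contains $(\emptyset,\emptyset)$, giving $P\sim_s P$. Symmetry is immediate from the ``and vice-versa'' clause of Definition \ref{PSB}: the inverse $R^{-1}$ of a strong step bisimulation $R$ is again one. For transitivity I would take two strong step bisimulations $R_1,R_2$ with $\mathcal{E}_1\mathrel{R_1}\mathcal{E}_2\mathrel{R_2}\mathcal{E}_3$ and show their composition $R_1\circ R_2$ is a strong step bisimulation, matching a step of $\mathcal{E}_1$ first through $\mathcal{E}_2$ and then through $\mathcal{E}_3$; here I use that the label relation $\sim$ on steps is itself transitive and that the bound-name side conditions can be met by choosing the intermediate witness appropriately.

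For part (2) I would, for each operator, take the candidate relation $R$ consisting of $\mathbf{Id}$ together with all pairs obtained by inserting step-bisimilar processes into the corresponding context (for example $\{(P\parallel R,Q\parallel R)\mid P\sim_s Q\}$ for parallel composition, and analogously for $\alpha.{-}$ with $\alpha$ free, ${-}+R$, $(w){-}$ and $x(y).{-}$), and verify $R$ is a strong step bisimulation by induction on the depth of inference of the relevant rule in Table \ref{TRForPITC}. The prefix, summation and restriction cases are routine: each context adds at most one derivation step, so the matching move of the right-hand process is read off from $P\sim_s Q$ and the four clauses of Definition \ref{PSB} (the fresh-action clause, the two bound-input clauses, and the bound-output clause) are checked in turn. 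For the bound-input case $x(y).P\sim_s x(y).Q$ I additionally invoke the substitution-closure recorded just before the $\equiv_\alpha$ theorem, so that the late-style ``for all $w$'' matching is preserved.

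The main obstacle is the parallel case $P\parallel R\sim_s Q\parallel R$, because a step of $P\parallel R$ may be generated by $\textbf{PAR}_1$--$\textbf{PAR}_4$, $\textbf{COM}$ or $\textbf{CLOSE}$, so its label can combine an event of $P$, an event of $R$, and possibly a synchronisation $\tau$. I would decompose the step into its $P$-part and $R$-part, use $P\sim_s Q$ to match the $P$-part while leaving the $R$-part fixed, and recombine via the same rule; the extra obligation beyond the pomset argument is to check that the matched events still form a \emph{step}, that is, remain pairwise concurrent. The delicate bookkeeping is with the bound-name side conditions ($bn(\alpha)\cap fn(R)=\emptyset$, $bn(\alpha)\cap bn(\beta)=\emptyset$, and the freshness provisos of $\textbf{COM}$ and $\textbf{CLOSE}$), which I would discharge using the transition properties of Section \ref{sos} together with $\alpha$-conversion to rename bound objects clear of $fn(R)$. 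As the authors indicate, this follows the congruence proof for strong step bisimulation in CTC \cite{CTC}, and the remaining checks are mechanical.
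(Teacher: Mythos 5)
Your proposal takes essentially the same route as the paper: for part (1) the paper likewise just checks reflexivity, symmetry and transitivity of $\sim_s$ directly, and for each clause of part (2) it takes exactly your candidate relation $R=\{(\textrm{context}[P],\textrm{context}[Q])\}\cup\textbf{Id}$ and verifies it is a strong step bisimulation by appeal to the congruence proof for strong step bisimulation in CTC, omitting the details. Your elaboration of the parallel-composition case and the bound-name side conditions merely fills in steps the paper leaves to that reference.
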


\begin{proof}
\begin{enumerate}
  \item It is sufficient to prove that $\sim_s$ is reflexivity, symmetry, and transitivity, we omit it.
  \item If $P\sim_s Q$, then
  \begin{enumerate}
    \item it is sufficient to prove the relation $R=\{(\alpha.P, \alpha.Q)|\alpha \textrm{ is a free action}\}\cup \textbf{Id}$ is a strong step bisimulation. It can be proved similarly to the proof of congruence for strong step bisimulation in CTC \cite{CTC}, we omit it;
    \item it is sufficient to prove the relation $R=\{(P+R, Q+R)\}\cup \textbf{Id}$ is a strong step bisimulation. It can be proved similarly to the proof of congruence for strong step bisimulation in CTC \cite{CTC}, we omit it;
    \item it is sufficient to prove the relation $R=\{(P\parallel R, Q\parallel R)\}\cup \textbf{Id}$ is a strong step bisimulation. It can be proved similarly to the proof of congruence for strong step bisimulation in CTC \cite{CTC}, we omit it;
    \item it is sufficient to prove the relation $R=\{((w)P, (w).Q)\}\cup \textbf{Id}$ is a strong step bisimulation. It can be proved similarly to the proof of congruence for strong step bisimulation in CTC \cite{CTC}, we omit it;
    \item it is sufficient to prove the relation $R=\{(x(y).P, x(y).Q)\}\cup \textbf{Id}$ is a strong step bisimulation. It can be proved similarly to the proof of congruence for strong step bisimulation in CTC \cite{CTC}, we omit it.
  \end{enumerate}
\end{enumerate}
\end{proof}

\begin{theorem}[Equivalence and congruence for strong hp-bisimilarity]
\begin{enumerate}
  \item $\sim_{hp}$ is an equivalence relation;
  \item If $P\sim_{hp} Q$ then
  \begin{enumerate}
    \item $\alpha.P\sim_{hp} \alpha.Q$, $\alpha$ is a free action;
    \item $P+R\sim_{hp} Q+R$;
    \item $P\parallel R\sim_{hp} Q\parallel R$;
    \item $(w)P\sim_{hp} (w)Q$;
    \item $x(y).P\sim_{hp} x(y).Q$.
  \end{enumerate}
\end{enumerate}
\end{theorem}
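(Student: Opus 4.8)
The plan is to follow the template set by the two preceding congruence theorems (for $\sim_p$ and $\sim_s$), but to pay the extra attention that the posetal formulation of $\sim_{hp}$ in Definition \ref{HHPB} demands. The loose notation writing candidate relations as sets of pairs is, for hp-bisimilarity, really a set of \emph{triples} $(C_1,f,C_2)$ carrying an order-isomorphism $f$; every matched transition must extend $f$ via the operation $f[e_1\mapsto e_2]$ so that it remains an isomorphism on the enlarged configurations. I would treat the equivalence claim and the five congruence claims separately.

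For Part 1, I would exhibit the three standard witnesses. Reflexivity follows by checking that $\{(C, Id_C, C) : C\in\mathcal{C}(\mathcal{E})\}$ is a strong hp-bisimulation containing $(\emptyset,\emptyset,\emptyset)$. Symmetry holds because, if $R$ witnesses $P\sim_{hp}Q$, then $R^{-1}=\{(C_2,f^{-1},C_1) : (C_1,f,C_2)\in R\}$ is again a posetal relation (the inverse of an order-isomorphism is one) satisfying the matching clauses, since Definition \ref{HHPB} already requires them in both directions. Transitivity follows by relational composition $\{(C_1, g\circ f, C_3) : (C_1,f,C_2)\in R,\ (C_2,g,C_3)\in S\}$, using that composites of order-isomorphisms are order-isomorphisms together with the compatibility $(g\circ f)[e_1\mapsto e_3] = g[e_2\mapsto e_3]\circ f[e_1\mapsto e_2]$ when the two simulations are chained through the same matched move on $\mathcal{E}_2$.

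For Part 2, for each operator I would build the obvious candidate posetal relation from a given witness $R$ for $P\sim_{hp}Q$ and verify the three clauses of Definition \ref{HHPB} (free action, bound input $x(y)$, bound output $\overline{x}(y)$). For (a) prefix by a free $\alpha$, I prepend the single matched pair of events $e_\alpha\mapsto e_\alpha$ and then continue along $R$. For (b) summation, the first move resolves the choice via $\textbf{SUM}_1$ and the computation is thereafter governed by $R$ on the chosen summand, extended by $\textbf{Id}$. For (d) restriction and (e) input prefix, I combine $R$ with the rules $\textbf{RES}_1/\textbf{RES}_2$, $\textbf{OPEN}_1/\textbf{OPEN}_2$ and $\textbf{INPUT-ACT}$, appealing to the Propositions on properties of transitions in Section \ref{sos} to rename bound objects to a common fresh $w$ so that the ``for all $w$'' quantifications in the input and output clauses line up on both sides. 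In every case the isomorphism is extended by $f[e_1\mapsto e_2]$ exactly as the matched event is produced, and these extensions preserve causality and conflict because both sides arise from the same syntactic rule.

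The main obstacle I expect is the parallel case (c), $P\parallel R\sim_{hp}Q\parallel R$. Here the candidate relation must carry a \emph{block} isomorphism acting as the given $f$ on events inherited from $P$ and $Q$ and as the identity on events inherited from $R$, and I must check that it remains an order-isomorphism after each of $\textbf{PAR}_1$--$\textbf{PAR}_4$ and the communication rules $\textbf{COM}$ and $\textbf{CLOSE}$. The delicate points are that a step combining an $\alpha$ from $P$ with a $\beta$ from $R$ must keep the two new events concurrent on both sides so that the extended map respects $\parallel$, and that the side condition $bn(\alpha)\cap fn(R)=\emptyset$ together with the fact that $\equiv_{\alpha}$ is a strong hp-bisimulation lets me choose bound objects fresh for both $P$ and $R$, so that communications and the fresh-name clauses of the bisimulation agree. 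Since the truly concurrent content of these checks is identical to the corresponding congruence proof for strong hp-bisimulation in CTC \cite{CTC}, and the only genuinely new ingredient is the bound-name bookkeeping supplied by the Propositions of Section \ref{sos}, I would record the candidate relations explicitly and defer the routine verification to CTC, exactly as in the pomset and step cases above.
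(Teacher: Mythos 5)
Your proposal is correct and follows essentially the same route as the paper's own proof: for part 1 one checks reflexivity, symmetry and transitivity directly, and for part 2 one exhibits, for each operator, a candidate relation built from the given witness (augmented by $\textbf{Id}$) and defers the routine verification of the hp-bisimulation clauses to the corresponding congruence proof in CTC \cite{CTC}. Your added care in treating the relations as sets of posetal triples $(C_1,f,C_2)$ extended via $f[e_1\mapsto e_2]$, and your explicit discussion of the parallel and bound-name cases, merely make precise what the paper leaves implicit.
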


\begin{proof}
\begin{enumerate}
  \item It is sufficient to prove that $\sim_{hp}$ is reflexivity, symmetry, and transitivity, we omit it.
  \item If $P\sim_{hp} Q$, then
  \begin{enumerate}
    \item it is sufficient to prove the relation $R=\{(\alpha.P, \alpha.Q)|\alpha \textrm{ is a free action}\}\cup \textbf{Id}$ is a strong hp-bisimulation. It can be proved similarly to the proof of congruence for strong hp-bisimulation in CTC \cite{CTC}, we omit it;
    \item it is sufficient to prove the relation $R=\{(P+R, Q+R)\}\cup \textbf{Id}$ is a strong hp-bisimulation. It can be proved similarly to the proof of congruence for strong hp-bisimulation in CTC \cite{CTC}, we omit it;
    \item it is sufficient to prove the relation $R=\{(P\parallel R, Q\parallel R)\}\cup \textbf{Id}$ is a strong hp-bisimulation. It can be proved similarly to the proof of congruence for strong hp-bisimulation in CTC \cite{CTC}, we omit it;
    \item it is sufficient to prove the relation $R=\{((w)P, (w).Q)\}\cup \textbf{Id}$ is a strong hp-bisimulation. It can be proved similarly to the proof of congruence for strong hp-bisimulation in CTC \cite{CTC}, we omit it;
    \item it is sufficient to prove the relation $R=\{(x(y).P, x(y).Q)\}\cup \textbf{Id}$ is a strong hp-bisimulation. It can be proved similarly to the proof of congruence for strong hp-bisimulation in CTC \cite{CTC}, we omit it.
  \end{enumerate}
\end{enumerate}
\end{proof}

\begin{theorem}[Equivalence and congruence for strongly hhp-bisimilarity]
\begin{enumerate}
  \item $\sim_{hhp}$ is an equivalence relation;
  \item If $P\sim_{hhp} Q$ then
  \begin{enumerate}
    \item $\alpha.P\sim_{hhp} \alpha.Q$, $\alpha$ is a free action;
    \item $P+R\sim_{hhp} Q+R$;
    \item $P\parallel R\sim_{hhp} Q\parallel R$;
    \item $(w)P\sim_{hhp} (w)Q$;
    \item $x(y).P\sim_{hhp} x(y).Q$.
  \end{enumerate}
\end{enumerate}
\end{theorem}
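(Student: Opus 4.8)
The plan is to follow the same two-part template as the four preceding equivalence-and-congruence theorems, but at every step I must additionally check the extra requirement that distinguishes a strongly hhp-bisimulation from a strong hp-bisimulation, namely \emph{downward closure} of the posetal relation (Definition~\ref{HHPB}).

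For Part~1, I would establish that $\sim_{hhp}$ is reflexive, symmetric and transitive by exhibiting, respectively, the identity posetal relation $\{(C,\mathit{id}_C,C)\mid C\in\mathcal{C}(\mathcal{E})\}$, the inverse $R^{-1}=\{(C_2,f^{-1},C_1)\mid (C_1,f,C_2)\in R\}$, and the relational composition $R\circ S$ of two witnessing relations. Since the immediately preceding theorem already shows $\sim_{hp}$ is an equivalence, each of these three relations is known to be a strong hp-bisimulation; the only genuinely new obligation is that each construction preserves downward closure. This is routine: the identity relation is trivially downward closed, the pointwise order on triples is stable under inversion, and a sub-triple of a composed triple factors through a common middle sub-configuration, so closure of $R$ and $S$ gives closure of $R\circ S$.

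For Part~2, for each of the five operators I would start from a downward-closed strong hp-bisimulation $R_0$ with $(\emptyset,\emptyset,\emptyset)\in R_0$ witnessing $P\sim_{hhp}Q$, and build the candidate relation $R=\{\ldots\}\cup\textbf{Id}$ exactly as in the hp-congruence proof (for the prefix cases, closing $R_0$ under the single initial $\alpha$- or $x(y)$-step; for $P+R$ and $(w)P$, lifting $R_0$ through the corresponding rule of Table~\ref{TRForPITC}). I would then check that $R$ is a strong hp-bisimulation by the same transition analysis used for $\sim_{hp}$, case-splitting according to the three clauses of Definition~\ref{HHPB}: a fresh action $e_1=\alpha$, a bound input $e_1=x(y)$, and a bound output $e_1=\overline{x}(y)$, taking care in each case of the $\pi$-calculus freshness side conditions ($w\notin\mathit{fn}(\cdots)$) inherited from the rules. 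Finally I would verify that each such $R$ is downward closed, which combined with the hp-bisimulation property yields the hhp-claim.

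The main obstacle is the downward-closure verification in the parallel case $P\parallel R\sim_{hhp}Q\parallel R$. Here I would use that, in this truly concurrent setting, a configuration of $P\parallel R$ decomposes as a union $C_P\cup C_{R}$ of configurations of the two components, and take the witnessing isomorphism to be $f\cup\mathit{id}_{C_{R}}$ with $(C_P,f,C_Q)\in R_0$. Given a causally-closed sub-triple, it splits compatibly into a sub-configuration on the $P/Q$-side together with a matching sub-configuration on the shared $R$-side; the $P/Q$-part lands back in $R_0$ precisely because $R_0$ is downward closed, and the $R$-part is handled by the identity, so the whole sub-triple lies in $R$. The delicate points are ensuring the decomposition genuinely respects causality and conflict across the two components and that the bound-name conditions of the $\textbf{PAR}$, $\textbf{COM}$, $\textbf{CLOSE}$ and $\textbf{OPEN}$ rules survive restriction of configurations. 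It is worth stressing that this closure cannot be taken for granted from the hp-case: the Expansion-law theorem already exhibited processes that are hp- but not hhp-bisimilar, so the argument must really exploit downward closure of $R_0$ rather than merely its hp-bisimulation property.
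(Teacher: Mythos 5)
Your proposal takes essentially the same route as the paper: the paper's proof consists of exactly this template, namely checking reflexivity, symmetry and transitivity for Part~1 and, for each operator in Part~2, exhibiting the candidate relation $R=\{\ldots\}\cup\textbf{Id}$ and asserting it is a strongly hhp-bisimulation, with all verification (including the downward-closure checks you spell out) deferred to the corresponding proofs in CTC. Your write-up is sound and simply fills in the closure arguments the paper omits by citation, so there is no substantive divergence to report.
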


\begin{proof}
\begin{enumerate}
  \item It is sufficient to prove that $\sim_{hhp}$ is reflexivity, symmetry, and transitivity, we omit it.
  \item If $P\sim_p Q$, then
  \begin{enumerate}
    \item it is sufficient to prove the relation $R=\{(\alpha.P, \alpha.Q)|\alpha \textrm{ is a free action}\}\cup \textbf{Id}$ is a strongly hhp-bisimulation. It can be proved similarly to the proof of congruence for strongly hhp-bisimulation in CTC \cite{CTC}, we omit it;
    \item it is sufficient to prove the relation $R=\{(P+R, Q+R)\}\cup \textbf{Id}$ is a strongly hhp-bisimulation. It can be proved similarly to the proof of congruence for strongly hhp-bisimulation in CTC \cite{CTC}, we omit it;
    \item it is sufficient to prove the relation $R=\{(P\parallel R, Q\parallel R)\}\cup \textbf{Id}$ is a strongly hhp-bisimulation. It can be proved similarly to the proof of congruence for strongly hhp-bisimulation in CTC \cite{CTC}, we omit it;
    \item it is sufficient to prove the relation $R=\{((w)P, (w).Q)\}\cup \textbf{Id}$ is a strongly hhp-bisimulation. It can be proved similarly to the proof of congruence for strongly hhp-bisimulation in CTC \cite{CTC}, we omit it;
    \item it is sufficient to prove the relation $R=\{(x(y).P, x(y).Q)\}\cup \textbf{Id}$ is a strongly hhp-bisimulation. It can be proved similarly to the proof of congruence for strongly hhp-bisimulation in CTC \cite{CTC}, we omit it.
  \end{enumerate}
\end{enumerate}
\end{proof}

\subsection{Recursion}

\begin{definition}
Let $X$ have arity $n$, and let $\widetilde{x}=x_1,\cdots,x_n$ be distinct names, and $fn(P)\subseteq\{x_1,\cdots,x_n\}$. The replacement of $X(\widetilde{x})$ by $P$ in $E$, written $E\{X(\widetilde{x}):=P\}$, means the result of replacing each subterm $X(\widetilde{y})$ in $E$ by $P\{\widetilde{y}/\widetilde{x}\}$.
\end{definition}

\begin{definition}
Let $E$ and $F$ be two process expressions containing only $X_1,\cdots,X_m$ with associated name sequences $\widetilde{x}_1,\cdots,\widetilde{x}_m$. Then,
\begin{enumerate}
  \item $E\sim_p F$ means $E(\widetilde{P})\sim_p F(\widetilde{P})$;
  \item $E\sim_s F$ means $E(\widetilde{P})\sim_s F(\widetilde{P})$;
  \item $E\sim_{hp} F$ means $E(\widetilde{P})\sim_{hp} F(\widetilde{P})$;
  \item $E\sim_{hhp} F$ means $E(\widetilde{P})\sim_{hhp} F(\widetilde{P})$;
\end{enumerate}

for all $\widetilde{P}$ such that $fn(P_i)\subseteq \widetilde{x}_i$ for each $i$.
\end{definition}

\begin{definition}
A term or identifier is weakly guarded in $P$ if it lies within some subterm $\alpha.Q$ or $(\alpha_1\parallel\cdots\parallel \alpha_n).Q$ of $P$.
\end{definition}

\begin{theorem}
Assume that $\widetilde{E}$ and $\widetilde{F}$ are expressions containing only $X_i$ with $\widetilde{x}_i$, and $\widetilde{A}$ and $\widetilde{B}$ are identifiers with $A_i$, $B_i$. Then, for all $i$,
\begin{enumerate}
  \item $E_i\sim_s F_i$, $A_i(\widetilde{x}_i)\overset{\text{def}}{=}E_i(\widetilde{A})$, $B_i(\widetilde{x}_i)\overset{\text{def}}{=}F_i(\widetilde{B})$, then $A_i(\widetilde{x}_i)\sim_s B_i(\widetilde{x}_i)$;
  \item $E_i\sim_p F_i$, $A_i(\widetilde{x}_i)\overset{\text{def}}{=}E_i(\widetilde{A})$, $B_i(\widetilde{x}_i)\overset{\text{def}}{=}F_i(\widetilde{B})$, then $A_i(\widetilde{x}_i)\sim_p B_i(\widetilde{x}_i)$;
  \item $E_i\sim_{hp} F_i$, $A_i(\widetilde{x}_i)\overset{\text{def}}{=}E_i(\widetilde{A})$, $B_i(\widetilde{x}_i)\overset{\text{def}}{=}F_i(\widetilde{B})$, then $A_i(\widetilde{x}_i)\sim_{hp} B_i(\widetilde{x}_i)$;
  \item $E_i\sim_{hhp} F_i$, $A_i(\widetilde{x}_i)\overset{\text{def}}{=}E_i(\widetilde{A})$, $B_i(\widetilde{x}_i)\overset{\text{def}}{=}F_i(\widetilde{B})$, then $A_i(\widetilde{x}_i)\sim_{hhp} B_i(\widetilde{x}_i)$.
\end{enumerate}
\end{theorem}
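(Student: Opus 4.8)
The plan is to reduce all four statements to a single claim about one candidate relation and then verify that this relation is a strong bisimulation \emph{up to} $\sim$ of the appropriate kind. Fix the hypotheses $E_i\sim F_i$, $A_i(\widetilde{x}_i)\overset{\text{def}}{=}E_i(\widetilde{A})$ and $B_i(\widetilde{x}_i)\overset{\text{def}}{=}F_i(\widetilde{B})$. Define
$$R=\{(G(\widetilde{A}),G(\widetilde{B}))\mid G\textrm{ is a process expression with free process variables among }X_1,\ldots,X_m\}.$$
Taking $G=X_i$ yields $(A_i(\widetilde{x}_i),B_i(\widetilde{x}_i))\in R$, so it suffices, in each of the four cases, to show that $R$ (closed up to the relevant $\sim$) witnesses the bisimilarity. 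Recall that by the definition of $E_i\sim F_i$ the relation holds after substituting \emph{any} admissible tuple, so in particular $E_i(\widetilde{A})\sim F_i(\widetilde{A})$, the instantiation we will exploit.

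The core is a transition-transfer lemma: whenever $G(\widetilde{A})\xrightarrow{\alpha}P'$ (and, for the step and pomset cases, whenever $G(\widetilde{A})\xrightarrow{\{\alpha_1,\ldots,\alpha_n\}}P'$), there are a matching move $G(\widetilde{B})\xrightarrow{\alpha}Q'$ and processes $P'',Q''$ with $P'\sim P''$, $(P'',Q'')\in R$ and $Q''\sim Q'$. I would prove this by induction on the depth of inference of the transition, following the structure of $G$. When the head operator of $G$ is a prefix, a summation, a composition, or a restriction, the transition is forced by the corresponding rule of Table \ref{TRForPITC} (the \textbf{ACT}, \textbf{SUM}, \textbf{PAR}, \textbf{COM}, \textbf{CLOSE}, \textbf{RES}, \textbf{OPEN} rules), and I match it on the $\widetilde{B}$ side by the same rule, invoking the induction hypothesis on the immediate subexpressions; this is the same bookkeeping already carried out for the congruence theorems above. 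The only essential case is an identifier unfolding through $\textbf{IDE}_1$ (resp.\ $\textbf{IDE}_2$): here $A_i(\widetilde{y})\xrightarrow{\alpha}P'$ holds because $E_i(\widetilde{A})\{\widetilde{y}/\widetilde{x}_i\}\xrightarrow{\alpha}P'$. At this point I invoke $E_i(\widetilde{A})\sim F_i(\widetilde{A})$ to obtain a matching transition of $F_i(\widetilde{A})$, and then chain through $(F_i(\widetilde{A}),F_i(\widetilde{B}))\in R$ together with $B_i(\widetilde{y})\overset{\text{def}}{=}F_i(\widetilde{B})$ to produce the required move of $B_i(\widetilde{y})$. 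Weak guardedness of the $X_i$ in $E_i$ and $F_i$ is what makes this induction well founded, since it bounds the number of unfoldings that can occur before a guarding action is consumed.

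Granting the transfer lemma, the four conclusions follow from the soundness of the ``up to $\sim$'' technique. For $\sim_p$ and $\sim_s$ one applies the lemma to pomset, respectively step, transitions and reads off that $\sim_p\!\circ R\circ\sim_p$ (resp.\ with $\sim_s$) is a strong pomset (step) bisimulation. For $\sim_{hp}$ one additionally carries the order-isomorphism $f\colon C(G(\widetilde{A}))\to C(G(\widetilde{B}))$ along the construction, extending it by $f[e_1\mapsto e_2]$ at each matched event exactly as in Definition \ref{HHPB}, so that the accumulated triples remain in the posetal product. The genuinely delicate case, and the step I expect to be the main obstacle, is $\sim_{hhp}$: a strongly hhp-bisimulation must be downward closed, and ``up to $\sim$'' reasoning does not automatically preserve downward closure, the failure of the expansion law for $\sim_{hhp}$ above being a symptom of exactly this sensitivity. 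To handle it I would not work with $R$ up to $\sim$ directly but instead take the downward closure of the posetal relation generated by $R$ and verify that closure is maintained under the matching moves, checking that the witnessing isomorphisms restrict correctly to sub-configurations; this is where the argument requires the most care.
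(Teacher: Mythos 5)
Your overall strategy coincides with the paper's own: the paper likewise takes the relation $R=\{(G(A),G(B)) : G \textrm{ has only identifier } X\}$ (reducing to the case $I=\{1\}$), gets $A(\widetilde{y})\sim B(\widetilde{y})$ by choosing $G\equiv X(\widetilde{y})$, and reduces everything to transition-transfer clauses proved by induction on the depth of inference, leaving the induction and the hp/hhp adaptations as routine. However, your write-up of the one essential case --- identifier unfolding --- contains a step that would fail. Having obtained $E_i(\widetilde{A})\{\widetilde{y}/\widetilde{x}_i\}\xrightarrow{\alpha}P'$ from the premise of $\textbf{IDE}_1$, you first play the bisimulation game for $E_i(\widetilde{A})\sim F_i(\widetilde{A})$ and then propose to ``chain through $(F_i(\widetilde{A}),F_i(\widetilde{B}))\in R$''. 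That chaining step is an application of the very transfer lemma you are proving, but to a transition of $F_i(\widetilde{A})$ that was produced by the bisimulation game, whose inference depth is unrelated to (and in general not smaller than) the depth you are inducting on; the induction hypothesis does not cover it, so the argument is circular as written. The standard repair reverses the order: apply the induction hypothesis to the premise transition itself (note that $E_i\{\widetilde{y}/\widetilde{x}_i\}$ is again an expression containing only the $X_j$, and its inference is strictly shallower), obtaining a matching move of $E_i(\widetilde{B})\{\widetilde{y}/\widetilde{x}_i\}$ up to $\sim\circ R\circ\sim$; then instantiate the hypothesis $E_i\sim F_i$ at $\widetilde{B}$ rather than at $\widetilde{A}$ (it holds for all admissible instantiations) to transfer this move to $F_i(\widetilde{B})\{\widetilde{y}/\widetilde{x}_i\}$; finally apply $\textbf{IDE}_1$ (resp.\ $\textbf{IDE}_2$) to obtain the required move of $B_i(\widetilde{y})$, and close up to transitivity of $\sim$.

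Relatedly, your claim that weak guardedness of the $X_i$ is ``what makes this induction well founded'' is both unavailable and unnecessary: this theorem, unlike the unique-solution theorem that follows it in the paper, does not assume the $X_i$ are weakly guarded, and the induction is well founded purely because each application of an $\textbf{IDE}$ rule strictly decreases inference depth --- provided the chaining is ordered as above. On the positive side, your explicit treatment of the hp case (carrying the isomorphism via $f[e_1\mapsto e_2]$) and your caution about $\sim_{hhp}$ --- that up-to reasoning need not preserve downward closure, so one should instead close the posetal relation downward and re-verify the matching moves --- identify a real subtlety that the paper passes over by disposing of cases 2--4 with ``can be proven similarly to the above case,'' even though the failure of the expansion law shows $\sim_{hhp}$ is genuinely more fragile.
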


\begin{proof}
\begin{enumerate}
  \item $E_i\sim_s F_i$, $A_i(\widetilde{x}_i)\overset{\text{def}}{=}E_i(\widetilde{A})$, $B_i(\widetilde{x}_i)\overset{\text{def}}{=}F_i(\widetilde{B})$, then $A_i(\widetilde{x}_i)\sim_s B_i(\widetilde{x}_i)$.

      We will consider the case $I=\{1\}$ with loss of generality, and show the following relation $R$ is a strong step bisimulation.

      $$R=\{(G(A),G(B)):G\textrm{ has only identifier }X\}.$$

      By choosing $G\equiv X(\widetilde{y})$, it follows that $A(\widetilde{y})\sim_s B(\widetilde{y})$. It is sufficient to prove the following:
      \begin{enumerate}
        \item If $G(A)\xrightarrow{\{\alpha_1,\cdots,\alpha_n\}}P'$, where $\alpha_i(1\leq i\leq n)$ is a free action or bound output action with $bn(\alpha_1)\cap\cdots\cap bn(\alpha_n)\cap n(G(A),G(B))=\emptyset$, then $G(B)\xrightarrow{\{\alpha_1,\cdots,\alpha_n\}}Q''$ such that $P'\sim_s Q''$;
        \item If $G(A)\xrightarrow{x(y)}P'$ with $x\notin n(G(A),G(B))$, then $G(B)\xrightarrow{x(y)}Q''$, such that for all $u$, $P'\{u/y\}\sim_s Q''\{u/y\}$.
      \end{enumerate}

      To prove the above properties, it is sufficient to induct on the depth of inference and quite routine, we omit it.
  \item $E_i\sim_p F_i$, $A_i(\widetilde{x}_i)\overset{\text{def}}{=}E_i(\widetilde{A})$, $B_i(\widetilde{x}_i)\overset{\text{def}}{=}F_i(\widetilde{B})$, then $A_i(\widetilde{x}_i)\sim_p B_i(\widetilde{x}_i)$. It can be proven similarly to the above case.
  \item $E_i\sim_{hp} F_i$, $A_i(\widetilde{x}_i)\overset{\text{def}}{=}E_i(\widetilde{A})$, $B_i(\widetilde{x}_i)\overset{\text{def}}{=}F_i(\widetilde{B})$, then $A_i(\widetilde{x}_i)\sim_{hp} B_i(\widetilde{x}_i)$. It can be proven similarly to the above case.
  \item $E_i\sim_{hhp} F_i$, $A_i(\widetilde{x}_i)\overset{\text{def}}{=}E_i(\widetilde{A})$, $B_i(\widetilde{x}_i)\overset{\text{def}}{=}F_i(\widetilde{B})$, then $A_i(\widetilde{x}_i)\sim_{hhp} B_i(\widetilde{x}_i)$. It can be proven similarly to the above case.
\end{enumerate}
\end{proof}

\begin{theorem}[Unique solution of equations]
Assume $\widetilde{E}$ are expressions containing only $X_i$ with $\widetilde{x}_i$, and each $X_i$ is weakly guarded in each $E_j$. Assume that $\widetilde{P}$ and $\widetilde{Q}$ are processes such that $fn(P_i)\subseteq \widetilde{x}_i$ and $fn(Q_i)\subseteq \widetilde{x}_i$. Then, for all $i$,
\begin{enumerate}
  \item if $P_i\sim_p E_i(\widetilde{P})$, $Q_i\sim_p E_i(\widetilde{Q})$, then $P_i\sim_p Q_i$;
  \item if $P_i\sim_s E_i(\widetilde{P})$, $Q_i\sim_s E_i(\widetilde{Q})$, then $P_i\sim_s Q_i$;
  \item if $P_i\sim_{hp} E_i(\widetilde{P})$, $Q_i\sim_{hp} E_i(\widetilde{Q})$, then $P_i\sim_{hp} Q_i$;
  \item if $P_i\sim_{hhp} E_i(\widetilde{P})$, $Q_i\sim_{hhp} E_i(\widetilde{Q})$, then $P_i\sim_{hhp} Q_i$.
\end{enumerate}
\end{theorem}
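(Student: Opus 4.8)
The plan is to follow Milner's strategy for unique solutions, adapted to the truly concurrent and name-passing setting of $\pi_{tc}$. The heart of the argument is a \emph{uniformity lemma} for weakly guarded expressions: if every $X_i$ is weakly guarded in $E$ and $E(\widetilde{P})\xrightarrow{\{\alpha_1,\cdots,\alpha_n\}}P'$, then there is an expression $E'$, again containing only the $X_i$, such that $P'\equiv E'(\widetilde{P})$ and, crucially, $E(\widetilde{Q})\xrightarrow{\{\alpha_1,\cdots,\alpha_n\}}E'(\widetilde{Q})$ for every tuple $\widetilde{Q}$ (with the matching statement for a single free action, a bound output $\overline{x}(y)$, and an input $x(y)$ quantified over all $w$). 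Because each variable sits inside a prefix guard $\alpha.Q$ or $(\alpha_1\parallel\cdots\parallel\alpha_m).Q$, the first (set of) transition(s) is produced entirely by the guards supplied by $E$ and never by the processes plugged in for the $X_i$; hence the residual depends on $\widetilde{P}$ only through a fixed syntactic context $E'$, which is exactly what uniformity records. I would prove this by induction on the depth of inference over the rules of Table~\ref{TRForPITC}, treating $\textbf{SUM}$, $\textbf{PAR}_1$--$\textbf{PAR}_4$, $\textbf{COM}$, $\textbf{CLOSE}$, $\textbf{RES}$ and $\textbf{OPEN}$ in turn.

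Next I would introduce the candidate relation
$$R=\{(G(\widetilde{P}),G(\widetilde{Q})): G\text{ contains only the variables }X_i\}$$
and argue it is a strong bisimulation \emph{up to} the relevant equivalence. Given a transition $G(\widetilde{P})\xrightarrow{X_1}P'$, the difficulty is that $G$ itself need not be guarded, so the move might originate inside some $P_i$. To remove this I use the hypotheses $P_i\sim E_i(\widetilde{P})$: by the congruence theorems already established, $G(\widetilde{P})\sim G(\widetilde{E}(\widetilde{P}))$, and in $G(\widetilde{E}(-))$ every variable occurrence is now guarded by a prefix coming from $E$. Applying the uniformity lemma to this guarded context yields a context $H$ with $G(\widetilde{E}(\widetilde{P}))\xrightarrow{X_1}H(\widetilde{P})$ matching $P'$ up to $\sim$, together with the parallel move $G(\widetilde{E}(\widetilde{Q}))\xrightarrow{X_2}H(\widetilde{Q})$; composing with $Q_i\sim E_i(\widetilde{Q})$ transports this back to a transition of $G(\widetilde{Q})$ with $(H(\widetilde{P}),H(\widetilde{Q}))\in R$. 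Symmetry gives the converse simulation, and instantiating $G\equiv X_i$ (so that $G(\widetilde{P})=P_i$ and $G(\widetilde{Q})=Q_i$) delivers $P_i\sim_s Q_i$, and likewise $P_i\sim_p Q_i$.

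For the history-preserving cases I would carry the order-isomorphism component along: the relation is upgraded to triples $(G(\widetilde{P}),f,G(\widetilde{Q}))$ with $f$ the isomorphism induced by matching the uniform residuals, and each application of the lemma extends $f$ by $f[e_1\mapsto e_2]$ exactly as in Definition~\ref{HHPB}. For $\sim_{hhp}$ one must in addition check downward closure of the resulting posetal relation, which follows because the uniform residual $E'$ is determined identically on both sides, so any sub-configuration of a related pair is again of the form $(G'(\widetilde{P}),f',G'(\widetilde{Q}))$.

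I expect the uniformity lemma to be the main obstacle, and within it the name-binding bookkeeping for the $\textbf{OPEN}$/$\textbf{CLOSE}$/input cases: one must choose the extruded or received name fresh for all of $\widetilde{P}$, $\widetilde{Q}$ and the contexts simultaneously (using the freshness side conditions $w\notin fn(\cdots)$ of Table~\ref{TRForPITC}), and verify that the witnessing context $E'$ commutes with the substitution $\{w/y\}$ so that the ``for all $w$'' clauses of the pomset and hp definitions are met uniformly in the substituted processes. Everything else is a routine induction, which I would streamline by appealing to the congruence results of the previous subsection rather than re-proving closure under each operator.
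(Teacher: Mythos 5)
Your proposal is correct and takes essentially the same route as the paper: the paper's own ``proof'' is a four-line deferral to the unique-solution argument for each bisimilarity in CTC \cite{CTC}, and that argument is exactly the Milner-style scheme you reconstruct --- a uniformity lemma exploiting weak guardedness, followed by showing the context-closed relation $R=\{(G(\widetilde{P}),G(\widetilde{Q}))\}$ is a bisimulation up to the respective equivalence, with the posetal/isomorphism component carried along for the hp- and hhp-cases. Your sketch is in fact more detailed than what the paper records, since it makes explicit the freshness bookkeeping for the $\textbf{OPEN}$/$\textbf{CLOSE}$/input cases that the paper leaves entirely implicit.
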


\begin{proof}
\begin{enumerate}
  \item It is similar to the proof of unique solution of equations for strong pomset bisimulation in CTC \cite{CTC}, we omit it;
  \item It is similar to the proof of unique solution of equations for strong step bisimulation in CTC \cite{CTC}, we omit it;
  \item It is similar to the proof of unique solution of equations for strong hp-bisimulation in CTC \cite{CTC}, we omit it;
  \item It is similar to the proof of unique solution of equations for strong hhp-bisimulation in CTC \cite{CTC}, we omit it.
\end{enumerate}
\end{proof} 

\section{Algebraic Theory}\label{at}

In this section, we will try to axiomatize $\pi_{tc}$, the theory is \textbf{STC} (for strongly truly concurrency).

\begin{definition}[STC]
The theory \textbf{STC} is consisted of the following axioms and inference rules:

\begin{enumerate}
  \item Alpha-conversion $\textbf{A}$.
  \[\textrm{if } P\equiv Q, \textrm{ then } P=Q\]
  \item Congruence $\textbf{C}$. If $p=Q$, then,
  \[\tau.P=\tau.Q\quad \overline{x}y.P=\overline{x}y.Q\]
  \[P+R=Q+R\quad P\parallel R=Q\parallel R\]
  \[(x)P=(x)Q\quad x(y).P=x(y).Q\]
  \item Summation $\textbf{S}$.
  \[\textbf{S0}\quad P+\textbf{nil}=P\]
  \[\textbf{S1}\quad P+P=P\]
  \[\textbf{S2}\quad P+Q=Q+P\]
  \[\textbf{S3}\quad P+(Q+R)=(P+Q)+R\]
  \item Restriction $\textbf{R}$.
  \[\textbf{R0}\quad (x)P=P\quad \textrm{ if }x\notin fn(P)\]
  \[\textbf{R1}\quad (x)(y)P=(y)(x)P\]
  \[\textbf{R2}\quad (x)(P+Q)=(x)P+(x)Q\]
  \[\textbf{R3}\quad (x)\alpha.P=\alpha.(x)P\quad \textrm{ if }x\notin n(\alpha)\]
  \[\textbf{R4}\quad (x)\alpha.P=\textbf{nil}\quad \textrm{ if }x\textrm{is the subject of }\alpha\]
  \item Expansion $\textbf{E}$.
  Let $P\equiv\sum_i \alpha_i.P_i$ and $Q\equiv\sum_j\beta_j.Q_j$, where $bn(\alpha_i)\cap fn(Q)=\emptyset$ for all $i$, and $bn(\beta_j)\cap fn(P)=\emptyset$ for all $j$. Then
  $P\parallel Q= \sum_i\sum_j (\alpha_i\parallel \beta_j).(P_i\parallel Q_j)+\sum_{\alpha_i \textrm{ comp }\beta_j}\tau.R_{ij}$.

  Where $\alpha_i$ comp $\beta_j$ and $R_{ij}$ are defined as follows:
\begin{enumerate}
  \item $\alpha_i$ is $\overline{x}u$ and $\beta_j$ is $x(v)$, then $R_{ij}=P_i\parallel Q_j\{u/v\}$;
  \item $\alpha_i$ is $\overline{x}(u)$ and $\beta_j$ is $x(v)$, then $R_{ij}=(w)(P_i\{w/u\}\parallel Q_j\{w/v\})$, if $w\notin fn((u)P_i)\cup fn((v)Q_j)$;
  \item $\alpha_i$ is $x(v)$ and $\beta_j$ is $\overline{x}u$, then $R_{ij}=P_i\{u/v\}\parallel Q_j$;
  \item $\alpha_i$ is $x(v)$ and $\beta_j$ is $\overline{x}(u)$, then $R_{ij}=(w)(P_i\{w/v\}\parallel Q_j\{w/u\})$, if $w\notin fn((v)P_i)\cup fn((u)Q_j)$.
\end{enumerate}
  \item Identifier $\textbf{I}$.
  \[\textrm{If }A(\widetilde{x})\overset{\text{def}}{=}P,\textrm{ then }A(\widetilde{y})= P\{\widetilde{y}/\widetilde{x}\}.\]
\end{enumerate}
\end{definition}

\begin{theorem}[Soundness]
If $\textbf{STC}\vdash P=Q$ then
\begin{enumerate}
  \item $P\sim_p Q$;
  \item $P\sim_s Q$;
  \item $P\sim_{hp} Q$.
\end{enumerate}
\end{theorem}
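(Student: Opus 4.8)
The plan is to proceed by induction on the length (equivalently, the structure) of the derivation of $P=Q$ in \textbf{STC}. Since \textbf{STC} is an equational theory, any such derivation is assembled from instances of the axioms together with the congruence rule \textbf{C} and the implicit equational rules of reflexivity, symmetry and transitivity. Consequently, to establish soundness for a fixed equivalence $\sim\,\in\{\sim_p,\sim_s,\sim_{hp}\}$ it suffices to check that (i) every axiom instance is valid for $\sim$, and (ii) $\sim$ is an equivalence relation that is closed under all the operations appearing in \textbf{C}.

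For the base cases I would dispatch each axiom by appeal to a law already proved in Section \ref{stcb}. Axiom \textbf{A} (alpha-conversion) is exactly the theorem stating that $\equiv_{\alpha}$ entails $\sim_p$, $\sim_s$ and $\sim_{hp}$. Axioms \textbf{S0}--\textbf{S3} are the Summation laws, axioms \textbf{R0}--\textbf{R4} are the Restriction laws, axiom \textbf{E} is the Expansion law, and axiom \textbf{I} is the Identity law---each stated and proved for all three of $\sim_p$, $\sim_s$, $\sim_{hp}$. Thus every base case is immediate.

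For the inductive step, the congruence rule \textbf{C} requires that each $\sim$ be preserved by the prefixes $\tau.P$, $\overline{x}y.P$ and $x(y).P$, by summation $+$, by parallel composition $\parallel$, and by restriction $(x)P$; these are precisely the congruence clauses of the Equivalence and congruence theorems for $\sim_p$, $\sim_s$ and $\sim_{hp}$. The reflexivity, symmetry and transitivity needed to close the derivation under the remaining equational rules are supplied by the first clause of those same theorems, which asserts that each relation is an equivalence. Chaining these facts along the derivation tree then yields $P\sim Q$, completing each of the three parts in parallel.

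The only genuinely delicate point---and the reason the statement omits $\sim_{hhp}$---is axiom \textbf{E}. As the Expansion law theorem records, the expansion identity holds for $\sim_p$, $\sim_s$ and $\sim_{hp}$ but fails for $\sim_{hhp}$, the obstruction being the failure of downward closure witnessed by the processes $(a+b)\parallel c$ versus $(a\parallel c)+(b\parallel c)$ and $a\parallel(b+c)$ versus $(a\parallel b)+(a\parallel c)$, which are $hp$- but not $hhp$-bisimilar. Any attempt to extend soundness to $\sim_{hhp}$ would therefore break at exactly this axiom, so no such clause is claimed. I accordingly expect the proof to be largely bookkeeping across the three equivalences, with the principal care being the correct invocation of the Expansion law and the deliberate restriction of attention to $\sim_p$, $\sim_s$ and $\sim_{hp}$.
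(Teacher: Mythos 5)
Your proposal is correct and matches the paper's argument: the paper's proof is a one-line appeal to the laws of Section \ref{stcb} (alpha-conversion, Summation, Restriction, Expansion, Identity, together with the equivalence-and-congruence theorems), and your induction on derivations merely makes explicit the bookkeeping that appeal presupposes. Your observation that $\sim_{hhp}$ must be excluded because the Expansion law fails for it is likewise exactly the paper's reason for stating the theorem for $\sim_p$, $\sim_s$ and $\sim_{hp}$ only.
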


\begin{proof}
The soundness of these laws modulo strongly truly concurrent bisimilarities is already proven in Section \ref{stcb}.
\end{proof}

\begin{definition}
The agent identifier $A$ is weakly guardedly defined if every agent identifier is weakly guarded in the right-hand side of the definition of $A$.
\end{definition}

\begin{definition}[Head normal form]
A Process $P$ is in head normal form if it is a sum of the prefixes:

$$P\equiv \sum_i(\alpha_{i1}\parallel\cdots\parallel\alpha_{in}).P_i.$$
\end{definition}

\begin{proposition}
If every agent identifier is weakly guardedly defined, then for any process $P$, there is a head normal form $H$ such that

$$\textbf{STC}\vdash P=H.$$
\end{proposition}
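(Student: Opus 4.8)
The plan is to prove the statement by induction on the structure of the process $P$, using the equational axioms of $\textbf{STC}$ to transform each syntactic form into a sum of prefixed terms. The base cases are immediate: $\textbf{nil}$ is the empty sum and is therefore already a head normal form, and each prefix form $\tau.P'$, $\overline{x}y.P'$ and $x(y).P'$ is a single summand $(\alpha).P'$, hence trivially a head normal form. For the inductive cases I would treat the operators $+$, $(x)(\cdot)$ and $\parallel$ in turn, in each case first applying the induction hypothesis to the immediate subterms to obtain head normal forms and then massaging the result with the appropriate axioms, appealing to congruence $\textbf{C}$ to justify rewriting inside contexts.

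For summation, if $P_1 = H_1$ and $P_2 = H_2$ are head normal forms supplied by the induction hypothesis, then $P_1+P_2 = H_1+H_2$, and since $H_1+H_2$ is again a sum of prefixed terms, the axioms $\textbf{S2}$, $\textbf{S3}$ (together with $\textbf{S0}$ to absorb any $\textbf{nil}$ summands) flatten it into the required shape $\sum_i(\alpha_{i1}\parallel\cdots\parallel\alpha_{in}).P_i$. For restriction, given a head normal form $\sum_i(\alpha_{i1}\parallel\cdots\parallel\alpha_{in}).P_i$ for the subterm, I would use $\textbf{R2}$ to distribute $(x)$ over the sum and then push $(x)$ through each guarded summand: a summand whose prefix mentions $x$ as a subject is erased to $\textbf{nil}$ via $\textbf{R4}$, while in every other summand $(x)$ commutes inward past the prefix via $\textbf{R3}$, leaving $(x)$ applied only to the continuation; this yields a head normal form again.

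The parallel case is the heart of the argument and is handled by the expansion law $\textbf{E}$: writing the induction hypotheses as $P_1 = \sum_i \alpha_i.P_i$ and $P_2 = \sum_j \beta_j.Q_j$, the law rewrites $P_1\parallel P_2$ into $\sum_i\sum_j(\alpha_i\parallel\beta_j).(P_i\parallel Q_j)+\sum_{\alpha_i\textrm{ comp }\beta_j}\tau.R_{ij}$, which is precisely a head normal form. The identifier case is where weak guardedness is used: when $P = A(\widetilde{y})$ with $A(\widetilde{x})\overset{\text{def}}{=}P_A$, the axiom $\textbf{I}$ gives $P = P_A\{\widetilde{y}/\widetilde{x}\}$, and since $A$ is weakly guardedly defined every identifier occurring in $P_A$ sits inside a prefix, so computing a head normal form of $P_A\{\widetilde{y}/\widetilde{x}\}$ never requires unfolding an identifier a second time: the guarded occurrences are carried along unchanged inside the continuations $P_i$. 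This is what makes the induction well founded, and one can make it precise by taking as measure the lexicographic pair consisting of the number of unguarded identifier occurrences and the syntactic size, which strictly decreases at each recursive call.

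The main obstacle I anticipate is a mismatch of granularity between the head normal form, whose summands carry \emph{composite} prefixes $(\alpha_{i1}\parallel\cdots\parallel\alpha_{in})$, and the axioms $\textbf{E}$, $\textbf{R3}$, $\textbf{R4}$, which are literally stated only for a \emph{single} prefix $\alpha$. To close the restriction and parallel cases I must first establish, or tacitly adopt, the natural generalizations: that $(x)$ distributes through a composite-prefixed summand exactly as in $\textbf{R3}$/$\textbf{R4}$ applied to each $\alpha_{ik}$ separately, and that the expansion law applies with $\alpha_i,\beta_j$ ranging over composite prefixes, where $\alpha_i\parallel\beta_j$ denotes the merged list of prefixes and $\textrm{comp}$ is extended accordingly. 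Checking that these generalized laws are themselves derivable in $\textbf{STC}$ is the only genuinely delicate point; the remaining bookkeeping is routine.
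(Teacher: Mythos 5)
Your proposal is correct and takes essentially the same approach as the paper: the paper's entire proof is the one-line remark that ``it is sufficient to induct on the structure of $P$ and quite obvious,'' and your structural induction (prefix base cases, the $\textbf{S}$ axioms for sums, $\textbf{R2}$--$\textbf{R4}$ for restriction, $\textbf{E}$ for parallel composition, and $\textbf{I}$ plus weak guardedness with a lexicographic measure for identifiers) is precisely the elaboration that remark leaves implicit. Your closing caveat---that $\textbf{E}$, $\textbf{R3}$ and $\textbf{R4}$ are stated only for single prefixes and must be generalized to composite prefixes $(\alpha_{i1}\parallel\cdots\parallel\alpha_{in})$ before the restriction and parallel cases close---identifies a gap in the paper's formulation of the axioms rather than in your argument, and flagging it is apt.
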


\begin{proof}
It is sufficient to induct on the structure of $P$ and quite obvious.
\end{proof}

\begin{theorem}[Completeness]
For all processes $P$ and $Q$,
\begin{enumerate}
  \item if $P\sim_p Q$, then $\textbf{STC}\vdash P=Q$;
  \item if $P\sim_s Q$, then $\textbf{STC}\vdash P=Q$;
  \item if $P\sim_{hp} Q$, then $\textbf{STC}\vdash P=Q$.
\end{enumerate}
\end{theorem}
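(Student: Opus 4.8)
The plan is to follow Milner's completeness strategy for the $\pi$-calculus, adapted to the parallel-prefix normal forms of $\pi_{tc}$. Note first that the theorem deliberately omits $\sim_{hhp}$: the Expansion law $\textbf{E}$ is an axiom of $\textbf{STC}$ yet fails for hhp-bisimilarity (as established in the Expansion law theorem), so completeness cannot hold for $\sim_{hhp}$. For the three remaining equivalences I would argue uniformly, since they differ only in the notion of \emph{matching summand}: an isomorphic pomset label for $\sim_p$, a step of pairwise-concurrent actions for $\sim_s$, and an order-isomorphism recorded by the posetal product for $\sim_{hp}$.

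First I would invoke the head normal form proposition: assuming every identifier is weakly guardedly defined, each of $P$ and $Q$ is provably equal in $\textbf{STC}$ to a head normal form
$$P = \sum_i (\alpha_{i1}\parallel\cdots\parallel\alpha_{in}).P_i, \qquad Q = \sum_j (\beta_{j1}\parallel\cdots\parallel\beta_{jm}).Q_j.$$
By the Soundness theorem $P\sim P$'s head normal form and likewise for $Q$, so these head normal forms are bisimilar to each other. The argument then proceeds by induction on the sum of the depths of the two head normal forms; recursively defined processes are reduced to this guarded finite case by appeal to the Unique solution of equations theorem.

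The inductive step is the matching argument. Each summand $(\alpha_{i1}\parallel\cdots\parallel\alpha_{in}).P_i$ of $P$ induces a transition $P\xrightarrow{\{\alpha_{i1},\cdots,\alpha_{in}\}}P_i$; since $P\sim Q$, some summand of $Q$ produces a matching transition $Q\xrightarrow{\{\beta_{j1},\cdots,\beta_{jm}\}}Q_j$ whose parallel label matches in the appropriate sense and with $P_i\sim Q_j$. The induction hypothesis yields $\textbf{STC}\vdash P_i=Q_j$, and Congruence $\textbf{C}$ lifts this to provable equality of the two summands. Carrying this out in both directions and using the Summation axioms -- idempotence $\textbf{S1}$ together with commutativity $\textbf{S2}$ and associativity $\textbf{S3}$ -- I would derive $\textbf{STC}\vdash P+Q=Q$ and symmetrically $\textbf{STC}\vdash Q+P=P$, whence $\textbf{STC}\vdash P=Q$.

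The hard part will be the bookkeeping for bound names inside the parallel prefixes. Because the bisimilarities are late equivalences, an input summand $x(y)$ must be matched for all instantiations of its object; I would handle this with Alpha-conversion $\textbf{A}$ to align bound objects across the two processes, the earlier proposition that transitions are preserved under substitution of a fresh object, and Congruence for the input prefix. Bound-output summands $\overline{x}(y)$ require the analogous treatment through the OPEN rules, and the genuinely concurrent rules $\textbf{PAR}_4$ and $\textbf{OPEN}_2$ force several bound objects within a single parallel prefix to be treated simultaneously under one common fresh name. For the $\sim_{hp}$ case there is the additional obligation of selecting the matching summand compatibly with the configuration isomorphism $f$ carried in the posetal product, so that the match respects causal order rather than merely action labels; keeping this choice coherent with $f$ at every stage of the induction is the most delicate point of the proof.
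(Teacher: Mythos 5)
Your proposal follows essentially the same route as the paper's own proof: reduce $P$ and $Q$ to head normal forms via the head normal form proposition, induct on the sum of their depths, and match summands case by case (free-action parallel prefixes, input prefixes for all instantiations of the bound object, bound-output prefixes), closing each case with the induction hypothesis plus the Congruence $\textbf{C}$ and Alpha-conversion $\textbf{A}$ axioms, with the $\sim_p$ and $\sim_{hp}$ cases handled analogously to $\sim_s$. Your additions --- the explicit absorption step via $\textbf{S1}$--$\textbf{S3}$ to conclude $P=Q$ from the matched summands, the appeal to unique solutions for recursion, and the remark on keeping the $\sim_{hp}$ matching coherent with the posetal isomorphism $f$ --- are sound elaborations of details the paper leaves implicit rather than a different argument.
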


\begin{proof}
\begin{enumerate}
  \item if $P\sim_s Q$, then $\textbf{STC}\vdash P=Q$.
Since $P$ and $Q$ all have head normal forms, let $P\equiv\sum_{i=1}^k\alpha_i.P_i$ and $Q\equiv\sum_{i=1}^k\beta_i.Q_i$. Then the depth of $P$, denoted as $d(P)=0$, if $k=0$; $d(P)=1+max\{d(P_i)\}$ for $1\leq i\leq k$. The depth $d(Q)$ can be defined similarly.

It is sufficient to induct on $d=d(P)+d(Q)$. When $d=0$, $P\equiv\textbf{nil}$ and $Q\equiv\textbf{nil}$, $P=Q$, as desired.

Suppose $d>0$.

\begin{itemize}
  \item If $(\alpha_1\parallel\cdots\parallel\alpha_n).M$ with $\alpha_i(1\leq i\leq n)$ free actions is a summand of $P$, then $P\xrightarrow{\{\alpha_1,\cdots,\alpha_n\}}M$. Since $Q$ is in head normal form and has a summand $(\alpha_1\parallel\cdots\parallel\alpha_n).N$ such that $M\sim_s N$, by the induction hypothesis $\textbf{STC}\vdash M=N$, $\textbf{STC}\vdash (\alpha_1\parallel\cdots\parallel\alpha_n).M= (\alpha_1\parallel\cdots\parallel\alpha_n).N$;
  \item If $x(y).M$ is a summand of $P$, then for $z\notin n(P, Q)$, $P\xrightarrow{x(z)}M'\equiv M\{z/y\}$. Since $Q$ is in head normal form and has a summand $x(w).N$ such that for all $v$, $M'\{v/z\}\sim_s N'\{v/z\}$ where $N'\equiv N\{z/w\}$, by the induction hypothesis $\textbf{STC}\vdash M'\{v/z\}=N'\{v/z\}$, by the axioms $\textbf{C}$ and $\textbf{A}$, $\textbf{STC}\vdash x(y).M=x(w).N$;
  \item If $\overline{x}(y).M$ is a summand of $P$, then for $z\notin n(P,Q)$, $P\xrightarrow{\overline{x}(z)}M'\equiv M\{z/y\}$. Since $Q$ is in head normal form and has a summand $\overline{x}(w).N$ such that $M'\sim_s N'$ where $N'\equiv N\{z/w\}$, by the induction hypothesis $\textbf{STC}\vdash M'=N'$, by the axioms $\textbf{A}$ and $\textbf{C}$, $\textbf{STC}\vdash \overline{x}(y).M= \overline{x}(w).N$;
\end{itemize}

  \item if $P\sim_p Q$, then $\textbf{STC}\vdash P=Q$. It can be proven similarly to the above case.
  \item if $P\sim_{hp} Q$, then $\textbf{STC}\vdash P=Q$. It can be proven similarly to the above case.
\end{enumerate}
\end{proof} 

\section{Conclusions}\label{con}

This work is a mixture of mobile processes and true concurrency called $\pi_{tc}$, based on our previous work on truly concurrent process algebra -- a calculus for true concurrency CTC \cite{CTC} and an axiomatization for true concurrency APTC \cite{ATC}. 

$\pi_{tc}$ makes truly concurrent process algebra have the ability to model and verify mobile systems in a flavor of true concurrency, and can be used as a formal tool.

\label{lastpage}

\end{document}